\definecolor{few-gray-bright}{HTML}{010202}
\definecolor{few-red-bright}{HTML}{EE2E2F}
\definecolor{few-green-bright}{HTML}{008C48}
\definecolor{few-blue-bright}{HTML}{185AA9}
\definecolor{few-orange-bright}{HTML}{F47D23}
\definecolor{few-purple-bright}{HTML}{662C91}
\definecolor{few-brown-bright}{HTML}{A21D21}
\definecolor{few-pink-bright}{HTML}{B43894}
\definecolor{few-gray}{HTML}{737373}
\definecolor{few-red}{HTML}{F15A60}
\definecolor{few-green}{HTML}{7AC36A}
\definecolor{few-blue}{HTML}{5A9BD4}
\definecolor{few-orange}{HTML}{FAA75B}
\definecolor{few-purple}{HTML}{9E67AB}
\definecolor{few-brown}{HTML}{CE7058}
\definecolor{few-pink}{HTML}{D77FB4}
\definecolor{few-gray-light}{HTML}{CCCCCC}
\definecolor{few-red-light}{HTML}{F2AFAD}
\definecolor{few-green-light}{HTML}{D9E4AA}
\definecolor{few-blue-light}{HTML}{B8D2EC}
\definecolor{few-orange-light}{HTML}{F3D1B0}
\definecolor{few-purple-light}{HTML}{D5B2D4}
\definecolor{few-brown-light}{HTML}{DDB9A9}
\definecolor{few-pink-light}{HTML}{EBC0DA}
\newcommand{\ffail}{f_\mathsf{live}}
\newcommand{\fevil}{f_\mathsf{secret}}
\newcommand{\CircA}{{\color{brown} \ding{202}}}
\newcommand{\CircB}{{\color{brown} \ding{203}}}
\newcommand{\CircC}{{\color{brown} \ding{204}}}
\newcommand{\CircD}{{\color{brown} \ding{205}}}
\newcommand{\CircE}{{\color{brown} \ding{206}}}
\newcommand{\CircF}{{\color{brown} \ding{207}}}
\newcommand{\CircG}{{\color{brown} \ding{208}}}
\newcommand{\Cover}{\mathsf{Cover}_{N,n,\Phi}}
\newcommand{\Ssalt}{S_\mathsf{salt}}
\newcommand{\data}{\mathsf{data}}
\newcommand{\PKEenc}{\mathsf{PKE.Encrypt}}
\newcommand{\PKEdec}{\mathsf{PKE.Decrypt}}
\newcommand{\PKEgen}{\mathsf{PKE.KeyGen}}
\newcommand{\block}{\mathsf{block}}
\newcommand{\addr}{\mathsf{addr}}
\newcommand{\Read}{\mathsf{Read}}
\newcommand{\Delete}{\mathsf{Delete}}
\newcommand{\LHEncAdv}{\mathsf{LHEncAdv}}
\newcommand{\CDHAdv}{\mathsf{CDHAdv}}
\newcommand{\CDH}{\mathsf{CDH}}
\newcommand{\Zgz}{\Z^{>0}}
\newcommand{\hash}{\ms{Hash}}
\newcommand{\Digest}{\ms{Digest}}
\newcommand{\val}{\ms{val}}
\newcommand{\ProveIncludes}{\ms{ProveIncludes}}
\newcommand{\VerifyIncludes}{\ms{DoesInclude}}
\newcommand{\ProveExtends}{\ms{ProveExtends}}
\newcommand{\VerifyExtends}{\ms{DoesExtend}}
\newcommand{\pfext}{\pi_\ms{Ext}}
\newcommand{\pfinc}{\pi_\ms{Inc}}
\newcommand{\msg}{\ms{msg}}
\newcommand{\Sshare}{\ms{Shamir.Share}}
\newcommand{\reconst}{\ms{Reconstruct}}
\newcommand{\Sreconst}{\ms{Shamir.Reconst}}
\newcommand{\id}{\mathsf{id}}
\newcommand{\F}{\mathbb{F}}
\newcommand{\N}{\Z^{>0}}
\newcommand{\Z}{\mathbb{Z}}
\newcommand{\G}{\mathbb{G}}
\newcommand{\Hash}{\mathsf{Hash}}
\newcommand{\user}{\mathsf{user}}
\newcommand{\username}{\user}
\newcommand{\pin}{\mathsf{pin}}
\newcommand{\setup}{\mathsf{Setup}}
\newcommand{\backup}{\mathsf{Backup}}
\newcommand{\puncture}{\mathsf{Puncture}}
\newcommand{\enc}{\mathsf{Encrypt}}
\newcommand{\dec}{\mathsf{Decrypt}}
\newcommand{\keygen}{\mathsf{KeyGen}}
\newcommand{\select}{\mathsf{Select}}
\newcommand{\recover}{\mathsf{Recover}}
\newcommand{\salt}{\mathsf{salt}}
\newcommand{\AEenc}{\mathsf{AE.Encrypt}}
\newcommand{\AEdec}{\mathsf{AE.Decrypt}}
\newcommand{\AEAdv}{\mathsf{AEAdv}}
\newcommand{\AEscheme}{\mathsf{AE}}
\newcommand{\calA}{\ensuremath{\mathcal{A}}}
\newcommand{\calB}{\ensuremath{\mathcal{B}}}
\newcommand{\calE}{\ensuremath{\mathcal{E}}}
\newcommand{\calH}{\ensuremath{\mathcal{H}}}
\newcommand{\calK}{\ensuremath{\mathcal{K}}}
\newcommand{\calM}{\ensuremath{\mathcal{M}}}
\newcommand{\calP}{\ensuremath{\mathcal{P}}}
\newcommand{\calS}{\ensuremath{\mathcal{S}}}
\newcommand{\deq}{\mathrel{\mathop:}=}
\newcommand{\zo}{\ensuremath{\{0,1\}}}
\newtheorem{challenge}{Challenge}
\theoremstyle{plain}
\newtheorem{theorem}{Theorem}
\newtheorem{lemma}[theorem]{Lemma}
\theoremstyle{definition}
\newtheorem{defn}[theorem]{Definition}
\newtheorem{experiment}[theorem]{Experiment}
\theoremstyle{remark}
\newtheorem{remark}[theorem]{Remark}
\newtheoremstyle{goal}
  {\topsep}
  {\topsep}
  {\normalfont}
  {0pt}
  {\bfseries}
  {: } 
  { }
  {\thmname{#1}\thmnumber{ #2}\thmnote{ (#3)}}
\theoremstyle{goal}
\newcommand{\esm}[1]{\ensuremath{#1}}
\newcommand{\ms}[1]{\esm{\mathsf{#1}}}
\newcommand{\poly}{\operatorname{poly}}
\newcommand{\negl}{\operatorname{negl}}
\newcommand{\getsr}{\rgets}
\newcommand{\rgets}{\mathrel{\mathpalette\rgetscmd\relax}}
\newcommand{\rgetscmd}{\ooalign{$\leftarrow$\cr
    \hidewidth\raisebox{1.2\height}{\scalebox{0.5}{\ \rm R}}\hidewidth\cr}}
\newcommand{\abs}[1]{\left| #1 \right|}
\newcommand{\ct}{\ms{ct}}
\newcommand{\sk}{\ms{sk}}
\newcommand{\pk}{\ms{pk}}
\newcommand{\mpk}{\ms{mpk}}
\newcounter{ExperimentCount}
\newcounter{PropertyCount}
\newcommand{\Property}[1]{\refstepcounter{PropertyCount}\paragraph{Property~\arabic{PropertyCount}: #1.}}
\newcommand{\name}{True2F\xspace}
\newif\iffull
\g@addto@macro{\UrlBreaks}{\UrlOrds}
\newlength{\defhangindent}
\DeclareMathAlphabet\mathcal{OMS}{cmsy}{m}{n}
\SetMathAlphabet\mathcal{bold}{OMS}{cmsy}{b}{n}
\SetMathAlphabet{\mathsf}{normal}{OT1}{cmss}{m}{n}
\SetMathAlphabet{\mathsf}{bold}{OT1}{cmss}{bx}{n}
\DeclareSymbolFont{AMSb}{U}{msb}{m}{n}
\DeclareSymbolFontAlphabet{\mathbb}{AMSb}
\DeclareSymbolFont{numbers}{T1}{ptm}{m}{n}
\DeclareMathSymbol{0}\mathalpha{numbers}{"30}
\DeclareMathSymbol{1}\mathalpha{numbers}{"31}
\DeclareMathSymbol{2}\mathalpha{numbers}{"32}
\DeclareMathSymbol{3}\mathalpha{numbers}{"33}
\DeclareMathSymbol{4}\mathalpha{numbers}{"34}
\DeclareMathSymbol{5}\mathalpha{numbers}{"35}
\DeclareMathSymbol{6}\mathalpha{numbers}{"36}
\DeclareMathSymbol{7}\mathalpha{numbers}{"37}
\DeclareMathSymbol{8}\mathalpha{numbers}{"38}
\DeclareMathSymbol{9}\mathalpha{numbers}{"39}
\renewcommand{\operator@font}{\mathgroup\symnumbers}
\renewcommand\normalsize{
   \@setfontsize\normalsize{10pt}{11.55pt}
   \abovedisplayskip 11\p@ \@plus3\p@ \@minus6\p@
   \abovedisplayshortskip \z@ \@plus3\p@
   \belowdisplayshortskip 6.5\p@ \@plus3.5\p@ \@minus3\p@
   \belowdisplayskip \abovedisplayskip
   \let\@listi\@listI}
\renewcommand\small{
  \@setfontsize\small{9pt}{2pt}
   \abovedisplayskip 8.5\p@ \@plus3\p@ \@minus4\p@
   \abovedisplayshortskip \z@ \@plus2\p@
   \belowdisplayshortskip 4\p@ \@plus2\p@ \@minus2\p@
   \def\@listi{\leftmargin\leftmargini
               \topsep 4\p@ \@plus2\p@ \@minus2\p@
               \parsep 2\p@ \@plus\p@ \@minus\p@
               \itemsep \parsep}
   \belowdisplayskip \abovedisplayskip
}
\definecolor{LightCyan}{rgb}{0.88,1,1}
\long\def\com#1{}
\long\def\xxx#1{}
\newcommand{\itpara}[1]{\smallskip\noindent\textit{#1}}
\newcommand{\sitpara}[1]{\smallskip\noindent\textit{#1}}
\renewcommand{\paragraph}[1]{\smallskip\noindent\textbf{#1}}
\let\c@table\c@figure
\setlist[description]{leftmargin=\parindent,topsep=0ex,itemsep=0ex,partopsep=1ex,parsep=1ex}
\LetLtxMacro{\oldtextsc}{\textsc}
\renewcommand{\textsc}[1]{\oldtextsc{\scalefont{1.2}#1}}
\newcolumntype{R}[2]{
  >{\adjustbox{angle=#1,lap=\width-(#2)}\bgroup}
    c
    <{\egroup}
}
\newcommand{\rott}[1]{{\adjustbox{angle=90,lap=\width-1em}{#1}}}
\renewcommand{\name}{SafetyPin\xspace}
\newcommand{\Name}{SafetyPin\xspace}
\begin{document}

\title{\Name: Encrypted Backups with Human-Memorable Secrets}

\author{Emma Dauterman\\
\emph{UC Berkeley}
\and
Henry Corrigan-Gibbs\\
\emph{EPFL and MIT CSAIL}
\and
David Mazi\`eres\\
\emph{Stanford}}
\date{}
\maketitle

\hyphenation{Safety-Pin}

\frenchspacing
\paragraph{Abstract.}
We present the design and implementation of \name,
a system for encrypted mobile-device backups.
Like existing cloud-based mobile-backup systems, including
those of Apple and Google, \name
requires users to remember only a short PIN and
defends against brute-force PIN-guessing attacks using
hardware security protections.
Unlike today's systems, \name splits trust over a cluster of
hardware security modules (HSMs) in order to provide security
guarantees that scale with the number of HSMs.
In this way, \name protects backed-up user data
even against an attacker that can adaptively
compromise many of the system's constituent HSMs. 
SafetyPin provides this protection without sacrificing
scalability or fault tolerance.
Decentralizing trust while respecting the resource
limits of today's HSMs requires a synthesis of
systems-design principles and cryptographic tools.
We evaluate \name on a cluster of 100 low-cost HSMs 
and show that a \name-protected recovery takes 1.01 seconds.
To process 1B recoveries a year, we estimate that
a \name deployment would need 3,100 low-cost HSMs.
 \section{Introduction}
\label{sec:intro}

Modern mobile phones and tablets back up sensitive data to the cloud.
To protect users' privacy, this data must be encrypted under keys that
are not available to the cloud provider.  Unfortunately, with
3.8~billion smartphone users, it is impractical to expect them all to
store, say, a 128-bit AES backup key.  Not everyone has a computer, or
trustworthy friends who can keep shares of a backup key, or even a
safe place to store a backup key on paper.  As a result, mobile OSes
have fallen back to protecting backups with the least common
denominator: device screen-lock PINs.
Using PINs is good for security
because a user's screen-lock PIN never leaves her device 
(so the cloud provider never learns it). Using PINs is good for
usability because users generally remember them.

Unfortunately, PINs have such low entropy (e.g,
six decimal digits) that no feasible amount of key
stretching can protect against brute-force PIN-guessing attacks.
Instead, modern backup systems---such as those from
Apple~\cite{apple-key-vault}, Google~\cite{google-key-vault}, and
Signal~\cite{signal-recovery}---rely on hardware-security modules
(HSMs) in their data centers to thwart brute-force attacks.
Specifically, devices encrypt their backup keys under the public keys
of HSMs, but each device includes a hash of its screen-lock PIN as part 
of the plaintext.  
HSMs return decrypted plaintext only to clients that can supply this PIN hash.  
Furthermore, HSMs limit the number of decryption attempts for
any given user account.  For fault tolerance, a device typically
encrypts its backup key to the public keys of five HSMs, 
allowing any one of the five to recover the backup key.

This status quo still falls short of acceptable privacy for two
reasons.  First, HSMs are not perfect, yet each HSM in these systems
is a single point of security failure for millions of users' backup keys.
Second, these systems make it difficult for clients to detect security breaches.
For instance, if a malicious insider working in a
data center physically steals an HSM, then to anyone outside the
company it looks like an unremarkable single hardware failure.
Alternatively, if an insider successfully guesses someone's PIN, the
victim may have no idea her backup was ever compromised.

This paper presents \name, a PIN-based encrypted-backup system with
stronger security properties.
The key idea behind \name is that recovering any user's backed-up 
data either requires (a) guessing the user's PIN or (b)
compromising a very large number of HSMs---e.g.,
6\% of all HSMs operated by a provider.
(The 6\% figure here is a tunable system parameter.)
Such large-scale attacks
would typically need to span multiple data centers, be harder for
insiders to pull off undetected against physical devices, cost more,
and also likely cause service disruptions visible to end users.

\begin{figure}
\centering
\includegraphics[width=\columnwidth]{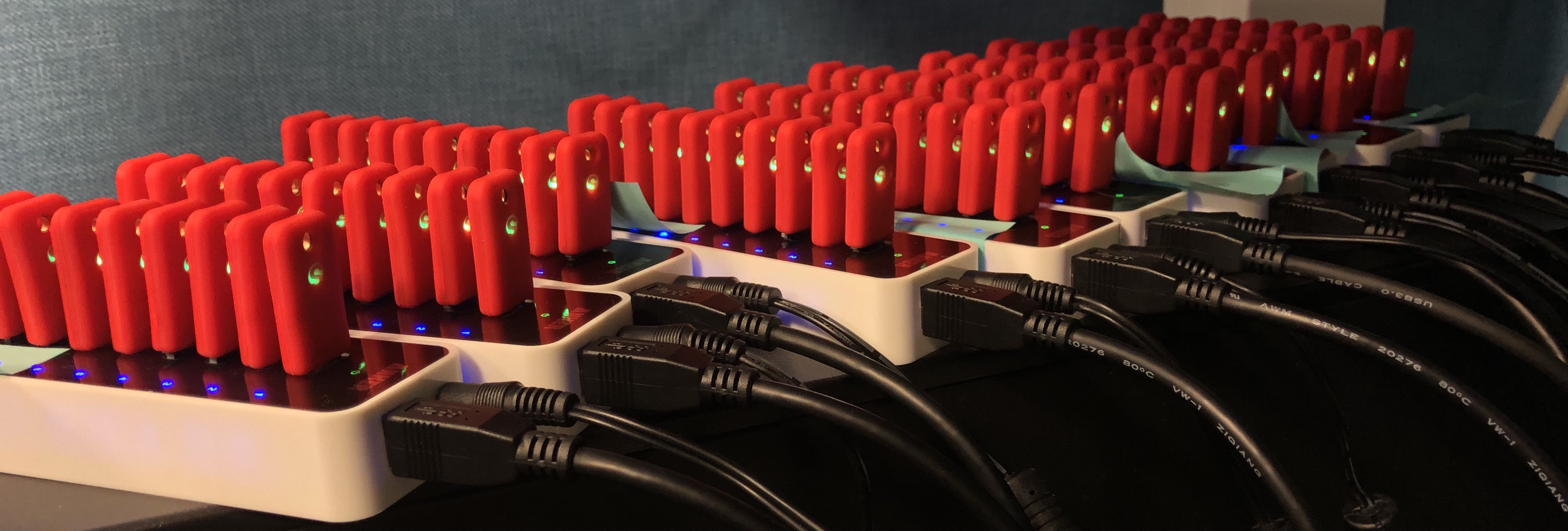}
\caption{Our cluster of 100 low-cost hardware security modules
        (SoloKeys~\cite{solokeys}) on which we evaluate \name.} 
\label{fig:datacenter}
\end{figure}

One way to achieve \name's security goal would be to threshold-encrypt
the client's hashed PIN and backup key in such a way that decrypting
the client's backup key would require the participation of 6\% of all HSMs
in the system.
Unfortunately, this approach lacks scalability.
If each client recovering a backup must interact with 
6\% of the system's HSMs, adding more HSMs improves
security without improving throughput.  
As the number of HSMs in the system increases, 
we would like the system's overall throughput to increase in tandem
with its security (i.e., the attacker's cost).

To achieve scalability, \name takes a different
approach: devices threshold-encrypt their backup
keys to a small cluster of $n$ HSMs such that
decryption requires the participation of most HSMs in the cluster.
The cluster size $n$ is independent of the total number of HSMs in the system,
and depends on both the fraction of compromised HSMs the system can tolerate
and the fraction of HSMs that can fail-stop.
(For example, to tolerate the compromise of 6\% of
HSMs where half of a cluster is allowed to fail-stop,
we can set the cluster size $n=40$.)
This design achieves our scalability goal, since each device need
only communicate with a small fixed number of HSMs during recovery.
This design also achieves our security goal because the cluster of $n$ HSMs that can
decrypt a client's backup depends on the client's secret PIN,
via a primitive we introduce called \emph{location-hiding encryption}.  
Hence, even if an attacker compromises 6\% of the HSMs in the system as a whole,
the chances that the attacker compromises a ``useful'' set of HSMs---i.e., 
at least half of the HSMs in the device's chosen cluster---is very small.
More precisely, we show that if the total number
of HSMs in the system is large enough (a few
hundred or more), the probability that an attacker
can decrypt a backup via HSM compromise is not
much higher than the probability of simply
guessing the client's PIN.

In modern backup systems, each HSM only needs to monitor the number of
PIN attempts for a small subset of users, but because of our
location-hiding encryption primitive, every HSM needs to be able to
verify the number of PIN attempts for every user.
To maintain this information scalably, the HSMs use a new type
of distributed log.
Third parties can monitor this log to alert users whenever a backup-recovery
attempt is underway.
Since a compromised service provider may see which
HSMs a mobile device interacts with during recovery (and could compromise
those HSMs to recover the users' backed-up data), HSMs revoke their
ability to decrypt backups after completing the recovery process.
Implementing this revocation requires adapting ``puncturable
encryption''~\cite{GM15} to storage-limited HSMs.  While our prototype
is focused on PIN-protected backups, these primitives have potentially
broader applicability to problems such as private storage in
peer-to-peer systems and cryptocurrency ``brain wallets.''

We implemented \name on low-cost SoloKey HSMs~\cite{solokeys}.  We
evaluate the system using a cluster of 100 SoloKeys
(\cref{fig:datacenter}) and an Android phone (representing the client
device).  Generating a recovery ciphertext on the client, excluding
the time to encrypt the disk image, takes 0.37 seconds.
To process 1B recoveries a year, or 123K recoveries per hour, we estimate that
we would need 3,100 SoloKeys.
In a \name deployment of 3,100 HSMs, tolerating the compromise of 6\% of
the HSMs (i.e., 194 HSMs),
the client must interact with a cluster of 40 HSMs during recovery.
Running our backup-recovery protocol across a cluster of this size
takes 1.01 seconds.

\paragraph{Limitations.}
A limitation of \name is that the set of HSMs a device uses for
recovery can leak information about the user's PIN.
In particular, an attacker who controls the data center can learn a salted
hash of the user's PIN during recovery.
This is unfortunate in the common case that
people re-use the same PIN after
recovery~\cite{DBC+14,HWS13,GF06,SKK+10}.  
We discuss one mitigation in \cref{par:leakage}.
Also, while it is possible to
detect when PINs can safely be re-used, we have not yet implemented
this functionality.

In addition, \name is more expensive than today's PIN-based backup systems.
\name requires the data center operator to operate a much larger
fleet of HSMs (roughly $50-100\times$ larger) than the standard HSM-based 
backup systems require.  
\Name clients must also download roughly 2MB of keying material per day
in a \name deployment supporting one billion recoveries per year,
due to the periodic rotation of large HSM keys.
Even so, we expect that the cost of storing and transferring disk images
(GBs/user) will dwarf these costs.
 \section{The setting}\label{sec:setting}
\paragraph{Entities.}
Our encrypted-backup system involves three entities, whose
roles we describe here.

\sitpara{Client.}
Initially, the client holds 
(1) a username with the service provider,
(2) a human-memorable passphrase or PIN, 
(3) a disk image to be backed up, and
(4) the public keys of the service provider's HSMs.
Later on, the client should be able to recover
her backed-up data using only her
username, her PIN, and access to the other components of 
the backup system.

In \name, as in today's PIN-based backup systems,
security depends on the client having access to
the HSMs' true public keys: If a malicious service
provider can swap out the HSMs' true
public keys for its own
public keys without detection, the service
provider can immediately break security.
Using a distributed log (\cref{sec:log}) can ensure that
all clients see a common set of HSM public keys, to 
prevent targeted attacks.
Hardware-attestation techniques, as used in the FIDO~\cite{fido-att}
and SGX~\cite{sgx-att} specs, can provide another defense.

We also assume the provider has traditional account
authentication (e.g., Gmail passwords) to prevent random third parties
from consuming PIN guesses, but we omit this from the discussion for
simplicity.

\sitpara{Service provider.}
The service provider offers the encrypted-backup
service to a pool of clients and it maintains
the data centers in which the backup system runs.
For example, the service provider could be
a mobile-phone vendor, such as Apple or Google.
The service provider's data centers contain the network
infrastructure that connects the HSMs.  They also
contain large amounts of (potentially
untrustworthy) storage and computing resources.
Our security properties will hold against a
service provider that becomes compromised at any
point after the system is set up.

\sitpara{Hardware security modules (HSMs).}
The service provider's data centers contain 
thousands of hardware security modules. 
An HSM is a tamper-resistant computing device
meant for storing cryptographic secrets.
HSMs have fully programmable processors but 
are typically resource-poor (see \cref{tab:hsm}).
It is possible to lock an HSM's firmware before
deployment, which makes remote compromise and
key-extraction attacks more difficult. 
Each HSM has a public key and stores the
corresponding secret key in its secure memory.

\begin{table}
\centering
{\small
\begin{tabular}{rrrrc}
  \toprule
  Device & Price & $g^x$/sec. & Storage & FIPS\\ \midrule
  SoloKey~\cite{solokeys} & \$20 & 8 & 256 KB$^\star$\hspace{-0.57em} & \\
  YubiHSM 2~\cite{yubico-hsm} & \$650 & 14 & 126 KB &\\
  SafeNet A700~\cite{safenet} & \$18,468 & 2,000 & 2,048 KB& \checkmark \\\midrule
  Intel i7-8569U (CPU) & \$431 & 22,338 & n/a \\
  \bottomrule
\end{tabular}
}
  \caption[HSM]{Hardware security modules offer physical security protections but
  are computationally weak compared~to~a~standard~CPU\@.\\[-6pt]

  {\footnotesize {\setstretch{0.8} The $g^x$/sec is NIST P256 elliptic-curve 
  point-multiplications~per second. ``FIPS?'' refers to whether the device
  meets the FIPS 140-2 standard for HSMs.
  ($^\star$\,The 256 KB storage on the SoloKey is shared between code and data.)\par}}}
  \label{tab:hsm}
\end{table}

\paragraph{The attack scenario.}
The service provider (Apple, Google, etc.) 
spends vast amounts of money acquiring
a large user base for products that store user data in the cloud.
The provider risks reputational damage and 
journalistic scrutiny if it cannot ensure the durability and
confidentiality of user data.

A service provider can deploy \name as a way to build 
trust among its user base and to protect its own 
infrastructure against future compromise.
By enlisting third-party organizations to monitor the \name
deployment's public distributed log, the provider can build
further public trust in the system.

At some point after the provider deploys \name, 
a powerful attacker wishes to
steal user data.  The attacker may have malicious insiders working for
the provider.  It may physically compromise data centers to steal
HSMs.  It may intercept shipments to tamper with some of the HSMs on
their way to the data center.  The attacker could also be a state
actor employing legal pressure to gain access to data centers.
Nonetheless, the attacker is sensitive to both the cost of attacks and
the risk of public exposure.

Both the attack cost and risk of exposure increase with the number of HSMs
the attacker must compromise.  For instance, while a malicious insider
working at a data center may be able to abscond with a single
HSM---passing the missing device off as a hardware failure---removing
100 HSMs is a much riskier proposition.  A state actor who can order
the provider to hand over HSMs may be dissuaded if doing so will
attract press coverage either by making non-targeted clients' data
unrecoverable or creating a damning public audit trail.

The attacker may compromise clients as well as the provider.  For
instance, the attacker may have a good guess at a target user's PIN,
perhaps because of CCTV footage showing the user unlocking a mobile
device.  While \name cannot prevent the attacker from gaining access
to the data with the correct PIN, the risk will be higher to the
attacker if stolen PINs cannot be used without exposing the attack in
\name's public distributed log.

\itpara{Notation.}
The set $\Zgz$ refers to the set of natural numbers $\{1, 2, 3, \dots\}$.
For a positive integer $n$, we let $[n] = \{1, \dots, n\}$
and we use $\bot$ to denote a failure symbol.
For strings $a$ and $b$, we write their concatenation
as $a\|b$.
Throughout, we use $\lambda$ to denote the security parameter,
and we typically take $\lambda = 128$ (i.e., for 128-bit security).
 \section{System goals}
\label{sec:design}

\Name implements an encrypted-backup functionality, 
which consists of two routines:
\begin{itemize}
  \item the \textit{backup} algorithm, which the client uses to
    produce its encrypted backup, and

  \item the \textit{recovery} protocol, in which the
        client uses HSMs to recover the backup plaintext from
        ciphertext.
\end{itemize}

We define these protocols with respect to
a number of HSMs $N \in \Zgz$ and a finite PIN space 
$\calP \subseteq \zo^*$.
For convenience, we define the \textit{master public key}
$\mpk$ for a data center to be all $N$ HSMs' 
public keys: $\mpk = (\pk_1, \dots, \pk_N)$.
The syntax of an encrypted-backup system 
is then as follows:

\smallskip

\begin{hangparas}{\defhangindent}{1}
    $\backup(\mpk, \username, \pin, \msg) \to \ct$.
        Given the master public key $\mpk$, 
        a client username $\user$,
        the client's PIN $\pin \in \calP$, and a message $\msg \in \zo^*$ 
        to be backed up, output a recovery ciphertext $\ct$.
        This routine runs on the client and requires no
        interaction with HSMs.
        The client uploads the resulting ciphertext $\ct$
        to the service provider.

    $\recover^{\calS, \calH_1, \dots, \calH_N}(\mpk, \user, \pin, \ct) \to \msg$ or $\bot$.
        The client initiates the recovery routine, which 
        takes as input the master public key $\mpk$,
        a client username $\user$, 
        a PIN $\pin \in \calP$, and a recovery ciphertext $\ct$.
\end{hangparas}
\begin{hangparas}{\defhangindent}{0}
        During the execution of $\recover$, the client interacts with
        the service provider $\calS$ and a subset of the HSMs $\calH_1, \dots, \calH_N$.
        Each HSM $\calH_i$ holds the master public key $\mpk$, 
        and its secret decryption key~$\sk_i$.
        During recovery, the data center provides
        the client's username~$\user$ to each HSM.
\end{hangparas}
\begin{hangparas}{\defhangindent}{0}
        The recovery routine outputs a backed-up message $\msg \in \zo^*$
        or a failure symbol $\bot$.
\end{hangparas}

\medskip
We now describe the security properties that such a
system should satisfy.
We work in an asynchronous network model; we use
standard cryptographic primitives to set up 
authenticated and encrypted channels between 
the client, service provider, and HSMs.

\Property{Security} \label{prop:sec}
If the client obtains the HSMs' true public keys, 
then even an attacker that: 
\begin{itemize}
    \item controls the service provider (in particular, is an active network attacker inside
        the data centers and has control of the service provider's servers and storage),
      \item compromises an $\fevil$ (e.g., $\fevil=\tfrac{1}{16}$) fraction of HSMs in the data center 
        \emph{before} the client begins the recovery process, and
      \item compromises all of the HSMs in the data center 
        \emph{after} the recovery protocol completes,
\end{itemize}
still should learn nothing 
about any honest client's encrypted message 
(in a semantic-security sense~\cite{GM84}) beyond what it
can learn by guessing that client's PIN\@.

\sitpara{Discussion:}  The adversary can inspect all clients' recovery
ciphertexts and then choose to compromise a large set of HSMs that
depends on these ciphertexts.  Such attacks are relevant when, for
example, a state actor with the power to compromise many HSMs 
targets the backed-up data of a specific set of users.

Two important caveats are: (1) \name does not
protect against an attacker compromising HSMs
while recovery is in progress (see
\cref{fig:timeline}) and
(2) as implemented, \name does not protect the PIN: an
adversary that observes which HSMs the client
contacts during recovery may learn a salted hash
of the PIN after recovery completes.
Section~\ref{sec:log:audit} discusses how to detect and
mitigate this leakage by protecting the salt.

\Property{Scalability} 
The recovery protocol should require the client to
interact with a constant number of HSMs, independent of the
number of HSMs in the data center.
(This constant may depend on the security parameter and 
on the fraction of HSMs whose compromise the system can tolerate.)
Hence, providers can deploy additional HSMs to scale capacity.
Concretely, when we configure the system to
tolerate the
compromise of $\fevil = \tfrac{1}{16}$ of the data center's HSMs,
our protocol requires the client to 
communicate with $40$ HSMs during recovery.

\Property{Fault tolerance} 
Every client should be able to recover her encrypted
message even if a constant fraction $\ffail$ (e.g. $\ffail = \tfrac{1}{64}$)
of the HSMs in the data center fail-stop.

\paragraph{Setting parameters.}
For the remainder of this paper, we set the fraction of compromised HSMs that the 
system can tolerate to $\fevil = \tfrac{1}{16}$ and the fraction
of HSMs that can fail while still allowing the client to recover her
backup to $\ffail = \tfrac{1}{64}$.
This choice is reasonable because large companies have more than 16
data centers, while smaller companies can collaborate on a shared
deployment with 16 physical security perimeters.
By adjusting the other parameters, it is possible to achieve
any $0 < \fevil < 1$ or $0 < \ffail < 1$.
(In \cref{sec:eval:e2e-costs}, we discuss how the choice
of these values affects other system parameters.)
 \begin{figure*}
  \centering
  \resizebox{0.22\textwidth}{!}{\tikzset{
  status/.style={fill=white,inner sep=0.5pt,midway,circle},
  clean/.style={fill=white,inner sep=-2pt,circle}
}
\begin{tikzpicture}[use Hobby shortcut]
\useasboundingbox (-2.6,-0.3) rectangle (2,3.1);
\definecolor{few-gray-bright}{HTML}{010202}
\definecolor{few-red-bright}{HTML}{EE2E2F}
\definecolor{few-green-bright}{HTML}{008C48}
\definecolor{few-blue-bright}{HTML}{185AA9}
\definecolor{few-orange-bright}{HTML}{F47D23}
\definecolor{few-purple-bright}{HTML}{662C91}
\definecolor{few-brown-bright}{HTML}{A21D21}
\definecolor{few-pink-bright}{HTML}{B43894}
\definecolor{few-gray}{HTML}{737373}
\definecolor{few-red}{HTML}{F15A60}
\definecolor{few-green}{HTML}{7AC36A}
\definecolor{few-blue}{HTML}{5A9BD4}
\definecolor{few-orange}{HTML}{FAA75B}
\definecolor{few-purple}{HTML}{9E67AB}
\definecolor{few-brown}{HTML}{CE7058}
\definecolor{few-pink}{HTML}{D77FB4}
\definecolor{few-gray-light}{HTML}{CCCCCC}
\definecolor{few-red-light}{HTML}{F2AFAD}
\definecolor{few-green-light}{HTML}{D9E4AA}
\definecolor{few-blue-light}{HTML}{B8D2EC}
\definecolor{few-orange-light}{HTML}{F3D1B0}
\definecolor{few-purple-light}{HTML}{D5B2D4}
\definecolor{few-brown-light}{HTML}{DDB9A9}
\definecolor{few-pink-light}{HTML}{EBC0DA}
 \fill [color=black!10,rounded corners,thick] (-1.5,0.25) rectangle (2,2.75);

\node[fill=pink,rounded corners, inner sep=1.5pt] at (-2.2,0.3) {\footnotesize $\msg$};
\node at (-2.25,0.5) {\fontsize{40pt}{2pt} \selectfont \faMobile*};
\node at (-0.75,2) {\color{green!20!black} \Huge \faDatabase};
\node at (-2.2,0.9) {\footnotesize $\mpk$};
\node at (-2.2,0.62) {\footnotesize $\pin$};

\newcounter{hsmcount}\setcounter{hsmcount}{1}
\foreach \b in {3,...,1} {
  \foreach \d in {1,...,3} {
    \node at (0.6*\d-0.2,0.75*\b-0.1) {\rotatebox{90}{\Large \color{white!70!black} \faMicrochip}};
    \node at (0.6*\d-0.2,0.75*\b-0.1) {\color{white!90!black} {\footnotesize$\sk_\thehsmcount$}};
    \stepcounter{hsmcount}
  }
}

\node[anchor=south west] at (-1.5,2.75) {\textbf{Service provider}};
\node[anchor=south] at (1.4,2.375) {\color{white!70!black} \footnotesize \textbf{HSMs}};
  \node[anchor=south] at (-0.75,0.9) {\color{green!20!black} \footnotesize \parbox{1in}{\setstretch{0.7} \centering \textbf{External\\storage}}};

\draw[thick,<-] (-1.25,2.2) to[bend right] node[status] {\footnotesize $\ct$} (-2.2,1.2);
\node[clean] at (-2.15,1.4) {\CircA};

\end{tikzpicture}
 }~~~~\vline{}\vline{}\vline{}\vline{}~~~~
  \resizebox{0.22\textwidth}{!}{\tikzset{
  status/.style={fill=white,inner sep=0.5pt,midway,circle},
  textbox/.style={fill=white,inner sep=0.5pt,midway,rectangle},
  clean/.style={fill=white,inner sep=-2pt,circle}
}
\begin{tikzpicture}[use Hobby shortcut]
\useasboundingbox (-2.6,-0.3) rectangle (2,3.1);
\definecolor{few-gray-bright}{HTML}{010202}
\definecolor{few-red-bright}{HTML}{EE2E2F}
\definecolor{few-green-bright}{HTML}{008C48}
\definecolor{few-blue-bright}{HTML}{185AA9}
\definecolor{few-orange-bright}{HTML}{F47D23}
\definecolor{few-purple-bright}{HTML}{662C91}
\definecolor{few-brown-bright}{HTML}{A21D21}
\definecolor{few-pink-bright}{HTML}{B43894}
\definecolor{few-gray}{HTML}{737373}
\definecolor{few-red}{HTML}{F15A60}
\definecolor{few-green}{HTML}{7AC36A}
\definecolor{few-blue}{HTML}{5A9BD4}
\definecolor{few-orange}{HTML}{FAA75B}
\definecolor{few-purple}{HTML}{9E67AB}
\definecolor{few-brown}{HTML}{CE7058}
\definecolor{few-pink}{HTML}{D77FB4}
\definecolor{few-gray-light}{HTML}{CCCCCC}
\definecolor{few-red-light}{HTML}{F2AFAD}
\definecolor{few-green-light}{HTML}{D9E4AA}
\definecolor{few-blue-light}{HTML}{B8D2EC}
\definecolor{few-orange-light}{HTML}{F3D1B0}
\definecolor{few-purple-light}{HTML}{D5B2D4}
\definecolor{few-brown-light}{HTML}{DDB9A9}
\definecolor{few-pink-light}{HTML}{EBC0DA}
 \fill [color=black!10,rounded corners,thick] (-1.5,0.25) rectangle (2,2.75);

\node at (-2.25,0.5) {\fontsize{40pt}{2pt} \selectfont \faMobile*};
\node at (-0.75,2) {\color{green!20!black} \Huge \faDatabase};
\node at (-2.2,0.9) {\footnotesize $\mpk$};
\node at (-2.2,0.62) {\footnotesize $\pin$};

\setcounter{hsmcount}{1}
\foreach \b in {3,...,1} {
  \foreach \d in {1,...,3} {
    \node at (0.6*\d-0.2,0.75*\b-0.1) {\rotatebox{90}{\Large \color{white!70!black} \faMicrochip}};
    \node at (0.6*\d-0.2,0.75*\b-0.1) {\color{white!90!black} {\footnotesize$\sk_\thehsmcount$}};
    \stepcounter{hsmcount}
  }
}

\node[anchor=south west] at (-1.5,2.75) {\textbf{Service provider}};
\node[anchor=south] at (1.4,2.375) {\color{white!70!black} \footnotesize \textbf{HSMs}};
  \node[anchor=south] at (-0.75,0.9) {\color{green!20!black} \footnotesize \parbox{1in}{\setstretch{0.7} \centering \textbf{External\\storage}}};

\draw[thick,->] (-1.25,2.3) to[bend right,out=-40,in=-135] node[status] {$\ct$} (-2.4,1.25);
  \draw[thick,->] (-1.9,0) to[out=0,in=-90] node[below right=-3pt] {\footnotesize 
  \parbox{0.8in}{\raggedright ``$\langle$ Start recovery, \texttt{user=joe}, $\dots \rangle$''}} (-0.75,1); 
\node[clean] at (-1.55,0.05) {\CircC};
\node[clean] at (-1.4,2.3) {\CircB};

\end{tikzpicture}
 }~~~~\vline{}~~~~
  \resizebox{0.22\textwidth}{!}{\tikzset{
  status/.style={fill=white,inner sep=0.5pt,midway,circle},
  textbox/.style={fill=white,inner sep=0.5pt,midway,rectangle},
  clean/.style={fill=white,inner sep=-2pt,circle}
}
\begin{tikzpicture}[use Hobby shortcut]
\useasboundingbox (-2.6,-0.3) rectangle (2,3.1);
\definecolor{few-gray-bright}{HTML}{010202}
\definecolor{few-red-bright}{HTML}{EE2E2F}
\definecolor{few-green-bright}{HTML}{008C48}
\definecolor{few-blue-bright}{HTML}{185AA9}
\definecolor{few-orange-bright}{HTML}{F47D23}
\definecolor{few-purple-bright}{HTML}{662C91}
\definecolor{few-brown-bright}{HTML}{A21D21}
\definecolor{few-pink-bright}{HTML}{B43894}
\definecolor{few-gray}{HTML}{737373}
\definecolor{few-red}{HTML}{F15A60}
\definecolor{few-green}{HTML}{7AC36A}
\definecolor{few-blue}{HTML}{5A9BD4}
\definecolor{few-orange}{HTML}{FAA75B}
\definecolor{few-purple}{HTML}{9E67AB}
\definecolor{few-brown}{HTML}{CE7058}
\definecolor{few-pink}{HTML}{D77FB4}
\definecolor{few-gray-light}{HTML}{CCCCCC}
\definecolor{few-red-light}{HTML}{F2AFAD}
\definecolor{few-green-light}{HTML}{D9E4AA}
\definecolor{few-blue-light}{HTML}{B8D2EC}
\definecolor{few-orange-light}{HTML}{F3D1B0}
\definecolor{few-purple-light}{HTML}{D5B2D4}
\definecolor{few-brown-light}{HTML}{DDB9A9}
\definecolor{few-pink-light}{HTML}{EBC0DA}
 \fill [color=black!10,rounded corners,thick] (-1.5,0.25) rectangle (2,2.75);

\node at (-2.25,0.5) {\color{white!70!black} \fontsize{40pt}{2pt} \selectfont \faMobile*};
\node (db) at (-0.75,2) {\color{green!20!black} \Huge \faDatabase};
\node at (-2.2,0.9) {\color{white!70!black} \footnotesize $\mpk$};
\node at (-2.2,0.62) {\color{white!70!black} \footnotesize $\pin$};

\setcounter{hsmcount}{1}
\foreach \b in {3,...,1} {
  \foreach \d in {1,...,3} {
    \node (hsm-\b-\d) at (0.6*\d-0.2,0.75*\b-0.1) {\rotatebox{90}{\Large \color{blue!70!black} \faMicrochip}};
    \node at (0.6*\d-0.2,0.75*\b-0.1) {\color{white} {\footnotesize$\sk_\thehsmcount$}};
    \stepcounter{hsmcount}
  }
}

\setcounter{hsmcount}{1}
\foreach \b in {3,...,1} {
  \foreach \d in {1,...,3} {
    \draw[thick,densely dotted,red!80!black] ($(db.east) + (-0.1,0.2)$) -- (hsm-\b-\d.center); 
    \stepcounter{hsmcount}
  }
}
\setcounter{hsmcount}{1}
\foreach \b in {3,...,1} {
  \foreach \d in {1,...,3} {
    \node (hsm-\b-\d) at (0.6*\d-0.2,0.75*\b-0.1) {\rotatebox{90}{\Large \color{blue!70!black} \faMicrochip}};
    \node at (0.6*\d-0.2,0.75*\b-0.1) {\color{white} {\footnotesize$\sk_\thehsmcount$}};
    \stepcounter{hsmcount}
  }
}

\node[anchor=south west] at (-1.5,2.75) {\textbf{Service provider}};
\node[anchor=south] at (1.4,2.375) {\color{blue!70!black} \footnotesize \textbf{HSMs}};

  \node[anchor=south] at (-0.75,0.9) {\color{green!20!black} \footnotesize \parbox{1in}{\setstretch{0.7} \centering \textbf{External\\storage}}};
  \node[clean] at ($(db.east) + (0.1,0.4)$) {\CircD};

\end{tikzpicture}
 }~~~~\vline{}~~~~
  \resizebox{0.22\textwidth}{!}{\tikzset{
  status/.style={fill=white,inner sep=0.5pt,midway,circle},
  statusg/.style={fill=black!10,inner sep=0.5pt,midway,circle},
  textbox/.style={fill=white,inner sep=0.5pt,midway,rectangle},
  clean/.style={fill=white,inner sep=-2pt,circle}
}
\begin{tikzpicture}[use Hobby shortcut]
\useasboundingbox (-2.6,-0.3) rectangle (2,3.1);
\definecolor{few-gray-bright}{HTML}{010202}
\definecolor{few-red-bright}{HTML}{EE2E2F}
\definecolor{few-green-bright}{HTML}{008C48}
\definecolor{few-blue-bright}{HTML}{185AA9}
\definecolor{few-orange-bright}{HTML}{F47D23}
\definecolor{few-purple-bright}{HTML}{662C91}
\definecolor{few-brown-bright}{HTML}{A21D21}
\definecolor{few-pink-bright}{HTML}{B43894}
\definecolor{few-gray}{HTML}{737373}
\definecolor{few-red}{HTML}{F15A60}
\definecolor{few-green}{HTML}{7AC36A}
\definecolor{few-blue}{HTML}{5A9BD4}
\definecolor{few-orange}{HTML}{FAA75B}
\definecolor{few-purple}{HTML}{9E67AB}
\definecolor{few-brown}{HTML}{CE7058}
\definecolor{few-pink}{HTML}{D77FB4}
\definecolor{few-gray-light}{HTML}{CCCCCC}
\definecolor{few-red-light}{HTML}{F2AFAD}
\definecolor{few-green-light}{HTML}{D9E4AA}
\definecolor{few-blue-light}{HTML}{B8D2EC}
\definecolor{few-orange-light}{HTML}{F3D1B0}
\definecolor{few-purple-light}{HTML}{D5B2D4}
\definecolor{few-brown-light}{HTML}{DDB9A9}
\definecolor{few-pink-light}{HTML}{EBC0DA}
 \fill [color=black!10,rounded corners,thick] (-1.5,0.25) rectangle (2,2.75);

\node[fill=pink,rounded corners, inner sep=1.5pt] at (-2.2,0.3) {\footnotesize $\msg$};
\node at (-2.25,0.5) {\fontsize{40pt}{2pt} \selectfont \faMobile*};
\node (db) at (-0.75,2) {\color{green!20!black} \Huge \faDatabase};
\node at (-2.2,0.9) {\footnotesize $\mpk$};
\node at (-2.2,0.62) {\footnotesize $\pin$};

\draw [dashed,
      fill=yellow,
      fill opacity=.45]
    (1,0.35) .. (0.5, 0.35) .. (0.2,0.35) .. (0.25,1).. (1.3,1.1) .. (1.3,1.4) .. (1.3, 1.55) .. (1.75,1.75) .. (1.95, 1.65) .. (1.95, 1.4) .. (1.95, 1) .. (1.95, 0.75) .. (1.7, 0.35) .. (1.4, 0.35) .. (1,0.35);

\setcounter{hsmcount}{1}
\foreach \b in {3,...,1} {
  \foreach \d in {1,...,3} {
    \node at (0.6*\d-0.2,0.75*\b-0.1) {\rotatebox{90}{\Large \color{blue!70!black} \faMicrochip}};
    \node at (0.6*\d-0.2,0.75*\b-0.1) {\color{white} {\footnotesize$\sk_\thehsmcount$}};
    \stepcounter{hsmcount}
  }
}

\node[anchor=south west] at (-1.5,2.75) {\textbf{Service provider}};
\node[anchor=south] at (1.4,2.375) {\color{blue!70!black} \footnotesize \textbf{HSMs}};
\node[anchor=south] at (-0.75,0.9) {\color{green!20!black} \footnotesize \parbox{1in}{\setstretch{0.7} \centering \textbf{External\\storage}}};

\draw[thick,->] (-1.9,0.6) to[out=0,in=-195] node[statusg] {\footnotesize $\ct, \pi$} (0,0.55);
  \draw[thick,->] (1.05,0.35) to[out=-120,in=0] node[below=0.2ex,textbox] {\footnotesize $\mathsf{Shares\ of\ decrypted\ }\ct$} (-1.7,0.2);

\draw[thick,->] (-1.25,1.8) to[bend right] node[status] {$\pi$} (-2,1.25);
\node[clean] at (-1.4,1.8) {\CircE};
\node[clean] at (-1.55,0.6) {\CircF};
\node[clean] at (0.9,0.1) {\CircG};

\end{tikzpicture}
 }
    \caption{
  An overview of the recovery-protocol flow.
  Each HSM $i$ holds a secret key $\sk_i$.
  The client holds a vector $\mpk$ of all HSMs' public keys.
  \CircA{} During backup, the client uses its PIN and the master public key 
            to encrypt its data $\msg$ into a recovery ciphertext $\ct$.
            The client then uploads this recovery ciphertext $\ct$ to the 
            service provider.
  \CircB{} During recovery, the client downloads its recovery ciphertext. 
  \CircC{} The client asks the data center to log its recovery attempt.
  \CircD{} The service provider collects a batch of client log-insertion requests,
           updates the log, and aggregates the new log into a Merkle tree.
           The service provider and HSMs run a log-update protocol.
           At the end of this protocol, each HSM holds the root of the Merkle
           tree computed over the latest log.
  \CircE{} The service provider sends the client a Merkle proof $\pi$ that the client's recovery
           attempt is included in the latest log (i.e., in the latest Merkle root).
  \CircF{} The client sends the recovery ciphertext $\ct$ and log-inclusion proof $\pi$
           to the subset of HSMs needed to decrypt the recovery ciphertext. 
  \CircG{} The HSMs check the proof and return shares of the decrypted ciphertext 
            to the client. The client uses these to recover the backed-up data $\msg$.
  }
  \label{fig:overview}
\end{figure*}

\section{Architecture overview}
\label{sec:arch}

We now describe our encrypted-backup protocol (\cref{fig:overview}) and
explain how it satisfies the design goals
of \cref{sec:design}.
We will discuss possible extensions and deployment considerations in \cref{app:deploy}.

\subsection{The back-up process}
The client begins the back-up process holding 
\begin{itemize}
\item the public keys of all HSMs in the data center,
\item its secret PIN, and 
\item a disk image to be backed up (the ``message'').
\end{itemize}
To back up its disk image, 
the client samples a subset of $n$ HSMs out
of the $N$ total HSMs in the data center where $n \ll N$.
The client chooses this subset by hashing 
(a) public information: the service name, its username, and a public
salt the client chooses at random, and (b) its secret PIN.
The client then encrypts its message with a random
AES encryption key, and then splits this AES key into $n$ threshold 
shares using Shamir secret sharing~\cite{S79}, such that 
any threshold $t$ of the shares suffice to recover the AES key.
The client prepends each share with 
the client's username 
to ensure that the
ciphertexts are bound to the client's username.
The client then encrypts one share to the public
key of each HSM in its chosen subset.

The client's recovery ciphertext then consists of:
its public salt, the AES-encrypted message, 
the $n$ encrypted shares of the AES key,
and a configuration-epoch number that the service
provider can use to identify the set of HSMs that were
in service at the time the client created its backup.
The client computes the ciphertext locally and uploads it to the
backup service provider, with no HSM interactions required.

To explain why this construction is scalable: 
since only a constant
number of HSMs $n \ll N$
participate in the decryption process, the system
scales well as the number of HSMs in the data
center increases.

To explain why this construction should be secure:
if the attacker cannot guess the client's PIN, the
attacker does not know which set of $n$ 
HSMs (out of the $N$ total)
it needs to compromise to recover the
client's AES key. So, the best attacks are either
to: guess the client's PIN or compromise a large
fraction of the data center.

This argument requires that each individual key-share ciphertext leak
no information about which HSM can decrypt it---a cryptographic property known as
``key privacy''~\cite{BBDP01}.  However, even key-private
encryption schemes do not always remain secure against an
adversary that adaptively compromises secret keys, which leads to
our first technical challenge:

\begin{challenge}
How can we ensure that the client's recovery
ciphertext ``leaks nothing'' about which HSMs are
required to decrypt the client's message, even
against an attacker who can adaptively compromise HSMs?
\end{challenge}
In \cref{sec:hide}, we explain how to solve this problem
using \emph{location-hiding encryption}, a new
cryptographic primitive. 

\subsection{The recovery process}
The client begins the recovery process holding:
\begin{itemize}
\item the public keys of all HSMs in the data center,
\item its secret PIN, and 
\item its recovery ciphertext (which the client can fetch
      from the service provider).
\end{itemize}

First, the client asks the service provider to record
its recovery attempt
in the \emph{append-only log}, implemented collectively by the 
service provider and HSMs.
The log holds a mapping of identifiers to values.
The service provider can insert new identifier-value pairs 
into the log but 
the service provider
cannot modify or delete the values of defined identifiers,
ensuring that there is at most one immutable value for each identifier.

The recovery attempt is logged as follows.  
The client begins by using
public information (service name, username, and
salt in the recovery ciphertext) along with its secret PIN
to recover the subset of $n$ HSMs
it picked during backup.
The client then hashes these values together with some randomness
to produce a cryptographic commitment $h$
to the identities of these HSMs and to its recovery ciphertext.
The client then asks the service provider to insert 
the identifier-value pair $(\username, h)$ into the log, where $\username$ 
is the client's username.
(In this discussion, we use the client's username as the key for simplicity.
In practice, to preserve privacy, we might use an opaque
device-install UUID\@.)

The service provider collects a batch of these log-insertion
requests, produces a Merkle-tree~\cite{M89} digest over the
updated log, and runs a log-update protocol with the HSMs.
At the end of this protocol, the HSMs hold the updated log digest.
The service provider then returns to the client a Merkle
proof $\pi$ proving that the pair
$(\username, h)$ appears in the latest log digest.

Since the service provider and HSMs run the log-update protocol
periodically (e.g., every 10 minutes), the client will have
to wait a few minutes on average to decrypt its backup. 
The client already has to download its large encrypted disk image,
which will likely take minutes, so these steps can proceed in parallel.

The client then contacts its chosen set of $n$ HSMs over an encrypted channel, such as TLS.
The client sends to each HSM: 
its username, the opening of its commitment $h$ 
(i.e., the values and randomness used to construct the commitment $h$), and
the Merkle inclusion proof~$\pi$.
Each HSM 
\begin{itemize}
\item recomputes the commitment $h$ and
      checks the inclusion proof $\pi$ (to confirm that the recovery attempt is logged), and
\item decrypts its share of the client's AES key, 
      confirms that the username in the decrypted plaintext
      matches the one provided by the client 
      (which prevents user $A$ from attempting to decrypt user $B$'s ciphertext,
      in collusion with a malicious service provider).
\end{itemize}
If both of these checks pass, the HSM returns the AES-key
share to the client.

Given any $t$ of these decryption-key shares, the client can
recover the AES key used to encrypt its backup.
The client can then use this AES key to decrypt its backed-up
message.

Since at most one log entry can exist per
username, the use of the log ensures that each
user can make at most one recovery attempt. In
this way, the system defeats brute-force
PIN-guessing attacks. With a slight modification,
it is possible to allow each user to make a fixed
number (e.g., 5) guesses, or a fixed number of
guesses per time period (e.g., 5 per
month).

A counter-intuitive property of this scheme is that the
client never explicitly provides its PIN to the HSMs. 
The fact that the client knows which subset of the HSMs to contact
implicitly proves the client's knowledge of the PIN because
the set of $n$ HSMs is much smaller than the total number of HSMs
$N$.

\medskip

This overview leaves some technical details unexplained.
In particular:
\begin{challenge}
How do the HSMs implement the append-only log 
without sacrificing scalability or security?
\end{challenge}
A straightforward way to implement the log would be to have
each HSM store the entire state of the log.
But then every HSM would have to participate in every recovery
attempt, which would not meet our scalability goals.
Another implementation would be to have the data-center operators
maintain the log, but then malicious data centers could violate
the append-only property, and thus mount brute-force PIN-guessing
attacks, without HSMs noticing.

In \cref{sec:log}, we explain how the HSMs can collectively
maintain such an append-only log in a scalable and secure manner.
At a high level, the (potentially adversarial) data center maintains 
the state of the log, which we represent as a list of identifier-value pairs.
Every time the data center wants to insert an identifier-value pair into the log, 
the data center must prove to a random subset of the HSMs that the
identifier to be inserted is undefined in the current log.
Provided that at least one honest HSM audits each log-insertion, we can
guarantee that the values associated with log identifiers are immutable
(i.e., that we maintain the log's append-only property). 
In this way, (a) each HSM needs to participate in only a vanishing fraction
of the recovery attempts and (b) even an attacker who can compromise many
of the HSMs cannot break the append-only nature of the log.

\medskip

One remaining issue is that an attacker who observes the data center
network may see which HSMs a client interacts with during recovery and
decide to compromise that exact set of HSMs after recovery completes.

\begin{figure}
  \resizebox{\columnwidth}{!}{\begin{tikzpicture}[
    Larrow/.style={>=latex, ->, draw=few-gray, text=black,ultra thick},
    timelabel/.style={font=\fontsize{8}{6}\selectfont,align=right,rotate=45,below left},
    boxlabel/.style={font=\small,align=center,above,yshift=3pt},
  ]
\coordinate (arrowStart) at (0,0);
\coordinate (arrowEnd) at (7.5,0);
\draw[Larrow] (arrowStart) -- node[above,draw=none, inner ysep=0pt]{} (arrowEnd);
\node[anchor=east] (time) at ([xshift=-3pt] arrowStart) {\emph{time}};

\def\timept#1#2{
  \draw[thick] ($ (arrowStart) + (#1, 0.2) $) -- node[timelabel]{#2} ($ (arrowStart) + (#1, -0.2) $);
}

\def\rect#1#2#3#4#5{
  \draw[draw=none,fill=#3] ($ (arrowStart) + (#1, 0.1) $) rectangle ($ (arrowStart) + (#2, -0.1) $);
  \draw[decorate,decoration={brace,amplitude=5pt}] ($ (arrowStart) + (#1 + 0.05, 0.21) $) -- node[xshift=#5,boxlabel]{#4} ($ (arrowStart) + (#2 - 0.05, +0.21) $);
}

\rect{0.5}{4}{few-green}{Not vulnerable}{0};
\rect{4}{5}{few-red}{Vulnerable}{0};
\rect{5}{6.5}{few-green}{Not vulnerable}{20pt};

\timept{0.5}{Client creates\\ backup ciphertext $\ct$.}
\timept{2}{Client asks service provider\\
          to log recovery attempt.}
\timept{3}{Service provider returns\\ log-inclusion proof.}
\timept{4}{Client asks its chosen\\ HSMs to decrypt $\ct$.}
\timept{5}{HSMs revoke their\\ ability to decrypt $\ct$.}
\timept{6}{Client recovers backup.}

\end{tikzpicture}
 }
  \\[-32pt]
  \caption{
    Since HSMs in \name revoke their ability to
    decrypt a client's recovery ciphertext, \name
    protects against HSM compromise attacks that
    take place before recovery begins and after it
    completes. An attacker who can compromise HSMs
    while recovery is in progress can break
    security.
}
  \label{fig:timeline}
\end{figure}

\begin{challenge}
For scalability, the client should only communicate with a small
number of HSMs during recovery. But then how can we protect against
an attacker who compromises these HSMs after recovery completes?
\end{challenge}

Our idea is as follows: after a client runs the recovery protocol, 
each participating HSM \emph{revokes} its ability to decrypt that client's
recovery ciphertext.
So, even if an after-the-fact attacker compromises the HSMs that participated
in recovery, the attacker learns no useful information. 
The only window of vulnerability is at the moment
after the client contacts its HSMs and 
before the HSMs complete revocation (\cref{fig:timeline}).
We describe how to make this work on resource-limited HSMs in
\cref{sec:punc}.
 \section{Protecting the mapping of users to HSMs\\
with location-hiding encryption}
\label{sec:hide}

In this section, we define and construct \emph{location-hiding encryption},
which the client uses to encrypt its backup data.

The location-hiding encryption routine takes as input
(1) a set of $N$ public keys, 
(2) a short PIN, and 
(3) a message,
and outputs a ciphertext.
In our application, the $N$ public keys are the public keys
of the $N$ HSMs in the data center.

The cryptosystem has three main properties, which
we formalize in \iffull \cref{sec:hide-defs}: \else the full version~\cite{full}: \fi

\itpara{1. Security.}
To successfully decrypt the ciphertext, an attacker
must either (a) guess the PIN or (b) control more than a
constant fraction $\fevil$ of the $N$ total secret keys.
This security property must hold even if the attacker can
adaptively compromise an $\fevil$ fraction of the
$N$ secret keys.
In our application, this implies that unless an adversary can guess
the PIN or compromise a constant $\fevil$ 
fraction of the HSMs in the data center, 
it learns nothing about the client's backed-up data.

\itpara{2. Scalability.}
Given the PIN used to encrypt the message, it
is possible to decrypt the message using a
small subset of the $N$ secret keys 
corresponding to the $N$ public keys used during encryption.
In our application, a client who knows the correct PIN
can recover its backup by interacting with only 
a small cluster of $n$ HSMs (for some parameter $n \ll N)$ 
out of the $N$ total HSMs,
So as $N$ grows, each HSM needs to participate in a vanishing fraction
of the total recovery attempts.

\itpara{3. Fault tolerance.}
Given the PIN, it is possible to decrypt a ciphertext 
even if a random fraction $\ffail$ of all secret keys
are unavailable.
In our application, this implies clients can recover their
backups even if an $\ffail$ fraction of all HSMs fail.

\smallskip
We call this primitive ``location-hiding encryption''
because there is a small set of $n$ HSMs
that the attacker could compromise to decrypt the 
ciphertext, but the cryptosystem \emph{hides the location}
of these HSMs within the larger pool of $N$ HSMs.

\subsection*{Our construction}
\label{sec:hide:const}

Our construction of location-hiding encryption 
is just a careful composition of existing primitives.
However, it takes some analysis to prove that the
composition provides the desired security
properties. 
We describe our construction here in prose
and we include the security definitions and proofs in \iffull \cref{sec:hide-defs}.
\else the full version~\cite{full}. \fi
The construction makes use of a public-key encryption scheme
(hashed ElGamal encryption~\cite{E85,BSbook})
and an authenticated encryption scheme (e.g., AES-GCM).

\paragraph{Setup.}
In our construction, each 
HSM $i$, for $i \in [N]$, holds a keypair $(\pk_i, \sk_i)$ for the
public-key encryption scheme.
Let $t \in \N$ be a threshold such that if each HSM
fails with probability $\ffail$, then in a random sample of
$n$ HSMs, there are at least $t$
non-failed HSMs with extremely high probability.
Our instantiation takes $t = n/2$
for $\ffail = \tfrac{1}{64}$.

\paragraph{Encryption.}
The encryption routine takes as input
a list of $N$ public keys $(\pk_1, \dots, \pk_N)$,
a PIN, and a message $\msg$.
To encrypt the message using our location-hiding encryption scheme:
\begin{enumerate}
  \item Sample a random AES key $k$ and a random salt.
  \item Split $k$ into $t$-out-of-$n$-Shamir 
        secret shares $k_1, \dots, k_n$~\cite{S79}.
  \item Hash the PIN and salt and use the result as a seed
        to generate a list of $n$~random indices
        $i_1, \dots, i_n \in [N]$.
  \item Encrypt each key-share $k_j$ with public key $\pk_{i_j}$.
  \item Finally, return (a) the salt, (b) the $n$ public-key
    ciphertexts, and (c) the AES encryption of $\msg$ under key $k$.
\end{enumerate}

\paragraph{Decryption.}
To decrypt given the ciphertext and PIN:
\begin{enumerate}
  \item Hash the salt and PIN to reconstruct the set
        of indices $i_1, \dots, i_n \in [N]$ used during encryption.
  \item Use secret keys $\sk_{i_1}, \dots, \sk_{i_n}$ to decrypt
        the $n$ shares of the AES key $k$.
        (In fact, only $t$ of the shares are necessary.)
  \item Using the recovery routine for Shamir secret sharing, recompute
        the AES key $k$ from its shares.
  \item Decrypt and return $\msg$ using the AES key $k$.
\end{enumerate}

Notice that the decryption routine only uses the
PIN to sample the set of secret keys used for
decryption.
In our application, this implies that
the client never needs to explicitly provide its PIN (or
even a hash of its PIN) to the HSMs; contacting the
right subset of HSMs is enough to ensure that the
client provided the correct PIN.

The intuition behind the security analysis is straightforward:
with hashed ElGamal encryption,
the ciphertext reveals no information about which $n$ public
keys (out of the $N$ total where $n \ll N$) were used during encryption.
Thus, the ciphertext reveals no information about
which secret keys the attacker must compromise unless the attacker can
guess the PIN\@.
Without these secret keys, the attacker cannot learn anything about $k$,
and therefore cannot decrypt the message.

\iffull
The following theorem, which we prove as \cref{thm:lhe-sec}
in \cref{sec:hide-defs:proof} makes this argument precise:

\smallskip
\paragraph{Theorem (Informal).}
\emph{The location-hiding encryption scheme of \cref{fig:lhe-const}
instantiated with the hashed ElGamal encryption scheme (\cref{sec:hide-defs:elgamal}) 
over a group $\G$
is \emph{secure} (in the sense of \cref{defn:lhe-sec}) for certain values of $n$ and $N$, 
provided that:
\begin{itemize}
  \item the computational Diffie-Hellman problem
is hard in $\G$, 
  \item the authenticated-encryption scheme is secure, and
  \item we model the hash functions used in the construction as random oracles.
\end{itemize}}
\else
In the full version~\cite{full}, we formalize our location-hiding encryption scheme
and prove that it is secure in the random oracle model when instantiated
with the hashed ElGamal encryption (with certain constraints on $n$ and $N$).
\fi

There are two reasons why the security analysis is non-trivial:
First, we must ensure that the ciphertext leaks nothing
about the $n$ keys to which it was encrypted (i.e., that it
is \emph{key-private}~\cite{BBDP01}).
Second, we must ensure that the encryption scheme
remains secure even if an attacker can adaptively compromise
secret keys. This is known as security under 
\emph{selective-opening attack}~\cite{BHY09,FHKW10,HR14}.
Showing that both properties hold at once is the source of
the technical complexity.
 \section{The distributed log}\label{sec:log}

\begin{figure*}
  {\centering {\small \begin{tikzpicture}[
    agent color/.style={draw=few-blue, fill=few-blue-light!25},
    agent/.style={agent color, thick, rounded corners,
      minimum height=6cm, text width=6cm},
    msg/.style={line width=1.5mm, >=latex, few-blue, text=black},
    actiontext/.style={anchor=center, text width=5.5cm, align=left},
    Larrow/.style={>=latex, <->, draw=few-blue-bright, text=black},
    tree/.style={draw=few-purple,very thick},
    log/.style={thick, draw=few-blue-bright, fill=few-green},
    new log/.style={log, fill=few-green!50},
  ]
  \node[agent, label=above:{\bfseries Service provider}] (provider) {};
  \node[agent, right=4cm of provider, label=above:{\bfseries HSM$_i$}] (hsm) {};

\def\msg#1#2#3{
  \draw[msg,#2] ([yshift=-#1] provider.north east) --
  node[above,align=center]{#3} ([yshift=-#1] hsm.north west);
}
\def\act#1#2#3{
  \node[actiontext] at ([yshift=-#1] #2.north) {#3};
}

\msg{5mm}{->}{$(d,d',R)$}
\act{7mm}{provider}{Holds stale log $L$, new log $L'$.\\
Build Merkle tree over log-chunk digests and extension proofs.}
\act{5mm}{hsm}{Holds stale digest $d = \Digest(L)$.}

\msg{15mm}{<-}{``Audit $(1,3)$''}
\act{15mm}{hsm}{Choose $\lambda$ random chunks\\
  in the range $\{1,\ldots,N\}$ to audit.}

\msg{35mm}{->}{$(d,d_1,\pi_1),(d_2,d_3,\pi_3)$,\\ $\langle$Merkle
  proofs that these\\ values are included in $R\rangle$}
  \act{35mm}{hsm}{If $\VerifyExtends(d,d_1,\pi_1)$ and\\
  \ \ \ $\VerifyExtends(d_2,d_3,\pi_3)$,\\
  then sign $(d,d',R)$ using secret key.}

\msg{45mm}{<-}{Signature $\sigma_i$}

\coordinate (logpos) at ([yshift=-45mm] provider.north) {};

\msg{55mm}{->}{Aggregate $\sigma$}
\act{54mm}{provider}{Collect sigs from all online HSMs.\\
Aggregate signatures into $\sigma$.}
\act{54mm}{hsm}{If $\sigma$ is a valid signature of $(d,d',R)$ under
  aggregate public key,\\then accept $d'$ as new digest.}

\node[log, anchor=south, minimum width=5.5cm, minimum height=1cm]
  (log) at ([yshift=5pt] logpos) {};
\node[new log, draw=none, inner sep=0pt, outer sep=0pt,
  name=lognew, fit={(log.south east)
    ($(log.north west)!.4!(log.north east)$)},
] {};
\node[log, anchor=south, minimum width=5.5cm, minimum height=1cm, fill=none]
  (log) at ([yshift=5pt] logpos) {};

\foreach \x in {0,1,...,5} {
  \coordinate (pi\x) at ($(lognew.north west)!\x/5.0!(lognew.north east)$) {};
}
\foreach \x in {0,1,2,3,5} {
  \draw[new log, very thick] (pi\x) -- (pi\x |- lognew.south);
}

\draw[Larrow] (logpos -| log.west) --
    node[agent color,draw=none, inner ysep=0pt]{$L$} (logpos -| pi0);
\draw[Larrow,transform canvas={yshift=-5pt}]
      (logpos -| log.west) -- node[agent color,draw=none]{$L'$} (logpos -| pi5);

\foreach \x/\d/\p in {0/d/,
  1/d_1/\pi_1, 2/d_2/\pi_2, 3/d_3/\pi_3,
  4//\cdots, 5/d'/\pi_N } {
  \node[above=0pt of pi\x, inner ysep=2pt] {$\p$};
  \node[above=1.5ex of pi\x, inner sep=0pt, outer sep=0pt] (d\x) {$\strut\d$};
}

\def\rparent#1(#2)(#3)#4; {
  \coordinate (xtmp) at ($(#2)!.5!(#3)$) {};
  \coordinate (ytmp) at ($(#3) + (0,4mm)$) {};
  \node[outer sep=0pt, inner sep=1pt] (#1) at (xtmp |- ytmp) {#4};
  \draw[tree] (#3) -- (#1);
}
\def\parent#1(#2)(#3)#4; {
  \rparent#1(#2)(#3)#4;
  \draw[tree] (#2) -- (#1);
}
\parent d0d1(d0)(d1);
\parent d2d3(d2)(d3);
\parent d0d4(d0d1)(d2d3);
\rparent d4d5(d4)(d5);
\rparent d4d4d5( $(d4)!.5!(d3)$ |-d2d3)(d4d5);
\parent root(d0d4)(d4d4d5){$R$};
\end{tikzpicture}
 \\}}
\caption{The protocol that the service provider and HSMs use to update the HSM's log digest.}
\label{fig:logflow}
\end{figure*}

In \name, the HSMs collectively maintain a \emph{distributed log},
which any external party can read and replay.
The service provider maintains the log state and
the HSMs monitor log insertions to ensure that the
service provider does not violate the log's
append-only property. 

We use this log for two primary purposes:
\begin{enumerate}
\item \textbf{Limiting PIN guesses.}
To prevent an attacker from brute-force guessing
a client's PIN, we use the log
(as described in \cref{sec:arch})
to enforce a global limit on the number of recovery attempts
that the HSMs allow per username.  

\item \textbf{Monitoring recovery attempts.}
The service provider logs each recovery attempt, so any \name client
can inspect the log to learn whether someone
(e.g., a foreign attacker or snooping
acquaintance) has tried to recover their backed-up data.
A client could then take mitigating action---such as contacting
their service provider, a law-enforcement agency, or the press.

\end{enumerate}

A third use for the log---which comes directly from 
related work~\cite{ford2006persistent}
and which we have not yet implemented---is to
\textbf{manage HSM group membership}.
Whenever the service provider wants to add or remove an 
HSM from the data center, the service provider operator could
record this information in the log before the other HSMs
will accept the change.
All \name clients can thus verify that they are
communicating with the same set of HSMs.
In addition, clients can also detect suspicious
changes in the set of HSMs in the data center.
(For example, if the service provider replaces all
HSMs in the data center over the course of a day.)

The log is simply a list of identifier-value pairs maintained by
the service provider. Clients can insert identifier-value
pairs in order to record recovery attempts, and HSMs
maintain a digest of the log state.
Our distributed log must satisfy the following key property:
\begin{quote}
\textit{
If any honest HSM ever accepts that an
identifier-value pair $(\id, \val)$ is included in the log,
the HSM should never accept that $(\id, \val')$ is included
in the log, for any value $\val' \neq \val$.
}
\end{quote}

\subsection{Underlying data structure}
\label{sec:log:syntax}

\itpara{Terminology.}
The log $L$ is a list of key-value pairs.
Since we use the word ``key'' in this paper to refer to cryptographic keys, 
we call log keys ``identifiers.''
We say that a log $L'$ ``extends'' a log $L$ if
(a) $L$ is a prefix of $L'$ and 
(b) every identifier in $L'$ appears at most once.

Our distributed log uses an
authenticated data structure~\cite{Tamassia03,NN98,aad} 
that implements the following five routines:

\smallskip
\begin{hangparas}{\defhangindent}{1}

$\Digest(L) \to d$. Return a constant-size digest $d$ representing the current state of the log.

$\ProveIncludes(L, \id, \val) \to \{\pfinc, \bot\}$. 
Output a proof $\pfinc$ that attests to the fact that 
the identifier-value pair $(\id, \val)$ is 
in the log represented by digest $d = \Digest(L)$.

$\VerifyIncludes(d, \id, \val, \pfinc) \to \zo$. 
Return ``$1$'' iff $\pfinc$ proves that
the log that digest $d$ represents contains $(\id, \val)$. 

$\ProveExtends(L, L') \to \{\pfext, \bot\}$.
Output a proof $\pfext$ that $d' = \Digest(L')$ represents a log that 
extends the log that digest $d = \Digest(L)$ represents.

$\VerifyExtends(d, d', \pfext) \to \zo$. 
Return ``$1$'' iff $\pfext$ proves that
the log that digest $d'$ represents extends
the log that digest $d$ represents.

\end{hangparas}

\smallskip
\noindent

The inclusion and extension proofs must be
complete (honest verifiers accept valid proofs) and
sound (honest verifiers reject invalid proofs),
as we define in \iffull \cref{app:logx:sec}. \else the full version~\cite{full}. \fi

\paragraph{Implementing the data structure.}
Nissim and Naor~\cite{NN98} show that it is possible to implement 
these log primitives using only Merkle trees~\cite{M89}. 
We summarize their construction in \iffull \cref{app:logx:impl}.
\else the full version~\cite{full}. \fi
At a very high level: the digest of the log is just the root of 
a Merkle tree computed over all of the entries of the log,
represented as a binary search tree indexed by $\id$.
A log-inclusion proof $\pfinc$ is a Merkle proof of inclusion
relative to this root.
A log-extension proof $\pfext$ is a proof that: 
(1) every identifier inserted to the new log did not exist in the old log and 
(2) the new digest represents the old log tree with the new values inserted.
It is possible to prove both assertions using a number of Merkle proofs
proportional to the number of log insertions.

\subsection{Building a distributed log}

We now explain how to use the primitives of \cref{sec:log:syntax} to build our
distributed append-only log.

\paragraph{Initializing the log.}
The service provider maintains the entire state of the log $L$.
Each HSM stores a log digest $d$ which,
in steady state, is the digest of the log $L$
that the service provider holds.
Initially, the log $L$ is empty and each HSM holds the digest of the empty log.

\paragraph{Inserting into the log.}
A client can insert an entry $(\id, \val)$ into the log
by simply sending the pair to the service provider.
The service provider adds this entry to its log state $L$.

\paragraph{Proving log membership to HSMs.}
Before the HSMs allow a client to begin the recovery process, the HSMs
require proof that the client's recovery attempt is logged.
Assume for the moment that the service provider holds a log $L$ 
and all HSMs hold the up-to-date digest $d = \Digest(L)$.
(We will explain how the HSMs get the latest log digest in a moment.)
Then, a client can prove inclusion of any pair $(\id, \val)$ in the
log by asking the service provider for an inclusion proof.
The service provider computes $\pfinc = \ProveIncludes(L, \id, \val)$
and returns the inclusion proof to the client.
The client then sends $(\id, \val, \pfinc)$ to the HSM, 
which can check $\VerifyIncludes(d, \id, \val, \pfinc)$ to be
convinced that $(\id, \val)$ is in the log represented by its digest~$d$.
This inclusion check is fast---logarithmic in the log length.

\paragraph{Updating the log digest at the HSMs.}
After a sequence of log-insertions, the service provider holds a log state $L'$.
The HSMs will be holding a digest $d = \Digest(L)$ of a stale log $L$. 
If the service provider is honest, the new log $L'$ extends the old log $L$.

To update the log digest at the HSMs, the service provider will first 
send the new digest $d' = \Digest(L')$ to every HSM.
Next, the data center must convince each HSM that this new digest $d'$
represents a log that extends the log $L$ that the old digest $d$ represents.

One non-scalable way to achieve this would be for the service provider
to send an extension proof $\pfext = \ProveExtends(L, L')$ to every HSM.
The problem is that the time required to check this extension proof grows
linearly with the number of new log entries. 
So if every HSMs checked the entire extension
proof, the throughput of the system would not
increase as the number of HSMs increases.

Instead, we use a randomized-checking approach, as in \cref{fig:logflow}.
If there have been $I$ insertions to the log since the last update,
the service provider divides the updates into $N$ chunks, each containing
$I/N$ insertions.
The service provider then applies these chunks of updates to the old 
log $L$ one at a time, producing a digest $d_i$ and extension proof $\pi_i$ for 
each of the $N$ intermediate logs ($i \in \{1, \dots, N\}$).
The service provider then 
sends the root $R$ of a Merkle-tree commitment to these digests to each HSM.

Each HSM then asks the service provider 
for a random $\lambda$-size subset of the intermediate digests and extension proofs,
where $\lambda$ is a security parameter.
The service provider returns the requested digests and extension proofs and proves
that these values are included in the Merkle root $R$.
Each HSM checks its requested intermediate extension proofs using $\VerifyExtends(\cdot)$ 
and checks the Merkle proof relative to the root $R$.
The HSMs auditing the first and last chunks also ensure that the intermediate
digests match the old digest $d$ and 
the new digest $d'$, respectively.

If these extension and Merkle proofs are valid, each HSM signs the tuple $(d, d', R)$
using an aggregate signature scheme~\cite{BGLS03},
and returns the signature to the service provider.
Once all online HSMs have signed, 
the service provider aggregates these signatures and
broadcasts the aggregated signature to all HSMs.
If any HSM fails during this process,
the service provider notifies the HSMs and they
restart this log-update process. 
(In \iffull \cref{sec:logx:fail}, \else the full version~\cite{full}, \fi
we describe how the log can make
progress even if HSMs fail during the log-update protocol.)

The HSMs check the aggregate signature on $(d, d', R)$ relative
to the HSMs' aggregate public key.
If the signature is valid, the HSMs accept the new digest $d'$.

\itpara{Security.}
If there are at most $\fevil$ compromised HSMs,
then even if $\fevil$ honest HSMs are slow,
$(1 - 2 \fevil) N$ honest HSMs will participate in any successful
protocol execution.
If each of these HSM audits $C$ chunks, then the probability
that no honest HSM audits a particular log chunk is
\[\Pr[\text{fail}] = \big(1 - \tfrac{1}{N}\big)^{(1 - 2 \fevil)N \cdot C} \leq \exp\big((2\fevil - 1) \cdot C\big).\]
(Here, we use the fact that $(1-x) \leq \exp(-x)$.)
If each HSM audits $C = \lambda \approx 128$ chunks, this failure probability is $\ll 2^{-128}$.
In other words, some honest HSM will catch
a cheating service provider with overwhelming
probability.
In addition, since all honest HSMs will expect a signature from all honest HSMs, 
this will cause the updating operation to fail and the system to halt.
For this analysis, we assume that the adversary cannot adaptively compromise HSMs
while the recovery protocol is running without taking them offline.

\itpara{Scalability.}
Each HSM must check the extension proofs on $\lambda$ chunks, where each chunk contains a
$1/N$ fraction of the total updates in each epoch.
Thus each HSM checks a vanishing fraction ($\frac{\lambda}{N}$) 
of log insertions.
Each HSM checks one aggregate signature, which requires
time independent of the number of HSMs~\cite{BGLS03}.
Thus, the total work that each HSM performs per epoch decreases as 
the number of HSMs $N$ increases.

\smallskip
Because we use the log primarily to limit the number of PIN attempts,
garbage collection is straightforward.
The service provider simply creates a new empty log, effectively
resetting the number of PIN attempts for every user (old copies of the
log can still be inspected to monitor recovery attempts).
To ensure that the service provider does not run garbage collection and
clear the state too frequently, each HSM will run garbage collection for
a fixed number of times (e.g. the expected number of garbage collections
over two years) before refusing to respond to further requests. This
bounds the number of times the service provider can garbage collect the log.

\subsection{Transparency and external auditability}
\label{sec:log:audit}

Our log design allows \emph{anyone} to audit the
log to ensure that the service provider correctly maintains the log's
append-only property.
Additional auditors only add to the security of the system by adding
another layer of protection, as they can detect log corruptions in the event that
more than $\fevil$ HSMs are compromised.
In particular, for any two log digests $d$ and $d'$, an auditor can ask
the data center for the entire logs $L$ and $L'$ 
corresponding to both of these digests.
The auditor confirms that $d$ is the root of the log tree for $L$
and that $d'$ is the root of the log tree for $L'$.
Finally, the auditor checks that $L'$~extends~$L$.

As an extra precaution, users could specify external
parties (e.g., Let's Encrypt) as designated auditors during backup. 
During recovery, the HSMs would only complete the recovery if
these auditors sign the latest log digest.
In this way, mounting a brute-force PIN-guessing attempt against 
a user would require compromising the user's external auditors as well.

The transparency log can also help with PIN re-use.
As discussed in Section~\ref{app:deploy}, instead of storing the
salt directly with the service provider, the
salt itself can be encrypted using a second round of location-hiding
encryption and a null PIN\@.  After recovery, the salt will be
destroyed as discussed in the next section.  Once the salt has been
destroyed, the device restoring a backup can use the log to determine
if anyone else has ever fetched the salt.  If not, then it is safe for
the user to re-use the old PIN\@.

As described in \cref{sec:arch}, the log contains usernames, 
which could be sensitive. To prevent leaking usernames, we 
would replace usernames with random device identifiers that
are rerandomized when the device is factory reset. However, even
with this modification,
the log still leaks information about when and how often users
restore backups, which the service provider may not wish to make
public.
While we hope that organizations would make their logs public, we
acknowledge that some may only share their logs with several
hand-picked organizations for auditing or may not share their logs
at all. In these cases, our security guarantees still hold,
although some of the transparency benefits are lost.

 \section{{Forward~security~by~puncturable~encryption}} \label{sec:punc}

We would like our encrypted-backup system to provide
forward secrecy~\cite{CHK03}.
During the recovery process, the client reveals the identity of
the $n \ll N$ HSMs that can decrypt its backup.
Without forward secrecy, an attacker can break into these $n$
HSMs to recover the client's backed-up data.
Forward secrecy ensures that after recovery, an attacker, even one
who compromises all HSMs in the data center, learns no information
about the client's backup.

One seemingly straightforward way to provide forward secrecy would be to use a
new keypair for each backup.
However, because the client cannot interact with the HSMs it is encrypting to
during backup (as this would reveal their identities), using a unique keypair
for every backup would require every HSM in the data center to generate
a new keypair for every backup, running counter to our scalability goals.

\subsection{Background: Puncturable encryption}\label{sec:punc:bg}
We instead achieve forward secrecy using 
puncturable public-key encryption~\cite{GM15,GHJL17,DJSS18,CHNVW18,CRRV17,DKLR+18}.
A puncturable encryption scheme is 
a normal public-key encryption scheme
$(\keygen, \enc, \dec)$, with one extra routine:

\smallskip
\begin{hangparas}{2em}{1}
  $\puncture(\sk, \ct) \to \sk_{\ct}$. Given a decryption key $\sk$
  and a ciphertext $\ct$, output a new
  secret key $\sk_{\ct}$ that can decrypt all
  ciphertexts that $\sk$ could decrypt except for $\ct$.
\end{hangparas}
\smallskip

\noindent

\paragraph{Puncturable encryption for forward secrecy.}
To achieve forward security in \name, after an HSM decrypts its share of a client's
recovery ciphertext $\ct$, the HSM \emph{punctures} its secret decryption key.
The punctured key allows the HSM to decrypt all ciphertexts except for $\ct$.
Thus, if an attacker compromises \emph{all} HSMs in the data center after
a client has recovered its backup, the attacker will be unable to decrypt 
any backup images that clients have already recovered.
Furthermore, if an attacker compromises at most $\fevil \cdot N$ HSMs total, 
where $\fevil$ is a parameter of the system that we define in 
\cref{sec:design}, then
the attacker will not be able to recover any backed-up data whatsoever.

\paragraph{Existing tool: Bloom-filter encryption.}
Our implementation uses 
a puncturable encryption scheme called
Bloom-filter encryption~\cite{DJSS18}.
There are only two details of 
Bloom-filter encryption that are important for this discussion.
\begin{enumerate}
  \item \textit{The secret key is large.} 
          If a key supports $P \in \Zgz$ punctures and we want decryption
          to fail with probability at most $2^{-\lambda}$, then the
          secret key for Bloom-filter encryption 
          is an array of roughly $\lambda P$ elements of a cryptographic group~$\G$.
          After $P$ punctures, the secret key may no longer decrypt messages
          and it is necessary to rotate encryption keys.
  \item \textit{Puncturing is simple.} 
          Puncturing the secret key 
          just requires deleting $\lambda$ elements 
          in the data array that comprises the secret key.
\end{enumerate}

Concretely, when we set the Bloom-filter-encryption parameters
to suitable values for experimental evaluation, each
Bloom-filter encryption secret key has size over
64~MB.
Even high-end HSMs have only 1--2~MB of storage
(\cref{tab:hsm}), so storing such large keys on an
HSM would be impossible.

\subsection{Outsourced storage with secure deletion}
\label{sec:punc:del}

We describe how to efficiently outsource the
storage of this large secret key in a way that
preserves forward secrecy of the punctured key.
In particular, the HSM can outsource the storage
of its secret-key array to the untrustworthy service provider, 
while still retaining the ability to delete portions of the key.
To do this, we draw on techniques for outsourcing the storage of any
data array (not just secret keys) first described by Di Crescenzo et
al.\ \cite{how-to-forget-a-secret} and extended in subsequent work
\cite{delete-btree, vORAM}.

\paragraph{Desired functionality.}
At a high level, the HSM has access to 
(a) a small amount of internal storage and 
(b) a large external block store, run by the service provider.
The HSM wants to store an array of $D$ data blocks
at the provider $(\data_1, \dots, \data_D)$.
The HSM should be able to subsequently \emph{read} 
or \emph{delete} these blocks.

The following security properties should hold, even if the
attacker, controlling the service provider, 
may choose the data-array and sequence of 
operations the HSM performs:
\begin{itemize}
    \item \textbf{Integrity.} 
        If the service provider tampers with the stored data
        in a way that could cause a \emph{read} to return an incorrect result,
        the read operation outputs $\bot$.
        Otherwise, the read operation for a block $i$ returns 
        the value of the last data that the client wrote to block~$i$.

    \item \textbf{Secure deletion.}
        If the service provider compromises the HSM after the HSM has 
        run the \emph{delete} operation for the $i$th data block,
        the attacker learns nothing about the data stored in block $i$.
        (This property implies a confidentiality property: the 
        service provider learns nothing about the outsourced data.) 

\end{itemize}
For efficiency, the HSM storage requirements 
must be small (constant size) 
and the \emph{read} and \emph{delete} routines should run quickly (in time
logarithmic in the size $D$ of the data array).
Unlike in ORAM~\cite{G87,GO96}, our goal is not to hide
the HSM's data-access pattern from the service provider.
We aim only to hide the contents of the array.

\subsection{Building secure outsourced storage}
We explain the construction here in prose, drawing on techniques
first described by Di Crescenzo et al.\ \cite{how-to-forget-a-secret}.
See \iffull \cref{app:puncx} \else the full version~\cite{full} \fi for a more formal description.

\begin{figure}
  \centering
  \usetikzlibrary{patterns}

\definecolor{few-gray-bright}{HTML}{010202}
\definecolor{few-red-bright}{HTML}{EE2E2F}
\definecolor{few-green-bright}{HTML}{008C48}
\definecolor{few-blue-bright}{HTML}{185AA9}
\definecolor{few-orange-bright}{HTML}{F47D23}
\definecolor{few-purple-bright}{HTML}{662C91}
\definecolor{few-brown-bright}{HTML}{A21D21}
\definecolor{few-pink-bright}{HTML}{B43894}
\definecolor{few-gray}{HTML}{737373}
\definecolor{few-red}{HTML}{F15A60}
\definecolor{few-green}{HTML}{7AC36A}
\definecolor{few-blue}{HTML}{5A9BD4}
\definecolor{few-orange}{HTML}{FAA75B}
\definecolor{few-purple}{HTML}{9E67AB}
\definecolor{few-brown}{HTML}{CE7058}
\definecolor{few-pink}{HTML}{D77FB4}
\definecolor{few-gray-light}{HTML}{CCCCCC}
\definecolor{few-red-light}{HTML}{F2AFAD}
\definecolor{few-green-light}{HTML}{D9E4AA}
\definecolor{few-blue-light}{HTML}{B8D2EC}
\definecolor{few-orange-light}{HTML}{F3D1B0}
\definecolor{few-purple-light}{HTML}{D5B2D4}
\definecolor{few-brown-light}{HTML}{DDB9A9}
\definecolor{few-pink-light}{HTML}{EBC0DA}
 
\def\sk{\mathsf{sk}}
\def\data{\mathsf{data}}
\colorlet{cryptcolor}{few-purple-bright}
\colorlet{cryptcolorp}{few-green-bright}

\tikzset{
  crypt/.style={draw=cryptcolor,thick,rounded corners},
  cryptp/.style={draw=cryptcolorp,thick,rounded corners},
  link/.style={draw,thick,->},
  key/.style={at=(#1.south east),
    draw=cryptcolor,fill=cryptcolor,text=white,
    anchor=center, rounded corners=0.5ex, node font=\tiny,
    inner sep=0pt, minimum width=5ex,
    text height=2.25ex, text depth=1.5ex},
  keyp/.style={at=(#1.south east),
    draw=cryptcolorp,fill=cryptcolorp,text=white,
    anchor=center, rounded corners=0.5ex, node font=\tiny,
    inner sep=0pt, minimum width=5ex,
    text height=2.25ex, text depth=1.5ex},
}

\begin{tikzpicture}
\fill [rounded corners,thick,pattern=north east lines,pattern color=gray] (5.675,-2.29) rectangle (7,-1.73);
\fill [color=white] (6.5,-2.29) rectangle (7,-1.73);
\newif\ifleft
\foreach \side in {L,R} {
\if\side L
  \let\pr\relax
  \let\ph\relax
  \newcommand{\topstyle}{crypt}
  \newcommand{\keystyle}{key}
\else
  \let\pr'
  \def\ph##1{\phantom{##1}\llap{}}
  \tikzset{xshift=4.5cm}
  \newcommand{\topstyle}{cryptp}
  \newcommand{\keystyle}{keyp}
\fi

\node at (1,0) (k) {$\sk\pr$};
\node[\topstyle] (n) at (1,-1) {$\sk_0\mid \sk_1\pr$};
\node[\keystyle\relax=n] {$\sk\pr$};
\node[crypt] (n0) at (0,-2) {$\sk_{00}\mid \sk_{01}$};
\node[key=n0] {$\sk_0$};
\node[\topstyle] (n1) at (2,-2) {$\ph{\sk_{10}}\mid \sk_{11}$};
\node[\keystyle\relax=n1] {$\sk_1\pr$};

\foreach \d/\b in {1/00,2/01,3/10,4/11} {
  \node[anchor=east,rotate=90,crypt] at (\d-1.5,-3)
     (data\b) {$\data_{\d}$};
  \node[key=data\b,at=(data\b.south west)] {$\sk_{\tiny \b}$};
}
\draw[link] (k) -- (n);
\foreach \d in {0,1} {
  \draw[link] (n) -- (n\d);
  \draw[link] (n\d) -- (data\d1);
}
\draw[link] (n0) -- (data00);
\if\side L
  \draw[link] (n1) -- (data10);
\fi
}

\draw[dashed,few-gray] (-1,-.5) -- (7.5,-.5);
\node[anchor=south west,few-gray,node font=\footnotesize]
  at (-1,-.5) {\textbf{HSM}};
\node[anchor=north west,few-gray,node font=\footnotesize]
  at (-1,-.5) {\textbf{Servers}};

\node[node font=\small] at (1.05,-4.6) {\emph{Initial state.}};
\node[node font=\small] at (5.55,-4.6) {\emph{After deleting $\data_3$.}};

\end{tikzpicture}

   \caption{The outsourced-storage scheme has a tree of keys.
        An arrow $a \to b$ denotes that value $b$ 
        is stored encrypted under key $a$.
        A service provider that stores all values it sees
        and later compromises the HSM state ($\sk'$) still does not
        learn the deleted $\data_3$ value.}
  \label{figs:tree}
\end{figure}

\itpara{Running the setup phase.}
During the setup phase, the outsourced-storage scheme 
builds a binary tree with $D$ leaves.
Every node of the tree contains a fresh symmetric encryption key.
During setup, for each node in the tree with key $\sk_i$, 
we encrypt the keys of the child nodes $\sk_{i0}$ and $\sk_{i1}$ with $\sk_i$ and 
store this ciphertext $\AEenc(\sk_i, \sk_{i0} \| \sk_{i1})$ in outsourced
storage.
At the leaves of the tree, we encrypt the $i$th data block
with the key $\sk_i$ at the $i$th leaf and we store
the ciphertext $\AEenc(\sk_i, \data_i)$ in outsourced storage.

For example, in Figure~\ref{figs:tree}, we use $\sk_0$ 
to encrypt $\sk_{00}$ and $\sk_{01}$ and we store the result in outsourced storage.
We use key $\sk_{01}$ 
to decrypt data item 2. Thus, knowing the root key $\sk$ is
enough to decrypt the entire tree and access every data element in the array.

\itpara{Reading a data block.}
To retrieve the data block at index $i$, the HSM reads in
the ciphertexts along the path from the tree root to leaf $i$. 
The HSM then decrypts the chain of
ciphertexts from the root down to recover the data block at index $i$.
For example, in Figure~\ref{figs:tree}, to retrieve data block 3, the HSM can use $\sk$ to decrypt $\sk_1$, and $\sk_1$ to decrypt $\sk_{10}$, 
which it can use to decrypt data item 1.

\itpara{Deleting a data block.}
To delete the data block at index $i$, the HSM recovers (as in retrieval)
the keys along the path from the root to leaf $i$. 
At the node containing the key to decrypt data block $i$, the
HSM deletes the key. 
It then chooses a fresh key and re-encrypts the other
key at that node using the fresh key. To maintain the ability of the parent
key to decrypt the child ciphertext, the HSM updates the parent of that node to
contain the fresh key for its child and re-encrypts the parent's keys under
a new key. It continues this up the path to the root, where the HSM chooses
a new key $\sk'$ to encrypt the root. The HSM replaces $\sk$ with $\sk'$, deleting
the old $\sk$, and then sends the new ciphertexts along the path from the
root to leaf $i$ back to the service provider.
For example, in Figure~\ref{figs:tree},
to delete data item 3, the HSM decrypts the keys $(\sk_0 \| \sk_1)$ and $(\sk_{10} \| \sk_{11})$.
The HSM then deletes $\sk_{10}$, chooses a new key $\sk'_1$ to encrypt $\sk_{11}$, and then
chooses a new key $\sk'$ to encrypt $\sk_0$ and $\sk'_1$. The HSM then replaces $\sk$
with $\sk'$.

\itpara{Efficiency.}
The setup time is linear in the size of the data array $D$. 
The runtimes of retrieval and deletion are both
logarithmic in $D$, and
require only symmetric-key operations. 
The HSM stores only the constant-sized root encryption key $\sk$. 

\itpara{Security intuition.}
An HSM can always recover the keys necessary to decrypt a data item, provided
the HSM did not previously delete any of the keys necessary for decryption.
Integrity follows immediately from the security of the underlying
authenticated encryption scheme. Finally, we ensure secure deletion by deleting
the key necessary to decrypt a certain data item and updating the root key. Without
the old root key, it is impossible to access the key necessary to decrypt the
deleted data item. 

\paragraph{Putting it together.}
To summarize: the HSMs use a puncturable
encryption scheme to prevent the compromise of HSM secrets at time $T$
from allowing an adversary to learn about backed-up data that was
recovered any time before $T$.
We implement puncturable encryption using Bloom-filter encryption and 
outsource the storage of the large secret decryption key
while allowing secure deletion.

 \section{Extensions and deployment considerations}
\label{app:deploy}

The full \name implementation has to deal with a number
of additional issues, which we discuss now. 

\paragraph{Failure during recovery.}
\xxx{Need to clarify this}
As discussed in Section~\ref{sec:punc}, after participating in
recovery, HSMs revoke their ability to decrypt the recovered ciphertext.
One consequence is that a client cannot recover the same backup ciphertext twice. 
This raises the question of what happens if a replacement device fails during or
shortly after recovery, or if a communication failure during recovery prevents the new
device from receiving the replies from the HSMs.

To solve this problem, 
when a client initiates recovery, it first generates a fresh per-recovery keypair
$(\sk, \pk)$ for a public-key encryption scheme.
The client backs up this secret key $\sk$ using \name
before initiating its recovery.
Next, the client sends the public key $\pk$ to each HSM\@
and then begins the backup-recovery process.
Each HSM encrypts its replies to the client under $\pk$,
and each HSM sends a copy of each reply to the data center.
If a client device fails during recovery,
a second, replacement client device can retrieve
the backed-up secret key $\sk$ and use these to
decrypt the replies stored at the data center.
This scheme nests arbitrarily, thereby handling any number of
consecutive device failures during recovery.

\paragraph{Incremental backups.}
In practice, mobile devices often generate incremental backups rather
than encrypting the entire disk image for each backup. \name supports
incremental backups in the following way. The user uses \name to
store a single AES key, which the user also keeps on her phone.
The user can then encrypt incremental backups under this AES key and
upload the resulting ciphertext to the data center.
When the user recovers, she recovers her AES key and can use this
key to decrypt the incremental updates.

\paragraph{Multiple recovery ciphertexts.}
Clients back up their phones regularly (e.g., every three days), 
and will thus generate a series of recovery ciphertexts.
We want to ensure that after a client recovers her backup 
from time $t$, the HSMs involved in recovery puncture their 
secret decryption keys so that they cannot decrypt that client's
backups from earlier times $t' < t$, even if an attacker compromises
all HSMs in the data center.
To achieve this, in the puncturable-encryption
step (\cref{sec:punc}), we have the client use the same salt 
for each recovery ciphertext it generates. 
In this way, the client will encrypt its series of backups 
to the same set of HSMs. When these HSM puncture their secret keys 
during the recovery process, they will destroy their ability to decrypt
any previous recovery ciphertexts from the given client.
After recovery, the client chooses a new salt to generate subsequent backups
on its new device.

\paragraph{Preventing post-recovery PIN leakage.}\label{par:leakage}
As we have discussed, an attacker that watches the client recover can
learn a salted hash of the user's PIN, which can be used to mount
an offline brute-force attack to learn the user's PIN.

One approach to protect against this attack would be to have each user store
their salt in secret-shared form at a random set $\Ssalt$ of HSMs,
where $\Ssalt$ is included in the client's recovery ciphertext.
Then, provided that the attacker does not compromise this set of HSMs,
the attacker would learn no useful 
information on the user's PIN, even after recovery.
An attacker could always compromise every HSM in $\Ssalt$, 
but an attacker that can compromise only a $\fevil$ fraction of
HSMs in the data center would not be able to
mount this attack against too many clients' salts.
We hope to model and prove this multi-user
PIN-protection property in future work.

\begin{table}[t]
\centering
{\small
\begin{minipage}[t]{0.4\columnwidth}
\centering
\begin{tabular}[t]{lr}
\toprule
    \textbf{Operation} & \textbf{Ops/sec} \\
\midrule
  Pairing & 0.43\\ 
    ECDSA ver& 5.85\\
    ElGamal dec& 6.67\\
    $g^x \in \mathbb{G}_\text{P256}$ & 7.69\\
\bottomrule 
\end{tabular}
\end{minipage}~
\begin{minipage}[t]{0.6\columnwidth}
\centering
\begin{tabular}[t]{llr}
\toprule
    &\textbf{Operation} & \textbf{Ops/sec} \\
\midrule
    &HMAC-SHA256 & 2,173.91 \\
    &AES-128 & 3,703.70 \\ \midrule
    \multirow{3}{0pt}{\rott{\centering \textbf{I/O}}} &RTT, HID (32b) & 71.43 \\
    &RTT, CDC (32b) & 2,277.90 \\
    &Flash read (32b) & $\approx$166,000\\
\bottomrule 
\end{tabular}
\end{minipage}
}
\caption{Microbenchmarks on SoloKey. 
  Pairing is on BLS12-381 curve using the JEDI library~\cite{jedi}. 
  Other public-key operations
  use NIST P256 curve.}
\label{tab:micro}
\end{table}
\section{Implementation and evaluation}
\label{sec:eval}

We implemented \name on an experimental data
cluster of 100 hardware security devices (\cref{fig:datacenter}).

\sitpara{HSM.} 
For the HSMs, we used SoloKeys~\cite{solokeys}, a low-cost
open-source USB FIDO2 security key.
SoloKeys use a STM32L432 microcontroller
with an ARM Cortex-M4 32-bit RISC core clocked at 80MHz and 265KB of
memory. 
The device is not side-channel resistant, but has a true
random number generator and can lock its firmware.
We add roughly 2,500 lines of C code 
to the open-source SoloKey firmware~\cite{solo-code}.

By default, SoloKeys communicate with the USB host via USB HID, an 
interrupt-based USB class used typically for keyboards and mice
that has a maximum throughput of 64KBps.
To improve performance, we
rewrote parts of the firmware to use USB CDC, a high-throughput
USB class commonly used for networking devices. 
This gave a roughly $32 \times$ 
increase in I/O throughput (Table~\ref{tab:micro}).

For the puncturable-encryption scheme (\cref{sec:punc:bg}),
we use a variant of Bloom-filter encryption~\cite{DJSS18}
that avoids the need for pairings~\cite{BF-IBE} 
but increases the size of the HSMs' public keys.
For the aggregate signature scheme needed for the log, we
use BLS-style multisignatures~\cite{bonehbls}
over the JEDI~\cite{jedi} implementation of the BLS12-381 curve.

Our implementation does not encrypt communication between
the client and HSMs. Based on the time to run AES-128 and 
ElGamal encryption on the SoloKeys, we estimate that
transport-layer encryption would add two ElGamal decryptions
and 2KB of AES operations per recovery,
increasing recovery time by approximately 0.3 seconds, 
or 30\%.
This overhead is comparatively high because
processing a recovery only requires a handful of symmetric
and public key operations.

\sitpara{Service Provider.} Our service provider host
is a Linux machine with an Intel Xeon E5-2650 CPU clocked at 2.60GHz. 
Our service-provider implementation is roughly 3,800 lines of C/C++ code (excluding tests)
and uses OpenSSL. 

\sitpara{Client.} Our client device is a Google Pixel~4.
Our implementation is roughly 2,300 lines of C/C++ code (excluding tests)
and uses OpenSSL.

\begin{figure}[t]
\begin{minipage}{0.42\columnwidth}
\centering
\begingroup
\makeatletter
\begin{pgfpicture}
\pgfpathrectangle{\pgfpointorigin}{\pgfqpoint{1.411952in}{1.500000in}}
\pgfusepath{use as bounding box, clip}
\begin{pgfscope}
\pgfsetbuttcap
\pgfsetmiterjoin
\definecolor{currentfill}{rgb}{1.000000,1.000000,1.000000}
\pgfsetfillcolor{currentfill}
\pgfsetlinewidth{0.000000pt}
\definecolor{currentstroke}{rgb}{1.000000,1.000000,1.000000}
\pgfsetstrokecolor{currentstroke}
\pgfsetdash{}{0pt}
\pgfpathmoveto{\pgfqpoint{0.000000in}{0.000000in}}
\pgfpathlineto{\pgfqpoint{1.411952in}{0.000000in}}
\pgfpathlineto{\pgfqpoint{1.411952in}{1.500000in}}
\pgfpathlineto{\pgfqpoint{0.000000in}{1.500000in}}
\pgfpathclose
\pgfusepath{fill}
\end{pgfscope}
\begin{pgfscope}
\pgfsetbuttcap
\pgfsetmiterjoin
\definecolor{currentfill}{rgb}{1.000000,1.000000,1.000000}
\pgfsetfillcolor{currentfill}
\pgfsetlinewidth{0.000000pt}
\definecolor{currentstroke}{rgb}{0.000000,0.000000,0.000000}
\pgfsetstrokecolor{currentstroke}
\pgfsetstrokeopacity{0.000000}
\pgfsetdash{}{0pt}
\pgfpathmoveto{\pgfqpoint{0.314665in}{0.351551in}}
\pgfpathlineto{\pgfqpoint{1.362789in}{0.351551in}}
\pgfpathlineto{\pgfqpoint{1.362789in}{1.500000in}}
\pgfpathlineto{\pgfqpoint{0.314665in}{1.500000in}}
\pgfpathclose
\pgfusepath{fill}
\end{pgfscope}
\begin{pgfscope}
\definecolor{textcolor}{rgb}{0.000000,0.000000,0.000000}
\pgfsetstrokecolor{textcolor}
\pgfsetfillcolor{textcolor}
\pgftext[x=0.314665in,y=0.254329in,,top]{\color{textcolor}\rmfamily\fontsize{8.000000}{9.600000}\selectfont 0}
\end{pgfscope}
\begin{pgfscope}
\definecolor{textcolor}{rgb}{0.000000,0.000000,0.000000}
\pgfsetstrokecolor{textcolor}
\pgfsetfillcolor{textcolor}
\pgftext[x=0.565070in,y=0.254329in,,top]{\color{textcolor}\rmfamily\fontsize{8.000000}{9.600000}\selectfont 2.5K}
\end{pgfscope}
\begin{pgfscope}
\definecolor{textcolor}{rgb}{0.000000,0.000000,0.000000}
\pgfsetstrokecolor{textcolor}
\pgfsetfillcolor{textcolor}
\pgftext[x=0.815475in,y=0.254329in,,top]{\color{textcolor}\rmfamily\fontsize{8.000000}{9.600000}\selectfont 5K}
\end{pgfscope}
\begin{pgfscope}
\definecolor{textcolor}{rgb}{0.000000,0.000000,0.000000}
\pgfsetstrokecolor{textcolor}
\pgfsetfillcolor{textcolor}
\pgftext[x=1.065880in,y=0.254329in,,top]{\color{textcolor}\rmfamily\fontsize{8.000000}{9.600000}\selectfont 7.5K}
\end{pgfscope}
\begin{pgfscope}
\definecolor{textcolor}{rgb}{0.000000,0.000000,0.000000}
\pgfsetstrokecolor{textcolor}
\pgfsetfillcolor{textcolor}
\pgftext[x=1.316285in,y=0.254329in,,top]{\color{textcolor}\rmfamily\fontsize{8.000000}{9.600000}\selectfont 10K}
\end{pgfscope}
\begin{pgfscope}
\definecolor{textcolor}{rgb}{0.000000,0.000000,0.000000}
\pgfsetstrokecolor{textcolor}
\pgfsetfillcolor{textcolor}
\pgftext[x=0.838727in,y=0.099387in,,top]{\color{textcolor}\rmfamily\fontsize{8.000000}{9.600000}\selectfont Data center size (\(\displaystyle N\))}
\end{pgfscope}
\begin{pgfscope}
\pgfpathrectangle{\pgfqpoint{0.314665in}{0.351551in}}{\pgfqpoint{1.048125in}{1.148449in}}
\pgfusepath{clip}
\pgfsetrectcap
\pgfsetroundjoin
\pgfsetlinewidth{0.803000pt}
\definecolor{currentstroke}{rgb}{0.900000,0.900000,0.900000}
\pgfsetstrokecolor{currentstroke}
\pgfsetdash{}{0pt}
\pgfpathmoveto{\pgfqpoint{0.314665in}{0.521045in}}
\pgfpathlineto{\pgfqpoint{1.362789in}{0.521045in}}
\pgfusepath{stroke}
\end{pgfscope}
\begin{pgfscope}
\definecolor{textcolor}{rgb}{0.000000,0.000000,0.000000}
\pgfsetstrokecolor{textcolor}
\pgfsetfillcolor{textcolor}
\pgftext[x=0.154942in, y=0.483379in, left, base]{\color{textcolor}\rmfamily\fontsize{8.000000}{9.600000}\selectfont \(\displaystyle {20}\)}
\end{pgfscope}
\begin{pgfscope}
\pgfpathrectangle{\pgfqpoint{0.314665in}{0.351551in}}{\pgfqpoint{1.048125in}{1.148449in}}
\pgfusepath{clip}
\pgfsetrectcap
\pgfsetroundjoin
\pgfsetlinewidth{0.803000pt}
\definecolor{currentstroke}{rgb}{0.900000,0.900000,0.900000}
\pgfsetstrokecolor{currentstroke}
\pgfsetdash{}{0pt}
\pgfpathmoveto{\pgfqpoint{0.314665in}{0.804808in}}
\pgfpathlineto{\pgfqpoint{1.362789in}{0.804808in}}
\pgfusepath{stroke}
\end{pgfscope}
\begin{pgfscope}
\definecolor{textcolor}{rgb}{0.000000,0.000000,0.000000}
\pgfsetstrokecolor{textcolor}
\pgfsetfillcolor{textcolor}
\pgftext[x=0.154942in, y=0.767142in, left, base]{\color{textcolor}\rmfamily\fontsize{8.000000}{9.600000}\selectfont \(\displaystyle {30}\)}
\end{pgfscope}
\begin{pgfscope}
\pgfpathrectangle{\pgfqpoint{0.314665in}{0.351551in}}{\pgfqpoint{1.048125in}{1.148449in}}
\pgfusepath{clip}
\pgfsetrectcap
\pgfsetroundjoin
\pgfsetlinewidth{0.803000pt}
\definecolor{currentstroke}{rgb}{0.900000,0.900000,0.900000}
\pgfsetstrokecolor{currentstroke}
\pgfsetdash{}{0pt}
\pgfpathmoveto{\pgfqpoint{0.314665in}{1.088571in}}
\pgfpathlineto{\pgfqpoint{1.362789in}{1.088571in}}
\pgfusepath{stroke}
\end{pgfscope}
\begin{pgfscope}
\definecolor{textcolor}{rgb}{0.000000,0.000000,0.000000}
\pgfsetstrokecolor{textcolor}
\pgfsetfillcolor{textcolor}
\pgftext[x=0.154942in, y=1.050905in, left, base]{\color{textcolor}\rmfamily\fontsize{8.000000}{9.600000}\selectfont \(\displaystyle {40}\)}
\end{pgfscope}
\begin{pgfscope}
\pgfpathrectangle{\pgfqpoint{0.314665in}{0.351551in}}{\pgfqpoint{1.048125in}{1.148449in}}
\pgfusepath{clip}
\pgfsetrectcap
\pgfsetroundjoin
\pgfsetlinewidth{0.803000pt}
\definecolor{currentstroke}{rgb}{0.900000,0.900000,0.900000}
\pgfsetstrokecolor{currentstroke}
\pgfsetdash{}{0pt}
\pgfpathmoveto{\pgfqpoint{0.314665in}{1.372334in}}
\pgfpathlineto{\pgfqpoint{1.362789in}{1.372334in}}
\pgfusepath{stroke}
\end{pgfscope}
\begin{pgfscope}
\definecolor{textcolor}{rgb}{0.000000,0.000000,0.000000}
\pgfsetstrokecolor{textcolor}
\pgfsetfillcolor{textcolor}
\pgftext[x=0.154942in, y=1.334668in, left, base]{\color{textcolor}\rmfamily\fontsize{8.000000}{9.600000}\selectfont \(\displaystyle {50}\)}
\end{pgfscope}
\begin{pgfscope}
\definecolor{textcolor}{rgb}{0.000000,0.000000,0.000000}
\pgfsetstrokecolor{textcolor}
\pgfsetfillcolor{textcolor}
\pgftext[x=0.099387in,y=0.925776in,,bottom,rotate=90.000000]{\color{textcolor}\rmfamily\fontsize{8.000000}{9.600000}\selectfont Time to audit log (s)}
\end{pgfscope}
\begin{pgfscope}
\pgfpathrectangle{\pgfqpoint{0.314665in}{0.351551in}}{\pgfqpoint{1.048125in}{1.148449in}}
\pgfusepath{clip}
\pgfsetrectcap
\pgfsetroundjoin
\pgfsetlinewidth{1.505625pt}
\definecolor{currentstroke}{rgb}{0.458824,0.439216,0.701961}
\pgfsetstrokecolor{currentstroke}
\pgfsetdash{}{0pt}
\pgfpathmoveto{\pgfqpoint{1.316285in}{0.403754in}}
\pgfpathlineto{\pgfqpoint{0.815475in}{0.470023in}}
\pgfpathlineto{\pgfqpoint{0.648538in}{0.593161in}}
\pgfpathlineto{\pgfqpoint{0.565070in}{0.659422in}}
\pgfpathlineto{\pgfqpoint{0.514989in}{0.724942in}}
\pgfpathlineto{\pgfqpoint{0.481601in}{0.791725in}}
\pgfpathlineto{\pgfqpoint{0.457753in}{0.913671in}}
\pgfpathlineto{\pgfqpoint{0.439867in}{0.980036in}}
\pgfpathlineto{\pgfqpoint{0.425956in}{1.046298in}}
\pgfpathlineto{\pgfqpoint{0.414827in}{1.126324in}}
\pgfpathlineto{\pgfqpoint{0.405721in}{1.248724in}}
\pgfpathlineto{\pgfqpoint{0.398133in}{1.314438in}}
\pgfpathlineto{\pgfqpoint{0.391712in}{1.379222in}}
\pgfpathlineto{\pgfqpoint{0.386209in}{1.447798in}}
\pgfusepath{stroke}
\end{pgfscope}
\begin{pgfscope}
\pgfpathrectangle{\pgfqpoint{0.314665in}{0.351551in}}{\pgfqpoint{1.048125in}{1.148449in}}
\pgfusepath{clip}
\pgfsetbuttcap
\pgfsetroundjoin
\definecolor{currentfill}{rgb}{0.458824,0.439216,0.701961}
\pgfsetfillcolor{currentfill}
\pgfsetlinewidth{0.000000pt}
\definecolor{currentstroke}{rgb}{0.458824,0.439216,0.701961}
\pgfsetstrokecolor{currentstroke}
\pgfsetdash{}{0pt}
\pgfsys@defobject{currentmarker}{\pgfqpoint{-0.020833in}{-0.020833in}}{\pgfqpoint{0.020833in}{0.020833in}}{
\pgfpathmoveto{\pgfqpoint{0.000000in}{-0.020833in}}
\pgfpathcurveto{\pgfqpoint{0.005525in}{-0.020833in}}{\pgfqpoint{0.010825in}{-0.018638in}}{\pgfqpoint{0.014731in}{-0.014731in}}
\pgfpathcurveto{\pgfqpoint{0.018638in}{-0.010825in}}{\pgfqpoint{0.020833in}{-0.005525in}}{\pgfqpoint{0.020833in}{0.000000in}}
\pgfpathcurveto{\pgfqpoint{0.020833in}{0.005525in}}{\pgfqpoint{0.018638in}{0.010825in}}{\pgfqpoint{0.014731in}{0.014731in}}
\pgfpathcurveto{\pgfqpoint{0.010825in}{0.018638in}}{\pgfqpoint{0.005525in}{0.020833in}}{\pgfqpoint{0.000000in}{0.020833in}}
\pgfpathcurveto{\pgfqpoint{-0.005525in}{0.020833in}}{\pgfqpoint{-0.010825in}{0.018638in}}{\pgfqpoint{-0.014731in}{0.014731in}}
\pgfpathcurveto{\pgfqpoint{-0.018638in}{0.010825in}}{\pgfqpoint{-0.020833in}{0.005525in}}{\pgfqpoint{-0.020833in}{0.000000in}}
\pgfpathcurveto{\pgfqpoint{-0.020833in}{-0.005525in}}{\pgfqpoint{-0.018638in}{-0.010825in}}{\pgfqpoint{-0.014731in}{-0.014731in}}
\pgfpathcurveto{\pgfqpoint{-0.010825in}{-0.018638in}}{\pgfqpoint{-0.005525in}{-0.020833in}}{\pgfqpoint{0.000000in}{-0.020833in}}
\pgfpathclose
\pgfusepath{fill}
}
\begin{pgfscope}
\pgfsys@transformshift{1.316285in}{0.403754in}
\pgfsys@useobject{currentmarker}{}
\end{pgfscope}
\begin{pgfscope}
\pgfsys@transformshift{0.815475in}{0.470023in}
\pgfsys@useobject{currentmarker}{}
\end{pgfscope}
\begin{pgfscope}
\pgfsys@transformshift{0.648538in}{0.593161in}
\pgfsys@useobject{currentmarker}{}
\end{pgfscope}
\begin{pgfscope}
\pgfsys@transformshift{0.565070in}{0.659422in}
\pgfsys@useobject{currentmarker}{}
\end{pgfscope}
\begin{pgfscope}
\pgfsys@transformshift{0.514989in}{0.724942in}
\pgfsys@useobject{currentmarker}{}
\end{pgfscope}
\begin{pgfscope}
\pgfsys@transformshift{0.481601in}{0.791725in}
\pgfsys@useobject{currentmarker}{}
\end{pgfscope}
\begin{pgfscope}
\pgfsys@transformshift{0.457753in}{0.913671in}
\pgfsys@useobject{currentmarker}{}
\end{pgfscope}
\begin{pgfscope}
\pgfsys@transformshift{0.439867in}{0.980036in}
\pgfsys@useobject{currentmarker}{}
\end{pgfscope}
\begin{pgfscope}
\pgfsys@transformshift{0.425956in}{1.046298in}
\pgfsys@useobject{currentmarker}{}
\end{pgfscope}
\begin{pgfscope}
\pgfsys@transformshift{0.414827in}{1.126324in}
\pgfsys@useobject{currentmarker}{}
\end{pgfscope}
\begin{pgfscope}
\pgfsys@transformshift{0.405721in}{1.248724in}
\pgfsys@useobject{currentmarker}{}
\end{pgfscope}
\begin{pgfscope}
\pgfsys@transformshift{0.398133in}{1.314438in}
\pgfsys@useobject{currentmarker}{}
\end{pgfscope}
\begin{pgfscope}
\pgfsys@transformshift{0.391712in}{1.379222in}
\pgfsys@useobject{currentmarker}{}
\end{pgfscope}
\begin{pgfscope}
\pgfsys@transformshift{0.386209in}{1.447798in}
\pgfsys@useobject{currentmarker}{}
\end{pgfscope}
\end{pgfscope}
\begin{pgfscope}
\pgfsetrectcap
\pgfsetmiterjoin
\pgfsetlinewidth{0.501875pt}
\definecolor{currentstroke}{rgb}{0.000000,0.000000,0.000000}
\pgfsetstrokecolor{currentstroke}
\pgfsetdash{}{0pt}
\pgfpathmoveto{\pgfqpoint{0.314665in}{0.351551in}}
\pgfpathlineto{\pgfqpoint{0.314665in}{1.500000in}}
\pgfusepath{stroke}
\end{pgfscope}
\begin{pgfscope}
\pgfsetrectcap
\pgfsetmiterjoin
\pgfsetlinewidth{0.501875pt}
\definecolor{currentstroke}{rgb}{0.000000,0.000000,0.000000}
\pgfsetstrokecolor{currentstroke}
\pgfsetdash{}{0pt}
\pgfpathmoveto{\pgfqpoint{0.314665in}{0.351551in}}
\pgfpathlineto{\pgfqpoint{1.362789in}{0.351551in}}
\pgfusepath{stroke}
\end{pgfscope}
\end{pgfpicture}
\makeatother
\endgroup
     \caption{Log-audit time after inserting 10K recovery attempts
    for a log with roughly 100M recovery attempts. We only measure
    the auditing time for 100 HSMs as we only had 100 SoloKeys; 
    we distribute the work as if there were $N$ HSMs.}
\label{fig:logeval}
\end{minipage}~~~~~~~
\begin{minipage}{0.56\columnwidth}
\centering
\begingroup
\makeatletter
\begin{pgfpicture}
\pgfpathrectangle{\pgfpointorigin}{\pgfqpoint{1.758660in}{1.662663in}}
\pgfusepath{use as bounding box, clip}
\begin{pgfscope}
\pgfsetbuttcap
\pgfsetmiterjoin
\definecolor{currentfill}{rgb}{1.000000,1.000000,1.000000}
\pgfsetfillcolor{currentfill}
\pgfsetlinewidth{0.000000pt}
\definecolor{currentstroke}{rgb}{1.000000,1.000000,1.000000}
\pgfsetstrokecolor{currentstroke}
\pgfsetdash{}{0pt}
\pgfpathmoveto{\pgfqpoint{0.000000in}{0.000000in}}
\pgfpathlineto{\pgfqpoint{1.758660in}{0.000000in}}
\pgfpathlineto{\pgfqpoint{1.758660in}{1.662663in}}
\pgfpathlineto{\pgfqpoint{0.000000in}{1.662663in}}
\pgfpathclose
\pgfusepath{fill}
\end{pgfscope}
\begin{pgfscope}
\pgfsetbuttcap
\pgfsetmiterjoin
\definecolor{currentfill}{rgb}{1.000000,1.000000,1.000000}
\pgfsetfillcolor{currentfill}
\pgfsetlinewidth{0.000000pt}
\definecolor{currentstroke}{rgb}{0.000000,0.000000,0.000000}
\pgfsetstrokecolor{currentstroke}
\pgfsetstrokeopacity{0.000000}
\pgfsetdash{}{0pt}
\pgfpathmoveto{\pgfqpoint{0.400997in}{0.521560in}}
\pgfpathlineto{\pgfqpoint{1.755556in}{0.521560in}}
\pgfpathlineto{\pgfqpoint{1.755556in}{1.562334in}}
\pgfpathlineto{\pgfqpoint{0.400997in}{1.562334in}}
\pgfpathclose
\pgfusepath{fill}
\end{pgfscope}
\begin{pgfscope}
\pgfpathrectangle{\pgfqpoint{0.400997in}{0.521560in}}{\pgfqpoint{1.354559in}{1.040774in}}
\pgfusepath{clip}
\pgfsetbuttcap
\pgfsetroundjoin
\definecolor{currentfill}{rgb}{0.905882,0.160784,0.541176}
\pgfsetfillcolor{currentfill}
\pgfsetlinewidth{0.000000pt}
\definecolor{currentstroke}{rgb}{0.000000,0.000000,0.000000}
\pgfsetstrokecolor{currentstroke}
\pgfsetdash{}{0pt}
\pgfpathmoveto{\pgfqpoint{0.462568in}{0.668736in}}
\pgfpathlineto{\pgfqpoint{0.462568in}{0.521560in}}
\pgfpathlineto{\pgfqpoint{0.616495in}{0.521560in}}
\pgfpathlineto{\pgfqpoint{0.770422in}{0.521560in}}
\pgfpathlineto{\pgfqpoint{0.924349in}{0.521560in}}
\pgfpathlineto{\pgfqpoint{1.078276in}{0.521560in}}
\pgfpathlineto{\pgfqpoint{1.232203in}{0.521560in}}
\pgfpathlineto{\pgfqpoint{1.386130in}{0.521560in}}
\pgfpathlineto{\pgfqpoint{1.540058in}{0.521560in}}
\pgfpathlineto{\pgfqpoint{1.693985in}{0.521560in}}
\pgfpathlineto{\pgfqpoint{1.693985in}{0.669421in}}
\pgfpathlineto{\pgfqpoint{1.693985in}{0.669421in}}
\pgfpathlineto{\pgfqpoint{1.540058in}{0.667202in}}
\pgfpathlineto{\pgfqpoint{1.386130in}{0.670199in}}
\pgfpathlineto{\pgfqpoint{1.232203in}{0.667150in}}
\pgfpathlineto{\pgfqpoint{1.078276in}{0.668056in}}
\pgfpathlineto{\pgfqpoint{0.924349in}{0.669060in}}
\pgfpathlineto{\pgfqpoint{0.770422in}{0.669818in}}
\pgfpathlineto{\pgfqpoint{0.616495in}{0.668701in}}
\pgfpathlineto{\pgfqpoint{0.462568in}{0.668736in}}
\pgfpathclose
\pgfusepath{fill}
\end{pgfscope}
\begin{pgfscope}
\pgfpathrectangle{\pgfqpoint{0.400997in}{0.521560in}}{\pgfqpoint{1.354559in}{1.040774in}}
\pgfusepath{clip}
\pgfsetbuttcap
\pgfsetroundjoin
\definecolor{currentfill}{rgb}{0.105882,0.619608,0.466667}
\pgfsetfillcolor{currentfill}
\pgfsetlinewidth{0.000000pt}
\definecolor{currentstroke}{rgb}{0.000000,0.000000,0.000000}
\pgfsetstrokecolor{currentstroke}
\pgfsetdash{}{0pt}
\pgfpathmoveto{\pgfqpoint{0.462568in}{0.779416in}}
\pgfpathlineto{\pgfqpoint{0.462568in}{0.668736in}}
\pgfpathlineto{\pgfqpoint{0.616495in}{0.668701in}}
\pgfpathlineto{\pgfqpoint{0.770422in}{0.669818in}}
\pgfpathlineto{\pgfqpoint{0.924349in}{0.669060in}}
\pgfpathlineto{\pgfqpoint{1.078276in}{0.668056in}}
\pgfpathlineto{\pgfqpoint{1.232203in}{0.667150in}}
\pgfpathlineto{\pgfqpoint{1.386130in}{0.670199in}}
\pgfpathlineto{\pgfqpoint{1.540058in}{0.667202in}}
\pgfpathlineto{\pgfqpoint{1.693985in}{0.669421in}}
\pgfpathlineto{\pgfqpoint{1.693985in}{1.126502in}}
\pgfpathlineto{\pgfqpoint{1.693985in}{1.126502in}}
\pgfpathlineto{\pgfqpoint{1.540058in}{1.081867in}}
\pgfpathlineto{\pgfqpoint{1.386130in}{1.041307in}}
\pgfpathlineto{\pgfqpoint{1.232203in}{0.996869in}}
\pgfpathlineto{\pgfqpoint{1.078276in}{0.952890in}}
\pgfpathlineto{\pgfqpoint{0.924349in}{0.911731in}}
\pgfpathlineto{\pgfqpoint{0.770422in}{0.868957in}}
\pgfpathlineto{\pgfqpoint{0.616495in}{0.823760in}}
\pgfpathlineto{\pgfqpoint{0.462568in}{0.779416in}}
\pgfpathclose
\pgfusepath{fill}
\end{pgfscope}
\begin{pgfscope}
\pgfpathrectangle{\pgfqpoint{0.400997in}{0.521560in}}{\pgfqpoint{1.354559in}{1.040774in}}
\pgfusepath{clip}
\pgfsetbuttcap
\pgfsetroundjoin
\definecolor{currentfill}{rgb}{0.458824,0.439216,0.701961}
\pgfsetfillcolor{currentfill}
\pgfsetlinewidth{0.000000pt}
\definecolor{currentstroke}{rgb}{0.000000,0.000000,0.000000}
\pgfsetstrokecolor{currentstroke}
\pgfsetdash{}{0pt}
\pgfpathmoveto{\pgfqpoint{0.462568in}{0.793961in}}
\pgfpathlineto{\pgfqpoint{0.462568in}{0.779416in}}
\pgfpathlineto{\pgfqpoint{0.616495in}{0.823760in}}
\pgfpathlineto{\pgfqpoint{0.770422in}{0.868957in}}
\pgfpathlineto{\pgfqpoint{0.924349in}{0.911731in}}
\pgfpathlineto{\pgfqpoint{1.078276in}{0.952890in}}
\pgfpathlineto{\pgfqpoint{1.232203in}{0.996869in}}
\pgfpathlineto{\pgfqpoint{1.386130in}{1.041307in}}
\pgfpathlineto{\pgfqpoint{1.540058in}{1.081867in}}
\pgfpathlineto{\pgfqpoint{1.693985in}{1.126502in}}
\pgfpathlineto{\pgfqpoint{1.693985in}{1.176435in}}
\pgfpathlineto{\pgfqpoint{1.693985in}{1.176435in}}
\pgfpathlineto{\pgfqpoint{1.540058in}{1.126761in}}
\pgfpathlineto{\pgfqpoint{1.386130in}{1.082602in}}
\pgfpathlineto{\pgfqpoint{1.232203in}{1.034023in}}
\pgfpathlineto{\pgfqpoint{1.078276in}{0.985221in}}
\pgfpathlineto{\pgfqpoint{0.924349in}{0.937948in}}
\pgfpathlineto{\pgfqpoint{0.770422in}{0.891487in}}
\pgfpathlineto{\pgfqpoint{0.616495in}{0.842360in}}
\pgfpathlineto{\pgfqpoint{0.462568in}{0.793961in}}
\pgfpathclose
\pgfusepath{fill}
\end{pgfscope}
\begin{pgfscope}
\pgfsetrectcap
\pgfsetmiterjoin
\pgfsetlinewidth{0.501875pt}
\definecolor{currentstroke}{rgb}{0.000000,0.000000,0.000000}
\pgfsetstrokecolor{currentstroke}
\pgfsetdash{}{0pt}
\pgfpathmoveto{\pgfqpoint{0.400997in}{0.521560in}}
\pgfpathlineto{\pgfqpoint{0.400997in}{1.562334in}}
\pgfusepath{stroke}
\end{pgfscope}
\begin{pgfscope}
\pgfsetrectcap
\pgfsetmiterjoin
\pgfsetlinewidth{0.501875pt}
\definecolor{currentstroke}{rgb}{0.000000,0.000000,0.000000}
\pgfsetstrokecolor{currentstroke}
\pgfsetdash{}{0pt}
\pgfpathmoveto{\pgfqpoint{0.400997in}{0.521560in}}
\pgfpathlineto{\pgfqpoint{1.755556in}{0.521560in}}
\pgfusepath{stroke}
\end{pgfscope}
\begin{pgfscope}
\definecolor{textcolor}{rgb}{0.000000,0.000000,0.000000}
\pgfsetstrokecolor{textcolor}
\pgfsetfillcolor{textcolor}
\pgftext[x=1.611754in,y=0.424338in,,top]{\color{textcolor}\rmfamily\fontsize{8.000000}{9.600000}\selectfont 100K}
\end{pgfscope}
\begin{pgfscope}
\definecolor{textcolor}{rgb}{0.000000,0.000000,0.000000}
\pgfsetstrokecolor{textcolor}
\pgfsetfillcolor{textcolor}
\pgftext[x=1.356087in,y=0.424338in,,top]{\color{textcolor}\rmfamily\fontsize{8.000000}{9.600000}\selectfont 10K}
\end{pgfscope}
\begin{pgfscope}
\definecolor{textcolor}{rgb}{0.000000,0.000000,0.000000}
\pgfsetstrokecolor{textcolor}
\pgfsetfillcolor{textcolor}
\pgftext[x=1.100419in,y=0.424338in,,top]{\color{textcolor}\rmfamily\fontsize{8.000000}{9.600000}\selectfont 1K}
\end{pgfscope}
\begin{pgfscope}
\definecolor{textcolor}{rgb}{0.000000,0.000000,0.000000}
\pgfsetstrokecolor{textcolor}
\pgfsetfillcolor{textcolor}
\pgftext[x=0.844752in,y=0.424338in,,top]{\color{textcolor}\rmfamily\fontsize{8.000000}{9.600000}\selectfont 100}
\end{pgfscope}
\begin{pgfscope}
\definecolor{textcolor}{rgb}{0.000000,0.000000,0.000000}
\pgfsetstrokecolor{textcolor}
\pgfsetfillcolor{textcolor}
\pgftext[x=0.589085in,y=0.424338in,,top]{\color{textcolor}\rmfamily\fontsize{8.000000}{9.600000}\selectfont 10}
\end{pgfscope}
\begin{pgfscope}
\definecolor{textcolor}{rgb}{0.000000,0.000000,0.000000}
\pgfsetstrokecolor{textcolor}
\pgfsetfillcolor{textcolor}
\pgftext[x=0.397892in, y=0.138508in, left, base]{\color{textcolor}\rmfamily\fontsize{8.000000}{9.600000}\selectfont Recoveries before key rotation}
\end{pgfscope}
\begin{pgfscope}
\definecolor{textcolor}{rgb}{0.000000,0.000000,0.000000}
\pgfsetstrokecolor{textcolor}
\pgfsetfillcolor{textcolor}
\pgftext[x=0.774037in, y=0.024055in, left, base]{\color{textcolor}\rmfamily\fontsize{8.000000}{9.600000}\selectfont  {\fontsize{7}{4} \selectfont Secret key size}}
\end{pgfscope}
\begin{pgfscope}
\pgfpathrectangle{\pgfqpoint{0.400997in}{0.521560in}}{\pgfqpoint{1.354559in}{1.040774in}}
\pgfusepath{clip}
\pgfsetbuttcap
\pgfsetroundjoin
\pgfsetlinewidth{0.803000pt}
\definecolor{currentstroke}{rgb}{0.900000,0.900000,0.900000}
\pgfsetstrokecolor{currentstroke}
\pgfsetdash{{0.800000pt}{1.320000pt}}{0.000000pt}
\pgfpathmoveto{\pgfqpoint{0.400997in}{0.521560in}}
\pgfpathlineto{\pgfqpoint{1.755556in}{0.521560in}}
\pgfusepath{stroke}
\end{pgfscope}
\begin{pgfscope}
\definecolor{textcolor}{rgb}{0.000000,0.000000,0.000000}
\pgfsetstrokecolor{textcolor}
\pgfsetfillcolor{textcolor}
\pgftext[x=0.154942in, y=0.483894in, left, base]{\color{textcolor}\rmfamily\fontsize{8.000000}{9.600000}\selectfont \(\displaystyle 0.00\)}
\end{pgfscope}
\begin{pgfscope}
\pgfpathrectangle{\pgfqpoint{0.400997in}{0.521560in}}{\pgfqpoint{1.354559in}{1.040774in}}
\pgfusepath{clip}
\pgfsetbuttcap
\pgfsetroundjoin
\pgfsetlinewidth{0.803000pt}
\definecolor{currentstroke}{rgb}{0.900000,0.900000,0.900000}
\pgfsetstrokecolor{currentstroke}
\pgfsetdash{{0.800000pt}{1.320000pt}}{0.000000pt}
\pgfpathmoveto{\pgfqpoint{0.400997in}{0.781754in}}
\pgfpathlineto{\pgfqpoint{1.755556in}{0.781754in}}
\pgfusepath{stroke}
\end{pgfscope}
\begin{pgfscope}
\definecolor{textcolor}{rgb}{0.000000,0.000000,0.000000}
\pgfsetstrokecolor{textcolor}
\pgfsetfillcolor{textcolor}
\pgftext[x=0.154942in, y=0.744088in, left, base]{\color{textcolor}\rmfamily\fontsize{8.000000}{9.600000}\selectfont \(\displaystyle 0.25\)}
\end{pgfscope}
\begin{pgfscope}
\pgfpathrectangle{\pgfqpoint{0.400997in}{0.521560in}}{\pgfqpoint{1.354559in}{1.040774in}}
\pgfusepath{clip}
\pgfsetbuttcap
\pgfsetroundjoin
\pgfsetlinewidth{0.803000pt}
\definecolor{currentstroke}{rgb}{0.900000,0.900000,0.900000}
\pgfsetstrokecolor{currentstroke}
\pgfsetdash{{0.800000pt}{1.320000pt}}{0.000000pt}
\pgfpathmoveto{\pgfqpoint{0.400997in}{1.041947in}}
\pgfpathlineto{\pgfqpoint{1.755556in}{1.041947in}}
\pgfusepath{stroke}
\end{pgfscope}
\begin{pgfscope}
\definecolor{textcolor}{rgb}{0.000000,0.000000,0.000000}
\pgfsetstrokecolor{textcolor}
\pgfsetfillcolor{textcolor}
\pgftext[x=0.154942in, y=1.004281in, left, base]{\color{textcolor}\rmfamily\fontsize{8.000000}{9.600000}\selectfont \(\displaystyle 0.50\)}
\end{pgfscope}
\begin{pgfscope}
\pgfpathrectangle{\pgfqpoint{0.400997in}{0.521560in}}{\pgfqpoint{1.354559in}{1.040774in}}
\pgfusepath{clip}
\pgfsetbuttcap
\pgfsetroundjoin
\pgfsetlinewidth{0.803000pt}
\definecolor{currentstroke}{rgb}{0.900000,0.900000,0.900000}
\pgfsetstrokecolor{currentstroke}
\pgfsetdash{{0.800000pt}{1.320000pt}}{0.000000pt}
\pgfpathmoveto{\pgfqpoint{0.400997in}{1.302140in}}
\pgfpathlineto{\pgfqpoint{1.755556in}{1.302140in}}
\pgfusepath{stroke}
\end{pgfscope}
\begin{pgfscope}
\definecolor{textcolor}{rgb}{0.000000,0.000000,0.000000}
\pgfsetstrokecolor{textcolor}
\pgfsetfillcolor{textcolor}
\pgftext[x=0.154942in, y=1.264474in, left, base]{\color{textcolor}\rmfamily\fontsize{8.000000}{9.600000}\selectfont \(\displaystyle 0.75\)}
\end{pgfscope}
\begin{pgfscope}
\pgfpathrectangle{\pgfqpoint{0.400997in}{0.521560in}}{\pgfqpoint{1.354559in}{1.040774in}}
\pgfusepath{clip}
\pgfsetbuttcap
\pgfsetroundjoin
\pgfsetlinewidth{0.803000pt}
\definecolor{currentstroke}{rgb}{0.900000,0.900000,0.900000}
\pgfsetstrokecolor{currentstroke}
\pgfsetdash{{0.800000pt}{1.320000pt}}{0.000000pt}
\pgfpathmoveto{\pgfqpoint{0.400997in}{1.562334in}}
\pgfpathlineto{\pgfqpoint{1.755556in}{1.562334in}}
\pgfusepath{stroke}
\end{pgfscope}
\begin{pgfscope}
\definecolor{textcolor}{rgb}{0.000000,0.000000,0.000000}
\pgfsetstrokecolor{textcolor}
\pgfsetfillcolor{textcolor}
\pgftext[x=0.154942in, y=1.524668in, left, base]{\color{textcolor}\rmfamily\fontsize{8.000000}{9.600000}\selectfont \(\displaystyle 1.00\)}
\end{pgfscope}
\begin{pgfscope}
\definecolor{textcolor}{rgb}{0.000000,0.000000,0.000000}
\pgfsetstrokecolor{textcolor}
\pgfsetfillcolor{textcolor}
\pgftext[x=0.099387in,y=1.041947in,,bottom,rotate=90.000000]{\color{textcolor}\rmfamily\fontsize{8.000000}{9.600000}\selectfont Decrypt + Puncture time (s)}
\end{pgfscope}
\begin{pgfscope}
\definecolor{textcolor}{rgb}{0.000000,0.000000,0.000000}
\pgfsetstrokecolor{textcolor}
\pgfsetfillcolor{textcolor}
\pgftext[x=1.611754in,y=0.261367in,,base]{\color{textcolor}\rmfamily\fontsize{6.000000}{7.200000}\selectfont 30MB}
\end{pgfscope}
\begin{pgfscope}
\definecolor{textcolor}{rgb}{0.000000,0.000000,0.000000}
\pgfsetstrokecolor{textcolor}
\pgfsetfillcolor{textcolor}
\pgftext[x=1.356087in,y=0.261367in,,base]{\color{textcolor}\rmfamily\fontsize{6.000000}{7.200000}\selectfont 3MB}
\end{pgfscope}
\begin{pgfscope}
\definecolor{textcolor}{rgb}{0.000000,0.000000,0.000000}
\pgfsetstrokecolor{textcolor}
\pgfsetfillcolor{textcolor}
\pgftext[x=1.100419in,y=0.261367in,,base]{\color{textcolor}\rmfamily\fontsize{6.000000}{7.200000}\selectfont 300KB}
\end{pgfscope}
\begin{pgfscope}
\definecolor{textcolor}{rgb}{0.000000,0.000000,0.000000}
\pgfsetstrokecolor{textcolor}
\pgfsetfillcolor{textcolor}
\pgftext[x=0.844752in,y=0.261367in,,base]{\color{textcolor}\rmfamily\fontsize{6.000000}{7.200000}\selectfont 30KB}
\end{pgfscope}
\begin{pgfscope}
\definecolor{textcolor}{rgb}{0.000000,0.000000,0.000000}
\pgfsetstrokecolor{textcolor}
\pgfsetfillcolor{textcolor}
\pgftext[x=0.589085in,y=0.261367in,,base]{\color{textcolor}\rmfamily\fontsize{6.000000}{7.200000}\selectfont 3KB}
\end{pgfscope}
\begin{pgfscope}
\pgfsetbuttcap
\pgfsetmiterjoin
\definecolor{currentfill}{rgb}{1.000000,1.000000,1.000000}
\pgfsetfillcolor{currentfill}
\pgfsetfillopacity{0.800000}
\pgfsetlinewidth{1.003750pt}
\definecolor{currentstroke}{rgb}{0.800000,0.800000,0.800000}
\pgfsetstrokecolor{currentstroke}
\pgfsetstrokeopacity{0.800000}
\pgfsetdash{}{0pt}
\pgfpathmoveto{\pgfqpoint{0.555897in}{1.302140in}}
\pgfpathlineto{\pgfqpoint{1.615320in}{1.302140in}}
\pgfpathquadraticcurveto{\pgfqpoint{1.634765in}{1.302140in}}{\pgfqpoint{1.634765in}{1.321585in}}
\pgfpathlineto{\pgfqpoint{1.634765in}{1.627784in}}
\pgfpathquadraticcurveto{\pgfqpoint{1.634765in}{1.647228in}}{\pgfqpoint{1.615320in}{1.647228in}}
\pgfpathlineto{\pgfqpoint{0.555897in}{1.647228in}}
\pgfpathquadraticcurveto{\pgfqpoint{0.536453in}{1.647228in}}{\pgfqpoint{0.536453in}{1.627784in}}
\pgfpathlineto{\pgfqpoint{0.536453in}{1.321585in}}
\pgfpathquadraticcurveto{\pgfqpoint{0.536453in}{1.302140in}}{\pgfqpoint{0.555897in}{1.302140in}}
\pgfpathclose
\pgfusepath{stroke,fill}
\end{pgfscope}
\begin{pgfscope}
\pgfsetbuttcap
\pgfsetmiterjoin
\definecolor{currentfill}{rgb}{0.458824,0.439216,0.701961}
\pgfsetfillcolor{currentfill}
\pgfsetlinewidth{0.000000pt}
\definecolor{currentstroke}{rgb}{0.000000,0.000000,0.000000}
\pgfsetstrokecolor{currentstroke}
\pgfsetstrokeopacity{0.000000}
\pgfsetdash{}{0pt}
\pgfpathmoveto{\pgfqpoint{0.575341in}{1.540284in}}
\pgfpathlineto{\pgfqpoint{0.769786in}{1.540284in}}
\pgfpathlineto{\pgfqpoint{0.769786in}{1.608339in}}
\pgfpathlineto{\pgfqpoint{0.575341in}{1.608339in}}
\pgfpathclose
\pgfusepath{fill}
\end{pgfscope}
\begin{pgfscope}
\definecolor{textcolor}{rgb}{0.000000,0.000000,0.000000}
\pgfsetstrokecolor{textcolor}
\pgfsetfillcolor{textcolor}
\pgftext[x=0.847564in,y=1.540284in,left,base]{\color{textcolor}\rmfamily\fontsize{7.000000}{8.400000}\selectfont I/O}
\end{pgfscope}
\begin{pgfscope}
\pgfsetbuttcap
\pgfsetmiterjoin
\definecolor{currentfill}{rgb}{0.105882,0.619608,0.466667}
\pgfsetfillcolor{currentfill}
\pgfsetlinewidth{0.000000pt}
\definecolor{currentstroke}{rgb}{0.000000,0.000000,0.000000}
\pgfsetstrokecolor{currentstroke}
\pgfsetstrokeopacity{0.000000}
\pgfsetdash{}{0pt}
\pgfpathmoveto{\pgfqpoint{0.575341in}{1.451180in}}
\pgfpathlineto{\pgfqpoint{0.769786in}{1.451180in}}
\pgfpathlineto{\pgfqpoint{0.769786in}{1.519236in}}
\pgfpathlineto{\pgfqpoint{0.575341in}{1.519236in}}
\pgfpathclose
\pgfusepath{fill}
\end{pgfscope}
\begin{pgfscope}
\definecolor{textcolor}{rgb}{0.000000,0.000000,0.000000}
\pgfsetstrokecolor{textcolor}
\pgfsetfillcolor{textcolor}
\pgftext[x=0.847564in,y=1.451180in,left,base]{\color{textcolor}\rmfamily\fontsize{7.000000}{8.400000}\selectfont Symmetric key ops}
\end{pgfscope}
\begin{pgfscope}
\pgfsetbuttcap
\pgfsetmiterjoin
\definecolor{currentfill}{rgb}{0.905882,0.160784,0.541176}
\pgfsetfillcolor{currentfill}
\pgfsetlinewidth{0.000000pt}
\definecolor{currentstroke}{rgb}{0.000000,0.000000,0.000000}
\pgfsetstrokecolor{currentstroke}
\pgfsetstrokeopacity{0.000000}
\pgfsetdash{}{0pt}
\pgfpathmoveto{\pgfqpoint{0.575341in}{1.362077in}}
\pgfpathlineto{\pgfqpoint{0.769786in}{1.362077in}}
\pgfpathlineto{\pgfqpoint{0.769786in}{1.430133in}}
\pgfpathlineto{\pgfqpoint{0.575341in}{1.430133in}}
\pgfpathclose
\pgfusepath{fill}
\end{pgfscope}
\begin{pgfscope}
\definecolor{textcolor}{rgb}{0.000000,0.000000,0.000000}
\pgfsetstrokecolor{textcolor}
\pgfsetfillcolor{textcolor}
\pgftext[x=0.847564in,y=1.362077in,left,base]{\color{textcolor}\rmfamily\fontsize{7.000000}{8.400000}\selectfont Public key ops}
\end{pgfscope}
\end{pgfpicture}
\makeatother
\endgroup
   \caption{Time to run puncturable encryption on
  a single HSM as the maximum number of allowed punctures (and also secret key size)
  grows. The cost of our outsourced storage scheme dominates, though
  the access time is logarithmic in the size of the key.}
\label{fig:scale-punc}
\end{minipage}
\end{figure}

\subsection{Microbenchmarks}

\paragraph{Log.}
\cref{fig:logeval} demonstrates how increasing the number of HSMs
reduces the log-digest update time. 
We assume that the log is periodically garbage
collected (i.e., approximately once a month), so
that it holds at most a hundred million
recovery attempts at once.
If the HSMs run the log-update process every 10 minutes, each HSM spends approximately 11\% of
its active cycles auditing the log. 
The choice of how often to update the log is a tradeoff between how long users must wait to
recover their backups and the total number of write cycles to non-volatile storage permitted
by the hardware.

\paragraph{Puncturable encryption.}
\cref{fig:scale-punc} shows the cost of performing
a decrypt-and-puncture operation as the number of 
supported punctures increases.
The AES operations associated with our scheme for outsourced
storage with secure deletion (\cref{sec:punc:del}) dominate the cost.

Another way to implement outsourced storage with secure
deletion would be to have the HSM store the outsourced array 
encrypted under a single AES key $k$. To delete an item, the HSM would
read in the entire array, delete the item, and write out the 
entire array encrypted under a fresh key $k'$.
With this approach, a deletion takes 48 minutes
for a 64~MB array (the size of our outsourced secret keys).
Our scheme thus 
improves system throughput by roughly $4,423\times$.

Each HSM punctures its secret key
(\cref{sec:punc:bg}) once after each decryption it performs.
Since our puncturable-encryption scheme only
supports a fixed number of punctures, each HSM
must periodically rotate its encryption keys.
We configure our puncturable-encryption scheme to
allow each HSM to perform roughly $2^{18}$
decryptions before it must rotate its keys (rotation is
triggered when half of the elements of the secret key
have been deleted).
Key rotation is expensive: we estimate 
(based on the number of public-key operations required) 
that key rotation on our HSMs will take roughly 75 hours.
Each HSM spends approximately 139.4 hours 
processing recoveries and maintaining the log between key rotations.
Therefore, each HSM spends roughly 56\% of its
cycles rotating its keys, and each HSM can
process 1,503.9 recoveries per hour on average.

\begin{figure}[t]
\centering
\begingroup
\makeatletter
% [inline block 0: 3 envs, 43734 chars -> data_tex | \begin{pgfpicture} \pgfpathrectangle{\pgfpointorigin}{\pgfqpoint{2.986959in}{0.274269in}}...]

\makeatother
\endgroup
 \caption{Breakdown of time to save (on Android Pixel~4 phone) and recover
    (using our SoloKey cluster). We do not consider the time to encrypt or
    decrypt disk images.}
\label{figs:breakdown}
\end{figure}

\subsection{End-to-end costs}
\label{sec:eval:e2e-costs}

\paragraph{Parameters.}
We estimate that on average, each user will run recovery once a year.  
(There are 3.8B smartphone users~\cite{phone-users} and 1.5B smartphones sold
annually~\cite{phone-sales}, so we expect $1.5/3.8 = 0.39 \ll 1$ recovery/user/year.)
We calculate that a \name deployment of $N=3,100$ HSMs could support
one billion users. 
So, we treat our small cluster of 100 HSMs as a representative
slice of a larger data center of $N=3,100$ HSMs.
Within this larger data center, 
each client shares its recovery keys among a cluster of $n=40$ HSMs.
This choice of $n$ is based on the size of the data center $N$ and
PINs with six decimal digits, and is dictated by \iffull \cref{thm:lhe-sec}.
\else bounds we prove in the full version~\cite{full}. \fi
We set the puncturable encryption keys to allow $2^{20}$ punctures, as we
found this provides a reasonable tradeoff between the time to decrypt and puncture
and the time between key rotations.
With these parameters,
we maintain secrecy if at most an $\fevil = \smash{\tfrac{1}{16}}$
fraction of the HSMs are compromised (or $\fevil \cdot N \approx 194$ total). 
We allow data recovery if at most an $\ffail = \smash{\tfrac{1}{64}}$ 
fraction fail due to benign hardware failures (or $\ffail \cdot N \approx 48$ total).

\paragraph{Baseline.}
We compare against an encrypted-backup system modeled on the ones
that Google and Apple use~\cite{google-key-vault, apple-key-vault}.
To backup, the client selects a fixed cluster of five HSMs and encrypts her
recovery key and a hash of her PIN under the cluster's public key.
At recovery, the client sends the recovery ciphertext and a hash
of her PIN to the cluster, and any HSM in the cluster can decrypt
the ciphertext, check that the PIN hashes match, and return the
recovery key. 
To defeat brute-force PIN-guessing attacks, each
HSM independently limits the number of recovery
attempts allowed on a given ciphertext.

\paragraph{Client overhead.}
Figure~\ref{figs:breakdown} gives the overhead of
generating a backup in \name, compared to the baseline.
The backup process takes 0.37 seconds.
\name recovery ciphertexts are 16.5KB, versus 
130B for our baseline, though
we expect encrypted disk image 
to dominate the ciphertext size.

\name increases the bandwidth cost at the client.
In the baseline scheme, the client downloads five public keys---one
from each of its five chosen HSMs.
In \name, the client must fetch a copy of all HSMs' public keys.
(This way, the service provider does not learn the subset of HSMs
to which the client is encrypting its backup.)
So, when a client first joins the system, the client
must download all these keys (11.5MB).
Whenever an HSM rotates its puncturable-encryption keys, 
clients must download the HSM's new public key.
In a deployment of $N=3,100$ HSMs supporting one billion recoveries 
annually, we estimate that each \name client must download
1.97MB of keying material daily.
Increasing the puncturable
encryption failure probability would decrease client bandwidth, although this
would require decreasing the fraction of HSMs allowed to fail,~$\ffail$.
If a client goes offline for several days, it must
download the rotated public keys for each day it
spent offline (roughly 2MB/day), up to a maximum
of 11.5MB (the size of all HSMs' keys).
However, the client only needs to store the public keys for 
the $n$ HSMs comprising its chosen recovery cluster which amounts to 9.02KB.

\paragraph{Recovery time.}
At a cluster size of $n=40$ HSMs, \cref{fig:scale-breakdown}
shows that the end-to-end recovery time takes 1.01 seconds. 
Puncturable-encryption operations dominate recovery time (\cref{figs:breakdown}),
since these require expensive elliptic-curve operations for ElGamal decryption
and many I/O and AES operations in order to perform secure deletion (\cref{sec:punc:del}).

\paragraph{Tail latency.}
In a deployment of \name, it will be important to consider not only the
average throughput of the \name cluster, but also the request latency.
Since recovery requests will arrive concurrently and in a bursty fashion,
we will need to overprovision the system slightly to ensure that request tail
latency does not grow too high, even under large transient loads.
In \cref{figs:concurrent}, we model how many HSMs are required to achieve
various 99th-percentile latencies, while handling different average throughputs.
We compute these values by modeling incoming requests using a Poisson
process and each HSM using a M/M/1 queue with service times derived
from our experimental results. 
As the figure demonstrates, 
by increasing the total number of HSMs, we can reduce the tail latency even
when accounting for request contention.
We anticipate that recovery time will in practice be dominated by the time
to download the encrypted disk image, and so as long as the tail latency
is less than or close to this time, any delay is unlikely to be noticed by
the user.

\paragraph{Financial cost.}
\cref{figs:throughput} shows how throughput scales
as the outlay on HSMs increases and
\cref{tab:e2e-cost} presents dollar-cost estimates
for \name deployments with different types of HSMs.
For a configuration that tolerates the
compromise of 50 high-quality HSMs, we estimate
that adding \name to an unencrypted backup system
would increase the system's dollar cost by $2.5\%$.

\begin{figure}[t]
\begin{minipage}{0.48\columnwidth}
\centering
\begingroup
\makeatletter
% [inline block 1: 5 envs, 66314 chars -> data_tex | \begin{pgfpicture} \pgfpathrectangle{\pgfpointorigin}{\pgfqpoint{1.600000in}{1.918525in}}...]

}
  \caption[]{The estimated hardware cost of a \name deployment supporting
one billion users, if each user recovers once per year.
  The $N_\textsf{evil}$ number is how many corrupt
  HSMs the deployment tolerates.\par
\smallskip
{\footnotesize {\setstretch{0.8}
We estimate the storage cost using
AWS S3 infrequent access~\cite{s3}
(\$0.0125 per GB/month). 
We estimate YubiHSM2 and SafeNet HSM throughput using
their data sheets (Table~\ref{tab:hsm}). 
When computing the number of HSMs necessary
to service a billion users, we account for 
key-rotation time. 
A cluster of 40 SafeNet HSMs can meet the
throughput demands of one billion users, so
we also consider larger deployments tolerating
more compromised HSMs. 
\par }}}
\label{tab:e2e-cost}
\end{table}
  \section{Related work} 

Today's encrypted-backup systems rely either on the security of
hardware security modules~\cite{Green16,apple-key-vault}, 
secure microcontrollers~\cite{titan}, or secure enclaves~\cite{signal-recovery,sgx}.
Vulnerabilities in these hardware components 
leave encrypted-backup systems open to attack.
And there is ample evidence of vulnerabilities in both HSMs~\cite{xbox-tpm, 
tpm-fail, coppersmith, HSPK18, Kauer07, BKKH13, blackhat, CVE-2015-5464}
and enclaves~\cite{foreshadow, 
BMDK+17, GESM17, LSGK+17, branchscope, VPS17, VWKP+17, plundervolt, sgxpectre, GVPS18,
LJJK+17, BCDF+18}, and reason for concern about
hardware backdoors as well~\cite{TK10,A2,BRPB13,KJBP14}.

Many companies 
including Anchorage~\cite{anchorage}, Unbound Tech~\cite{unbound-tech},
Curv~\cite{curv}, and Ledger Vault~\cite{ledger-vault},
offer systems for secret-sharing
cryptocurrency secret keys across multiple hardware devices.
Unlike \name, these solutions use a small fixed set of HSMs, 
so they cannot simultaneously provide scalability 
and protection against adaptive HSM compromise.

Mavroudis et al.~\cite{mavroudis2017touch} propose building a single trustworthy
hardware security module from a large array of potentially faulty hardware devices.
To achieve this, they use cryptographic protocols
for threshold key-generation, decryption, and signing.
Like Myst, \name distributes trust over a large number of hardware devices.
Unlike Myst, \name focuses on hardware-protected
PIN-based encrypted backups, rather than 
more traditional HSM operations, such as
decryption and signing.

In recent theoretical work,
Benhamouda et al.\ show how to scalably store secrets on proof-of-stake
blockchains when an adversary can adaptively 
corrupt some fraction of the stake~\cite{BGGH+20}.
They face many of the same cryptographic challenges that we tackle
in \cref{sec:hide}; their theoretical treatment complements
our implementation-focused approach.
While they use proactive secret sharing to periodically re-share the
secret and hide the secret from an adversary controlling some fraction of the
stake, our approach allows a party with
some low-entropy secret to recover the high-entropy secret.

Di Crescenzo et al.\ show how to use a small amount of erasable
memory to outsource the storage of a much larger data array in
non-erasable memory while
providing secure deletion~\cite{how-to-forget-a-secret}.
Their construction uses a tree-based approach where reading, writing,
or deleting an element requires a number of symmetric
key operations logarithmic in the size of the data array.
Subsequent work has applied similiar techniques
to B-trees~\cite{delete-btree} and dynamically sized arrays~\cite{vORAM}.
These works are part of a larger body of work on secure deletion using
cryptography~\cite{sok-secure-deletion, BL96, vanish, fade, PBHS+05,
ficklebase, Perlman05, fs-eRAM}.

Transparency logs inspire our log design~\cite{wave,ct,CONIKS, trillian, LMW19}.
While these logs allow a powerful auditor to verify correctness,
they do not easily allow distributing the work of auditing
across many less powerful participants.
The proofs we provide to the HSMs about the state of the log draw on
work on authenticated data structures~\cite{Tamassia03, MNDG+04,
PT07} and cryptocurrency light clients~\cite{nakamoto08}.
Kaptchuk et al.\ show how public ledgers can be used to build stateful
systems from stateless secure hardware~\cite{KGM19}, and they show how
their techniques can be applied to Apple's encrypted-backup system.
This work is complementary to ours, as they show how to securely manage
state in cases where HSMs do not have secure internal non-volatile storage
(an assumption we make in \name).
 \section{Conclusion}
\name is an encrypted backup system that 
(a) requires its users to only remember a short PIN,
(b) defeats brute-force PIN-guessing attacks using hardware protections, and 
(c) provides strong protection against hardware compromise. 
\name demonstrates that it is possible to reap the benefits
of hardware security protections without turning these 
hardware devices into single points of security failure.

{
\paragraph{Acknowledgments.}
We would like to thank Raluca Ada Popa and Bryan Ford
for their support throughout this project.
Albert Kwon,
Anish Athalye,
Christian Mouchet,
David Lazar, 
Dima Kogan, and
Lefteris Kokoris-Kogias,
offered thoughtful criticism on drafts of this work.
We thank our shepherd Sebastian Angel for his 
work reviewing our camera-ready.
We thank Dan Boneh for useful suggestions on how to 
simplify the security analysis of our location-hiding
encryption scheme, we thank Vinod Vaikuntanathan for the 
suggestion to avoid pairings in our puncturable-encryption scheme, 
and we thank Keith Winstein 
for early brainstorming on passwords and PINs.
Conor Patrick and Nicolas Stalder gave helpful suggestions for
modifying the SoloKey firmware to support USB CDC, and
Vivian Fang provided advice on assembling the system.
Finally, we thank the anonymous reviewers of USENIX Security 2020
and OSDI 2020
for their thorough and detailed feedback.
This research is also supported in part by the RISELab, 
a Facebook Research Award, and a NSF GRFP fellowship. 
}
 
{\small
\setstretch{0.95}
\bibliography{refs}
\bibliographystyle{plain} 
}

\appendix
\iffull

\section{Analysis: Location-hiding encryption}
\label{sec:hide-defs}

\itpara{Additional notation.}
We use $x \gets 7$ to indicate assignment.
For a finite set $S$, we use $x \getsr S$ to denote
taking a uniform random sample from $S$.
The notation $\poly(\cdot)$ refers to a fixed polynomial function
and $\negl(\cdot)$ refers to a fixed negligible function.

\subsection{Syntax}

We first define the syntax of a location-hiding encryption scheme.
The scheme is parameterized by a total number of HSMs $N$,
a cluster size $n \ll N$, a PIN space $\calP$, a message space $\calM$, and 
two constants:
\begin{itemize}
  \item [(a)] the fraction $\ffail$ of the $N$ HSMs in a cluster whose benign failure
we can tolerate while still allowing message recovery, and
\item[(b)] the fraction $\fevil$ of the $N$ total HSMs whose compromise we can
tolerate while still providing security. 
\end{itemize}
In our construction, we take $\ffail = \tfrac{1}{64}$ and $\fevil = \tfrac{1}{16}$, but
any constants $0 < \ffail, \fevil < 1$ would suffice
with an adjustment to the parameters.

Formally, a location-hiding encryption scheme consists of the
following five algorithms:

\medskip

\begin{hangparas}{\defhangindent}{1}
$\keygen(1^\lambda) \to (\pk, \sk)$.
      On input security parameter $\lambda$, expressed in unary,
      output a public-key encryption keypair.

$\enc(\pk_1, \dots, \pk_N, \salt, \pin, \msg) \to \ct$.
      Given a list of $N$ public keys, generated using $\keygen$, 
      a randomizing salt $\salt \in \zo^*$,
      a PIN $\pin \in \calP$, and a message $\msg \in \calM$, output 
      a ciphertext $\ct$.

$\select(\salt, \pin) \to ( i_1, \dots, i_n )$. 
      Given a salt and PIN, output
      the indices $(i_1, \dots, i_n) \in [N]^n$ 
      of the $n$ secret keys 
      needed to reconstruct the encrypted message. 

$\dec(\sk_{i_j}, i_j, \ct) \to \sigma_{j}$.
      Given a ciphertext $\ct$, a secret key $\sk_{i_j}$
      and an index $i_j$, produce the $j$-th secret share
      $\sigma_{i_j}$ of the plaintext message. 

$\reconst(\sigma_1, \dots, \sigma_n) \to \msg$.
      Given $n$ shares of the plaintext message $\msg$
      returned by $\dec$, reconstruct $\msg$.

\end{hangparas}

\xxx{Explain how we handle chosen-ciphertext security}

To understand the syntax, it may be helpful for us
to explain how we use these routines in our encrypted-backup
application:
\begin{enumerate}
  \item \textbf{Setup.} During device provisioning, each HSM runs
        $(\pk_i, \sk_i) \gets \keygen(1^\lambda)$ to 
        generate its keypair $(\pk_i, \sk_i)$.
        The HSM stores the secret $\sk_i$ in its
        internal memory and it publishes $\pk_i$.
  \item \textbf{Backup.}
        To back up its message $\msg$ with salt $\salt$ and PIN $\pin$, 
        given HSM public keys $(\pk_1, \dots, \pk_N)$ the client runs
        \[ \ct \gets \enc(\pk_1, \dots, \pk_N, \salt, \pin, \msg).\]
        The client uploads its recovery ciphertext $\ct$ to the
        service provider.
  \item \textbf{Recovery.}
        During recovery, the client fetches its salt $\salt$ and recovery
        ciphertext $\ct$ from the service provider.
        It then computes
        \[ (i_1, \dots, i_n) \gets \select(\salt, \pin)\]
        to identify the cluster of $n$ HSMs it must communicate
        with during recovery.

        For each HSM $i_j \in \{i_1, \dots, i_n\}$, the client
        asks HSM $i_j$ to decrypt the ciphertext $\ct$.
        (Our full protocol requires interaction with the service
        provider, but we elide those details here.)
        The HSM, who holds the secret key $\sk_{i_j}$, computes:
        \[ \sigma_{j} \gets \dec(\sk_{i_j}, i_j, \ct).\]
        
        Finally, the client recovers its plaintext as
        \[ \msg \gets \reconst(\sigma_1, \dots, \sigma_n).\]

\end{enumerate}

\subsection{Definitions}
\label{sec:hide-defs:defs}

Intuitively, the correctness property states that
if a random $\ffail$ fraction of the secret keys
are unavailable, decryption still succeeds.

\begin{defn}[Correctness]\label{defn:lhe-cor}
Formally, we say that a location-hiding encryption scheme
on parameters $(N, n, \calP, \calM, \ffail, \fevil)$ is
\textit{correct} if, for all PINs $\pin \in \calP$,
all messages $\msg \in \calM$, on security parameter $\lambda \in \N$
\cref{exp:lhe-cor} outputs $1$, except with probability 
$\negl(\lambda)$.
\end{defn}

\noindent
This notion of correctness only guarantees
message reconstruction if some shares $\sigma_i$, 
computed using the $\dec$ routine, are
deleted or unavailable.
We do not consider the stronger notion of correctness,
in which message reconstruction is possible even if some
of the shares $\sigma_i$ are corrupted (rather than just missing).

\begin{figure}
\begin{framed}
\begin{experiment}[Location-hiding encryption: Correctness]\label{exp:lhe-cor}
We define the following correctness experiment,
which is parameterized by a location-hiding encryption
scheme with parameters $(N,n,\calP,\calM,\ffail,\fevil)$,
a PIN $\pin \in \calP$, 
a message $\msg \in \calM$, and 
a security parameter $\lambda \in \N$.

The experiment consists of the following steps:
\begin{align*}
  (\pk_i, \sk_i) &\gets \keygen(1^\lambda), \text{ for $i =1, \dots, N$}\\
    \salt &\getsr \zo^{\lambda}\\
    \ct &\gets \enc(\pk_1, \dots, \pk_N, \salt, \pin, \msg)\\
    F&\gets \emptyset\\
    &\text{Add each element of $\{1, \dots, N\}$ to $F$}\\[-4pt]
    &\text{with independent probability $\ffail$.}\\
    \{i_1, \dots, i_n\} &\gets \ms{Select}(\salt, \pin)\\
    \sigma_j &\gets \begin{cases}
      \bot & \text{for $j \in F$}\\
      \dec(\sk_{i_j}, i_j, \ct)&\text{otherwise}\\
          \end{cases}\\
    \msg' &\gets \reconst(\sigma_1, \dots, \sigma_n)
\end{align*}
The output of the experiment is 
$1$ if $\msg = \msg'$ and $0$ otherwise.
\end{experiment}
\end{framed}
\end{figure}

\medskip
The security property for location-hiding encryption 
states that no efficient adversary can
distinguish the encryption of two chosen
messages with probability much better than
guessing the PIN after roughly $N$ guesses.
This property should hold 
\emph{even if the adversary may adaptively corrupt} 
up to $\fevil \cdot N$ of the $N$ total secret keys.

For a location-hiding encryption scheme
on parameters $(N, n, \calP, \calM, \ffail, \fevil)$, 
an adversary $\calA$, and 
a security parameter $\lambda \in \N$, 
let $W_{\lambda,\beta}$ denote the probability
that the adversary outputs ``1'' in \cref{exp:lhe-sec}
with bit $\beta \in \zo$.
Then we define the advantage of $\calA$ at attacking
a location-hiding encryption scheme $\calE$ as:
\[ \LHEncAdv[\calA, \calE](\lambda) \deq \abs{W_{\lambda,0} - W_{\lambda,1}}.\]

\begin{defn}[Security]\label{defn:lhe-sec}
Let $\calE$ be a location-hiding encryption scheme. 
on parameters $(N, n, \calP, \calM, \ffail, \fevil)$, such that $N = \poly(\lambda)$, 
$n = \poly(\lambda)$, and $|\calP|$ and $|\calM|$ grow as
(possibly superpolynomial) functions of $\lambda$.
Then we say that the location-hiding encryption scheme $\calE$
is \textit{secure} if, for all efficient adversaries $\calA$, 
\[ \LHEncAdv[\calA, \calE](\lambda) \leq O(N/|\calP|) + \negl(\lambda).\]
\end{defn}

\begin{figure}[t]
\begin{framed}
  \begin{experiment}[Location-hiding encryption: Security]\label{exp:lhe-sec}
We define the following security experiment,
which is parameterized by an adversary $\calA$,
a location-hiding encryption
scheme with parameters $(N,n,\calP,\calM,\ffail,\fevil)$,
a security parameter $\lambda$,
and a bit $\beta \in \zo$.

\begin{itemize}
  \item The challenger runs:
    \begin{align*}
  (\pk_i, \sk_i) &\gets \keygen(1^\lambda), \text{ for $i =1, \dots, N$}\\
    \salt &\getsr \zo^\lambda\\
    \pin &\getsr \calP
    \end{align*}
    and sends $(\pk_1, \dots, \pk_N)$ to the adversary.
  \item The adversary chooses two messages $\msg_0, \msg_1 \in \calM$ and sends these
        to the challenger.
  \item The challenger computes
        \[\ct \gets \enc(\pk_1, \dots, \pk_N, \salt, \pin, \msg_\beta)\]
        and sends $(\salt, \ct)$ to the adversary.
  \item The adversary may make $\fevil \cdot N$ \emph{corruption queries}. 
        At each query:
        \begin{itemize}
          \item the adversary sends to the challenger an index $i \in [N]$ and
          \item the challenger sends to the adversary $\sk_i$.
        \end{itemize}
  \item Finally, the adversary outputs a bit $\beta'\in \zo$.
\end{itemize}

The output of the experiment is the bit $\beta'$.
\end{experiment}
\end{framed}
\end{figure}

\begin{remark}[Understanding the security definition.]\label{rem:under}
Since our security definition allows the adversary to corrupt up to
$\fevil \cdot N$ of the secret keys, the adversary can always reconstruct the plaintext
with probability roughly $\frac{\fevil \cdot N}{n|\calP|}$ using the following attack
to decrypt a ciphertext $\ct$ with salt $\salt$:
  \begin{itemize}
    \item Pick two messages $\msg_0,\msg_1 \getsr \calM$.
    \item Pick a candidate PIN $\pin' \getsr \calP$.
    \item Run $I = (i_1, \dots, i_n) \gets \select(\pin', \salt)$.
    \item Corrupt the keys in $I$ and use them to decrypt $\ct$.
      \begin{itemize}
        \item If $\ct$ decrypts to either $\msg_0$ or $\msg_1$, 
          guess the corresponding bit. 
        \item Otherwise, try again with another PIN.
    \end{itemize}
  \end{itemize}
If the attacker can corrupt $\fevil \cdot N$ keys, it can try at least $\fevil \cdot N/n$
PINs and its success probability is at least $\fevil \cdot N/(n|\calP|)$.
Our security analysis shows that this is nearly the best attack
possible against our location-hiding encryption scheme, up to low-order terms.
\end{remark}

\begin{remark}[Chosen-ciphertext security]
A stronger and more realistic definition of security
would allow the adversary to make decryption queries to
the HSMs in the penultimate stage of the attack game,
as in the standard chosen-ciphertext attack (CCA) security
game~\cite{RS91,CS98}.
Since our underlying public-key encryption scheme (hashed ElGamal)
is already CCA secure, we believe that---at the cost of some additional
complexity in the definitions and proofs---it would be able to achieve
a CCA-type notion for location-hiding encryption.
\end{remark}

\subsection{A tedious combinatorial lemma}

We will need the following lemma about random sets
to prove \cref{thm:lhe-sec}.
The proof of this lemma uses no difficult ideas, but keeping
track of the constants involved is a bit tedious.

\begin{defn}\label{defn:cover}
Let $n$, $N$, and $\Phi$ be positive integers. 
Say that a set $S \subseteq [N]$ ``$n/2$-covers''
a list $L \in [N]^n$ if at least $n/2$ elements
of the list $L$ appear in the set $S$.

Then, let $\Cover(\alpha, \beta)$ denote the probability, over the
random choice of lists $L_1, \dots, L_\Phi \getsr [N]^n$,
that there exists a set $S \subseteq [N]$ of size $\alpha N$ that $n/2$-covers
more than $\beta N$ of the lists.
\end{defn}

\begin{lemma}\label{lemma:annoying}
For $N > en \approx 2.71n$ and $\Phi < 2^{n/2}$, 
we have 
  \[ \Cover(\tfrac{1}{16}, \tfrac{3}{n}) \leq 2^{-N/4}.\]
\end{lemma}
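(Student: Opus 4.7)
The plan is to apply a three-layer union bound. First, I would fix a set $S \subseteq [N]$ with $|S| = N/16$ and analyze, for a single random list $L \in [N]^n$, the probability $q$ that $S$ \,$n/2$-covers $L$. Since each coordinate of $L$ falls in $S$ independently with probability $\alpha = 1/16$, the number of coordinates of $L$ that land in $S$ is distributed as $\mathrm{Bin}(n, 1/16)$, with mean $n/16$. I want the tail probability that this count is at least $n/2 = 8 \cdot (n/16)$, and the multiplicative Chernoff bound (with $\delta = 7$) yields $q \leq (e^7/8^8)^{n/16} \leq 2^{-cn}$ for some constant $c > 0.8$. Note that by symmetry, $q$ depends only on $|S|$, not on $S$ itself.

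Second, fixing such an $S$, the lists $L_1, \dots, L_\Phi$ are independent, so the number of them that $S$ covers is distributed as $\mathrm{Bin}(\Phi, q)$. To bound the probability that this count exceeds $\beta N = 3N/n$, I would union-bound over subsets $T \subseteq [\Phi]$ of size $3N/n$: the probability that every list indexed by $T$ is covered is exactly $q^{3N/n}$, so the event in question has probability at most $\binom{\Phi}{3N/n} q^{3N/n}$. Finally, I would union-bound over the $\binom{N}{N/16}$ choices of $S$, yielding
\[
\Cover(\tfrac{1}{16}, \tfrac{3}{n}) \;\leq\; \binom{N}{N/16}\binom{\Phi}{3N/n} q^{3N/n}.
\]

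To finish, I would take $\log_2$ and bound each factor. Using $\binom{N}{N/16} \leq (16e)^{N/16}$ contributes $(N/16)\log_2(16e) \approx 0.34\,N$. Using $\binom{\Phi}{3N/n} \leq (e\Phi n/(3N))^{3N/n}$ together with $\Phi \leq 2^{n/2}$ and the hypothesis $N > en$ (which forces $en/(3N) < 1/3$) gives a contribution of at most $(3N/n)(n/2 - \log_2 3) \leq 1.5\,N$. The term $q^{3N/n}$ contributes $-3cN \leq -2.6\,N$. Summing: $0.34\,N + 1.5\,N - 2.6\,N \leq -0.76\,N$, so $\Cover \leq 2^{-0.76\,N}$, which is well below $2^{-N/4}$.

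The main obstacle is bookkeeping rather than cleverness: the target $2^{-N/4}$ leaves only a modest margin once the entropy costs $\log_2\binom{N}{N/16}$ and $\log_2\binom{\Phi}{3N/n}$ are paid, so I must ensure the Chernoff constant $c$ is large enough (concretely $3c \geq 1/4 + 0.34 + 1.5$, i.e.\ $c \geq 0.7$) and that the bound on $\binom{\Phi}{3N/n}$ genuinely uses the constraint $N > en$ to kill the factor of $n/(3N)$ inside the logarithm. Both requirements hold comfortably with the stated hypotheses $\Phi < 2^{n/2}$ and $N > en$, so the argument goes through.
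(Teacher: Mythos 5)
Your proposal is correct and follows essentially the same route as the paper's proof: a union bound over the $\binom{N}{N/16}$ choices of $S$, then over the $\binom{\Phi}{3N/n}$ choices of which lists are covered, with a per-list coverage probability bounded exponentially in $n$. The only difference is that you bound that per-list probability by a multiplicative Chernoff bound (giving $q \leq 2^{-0.86n}$) where the paper uses the direct estimate $\binom{n}{n/2}(1/16)^{n/2} \leq (2e/16)^{n/2} \leq 2^{-3n/4}$; both constants comfortably absorb the entropy costs of the two outer union bounds, so your arithmetic closes with room to spare.
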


\begin{proof}
  Let $\alpha = 1/16$ and $\beta = 3/n$. Our task is
then to bound the probability that there exists a set of size $\alpha N$ 
that $n/2$-covers more than $\beta N$ lists.

Fix a set $S \subseteq [N]$ of size $\alpha N$.
We compute the probability that $S$ $n/2$-covers more than $\beta N$
of the chosen lists.

The probability that a fixed set $S$ $n/2$-covers 
a list $L$ is at most ${n \choose {n/2}} \cdot \alpha^{n/2}$,
over the random choice of $L \getsr [N]^n$.
Using the inequality ${n \choose k} \leq (ne/k)^k$, we can 
bound this probability by $(2e \alpha )^{n/2}$.

Then, the probability that a fixed set $S$ $n/2$-covers 
some subset of $\beta N$ of the $\Phi$ lists
is then
\begin{align*}
  {\Phi \choose {\beta N}} \left( (2e\alpha)^{n/2} \right)^{\beta N} 
    \leq \left( \frac{\Phi e}{\beta N} \cdot (2e \alpha)^{n/2} \right)^{\beta N}
\end{align*}
Finally we apply the union bound over all ${N \choose {\alpha N}}$
possible choices of the set $S$ to get the final probability:
\begin{align*}
  &{N \choose {\alpha N}} \cdot \left( \frac{\Phi e}{\beta N} \cdot (2e \alpha)^{n/2} \right)^{\beta N}\\
  &\leq (e/\alpha)^{\alpha N} \cdot \left( \frac{\Phi e}{\beta N} \cdot (2e \alpha)^{n/2} \right)^{\beta N}\\
  &\leq (e/\alpha)^{\alpha N} \cdot \left( \frac{\Phi e}{\beta N} \cdot (2e \alpha)^{n/2} \right)^{\beta N} \\
  \intertext{and since $(e/\alpha)^\alpha = (16e)^{1/16} < 2^{1/2}$ and $\beta = 3/n$,}
  &\leq 2^{N/2} \cdot \left( \frac{\Phi e n}{3N} \cdot (2e \alpha)^{n/2} \right)^{\beta N} \\
  \intertext{and since we have assumed $N > en$, $en/N \leq 1$, so}
  &\leq \left[ 2^{1/2} \cdot \left( \Phi \cdot (2e \alpha)^{n/2} \right)^{\beta} \right]^N \\
  \intertext{and letting $\alpha = 1/16$ implies $2e \alpha = e/16 < 2^{-3/2}$, so}
  &\leq 2^{N/2} \left( \Phi \cdot 2^{-3n/4} \right)^{3N/n}\\
  \intertext{and since $\Phi < 2^{n/2}$,} 
  &\leq 2^{N/2} \left( 2^{-n/4} \right)^{3N/n} \leq 2^{-N/4}.
\end{align*}
\end{proof}

\subsection{Our construction}
\label{sec:hide-defs:const}

\renewcommand{\labelitemi}{$-$}
\begin{figure}[ht!]
\begin{framed}
\textbf{Location-hiding encryption scheme.} 
{\small 
The construction is parameterized by:
  \begin{itemize}
    \item a universe size $N \in \N$, 
    \item a cluster size $n \in \N$, 
    \item a PIN-space $\calP$,
    \item a recovery threshold $t \in \N$, 
    \item a public-key encryption scheme $(\PKEgen,\allowbreak \PKEenc,\allowbreak \PKEdec)$, 
and 
    \item an authenticated encryption scheme $(\AEenc,\allowbreak \AEdec)$ 
          with keyspace $\F$ (some finite field), 
      and 
    \item a security parameter $\lambda \in \N$.
\end{itemize}
The message space is $\zo^*$.
We use a hash function $\hash \colon \zo^\lambda \times \calP \to [N]^n$
(e.g., built from SHA-256)
which we model as a random oracle~\cite{BR93}.
We use $t$-out-of-$n$ Shamir secret sharing~\cite{S79};
we denote the sharing and reconstruction algorithms over a finite field
$\F$ by $(\Sshare_\F,\allowbreak\Sreconst_\F)$.
}

\medskip

  $\keygen(1^\lambda) \to (\pk, \sk)$. 
      \begin{itemize}
        \item On input security parameter $\lambda$, expressed in unary,
              run the key-generation routine for the underlying public-key
              encryption scheme: $(\pk, \sk) \gets \PKEgen(1^\lambda)$. 
      \end{itemize}

$\enc(\pk_1, \dots, \pk_N, \pin, \msg) \to (\salt, \ct)$.
      \begin{itemize}
        \item Sample a random salt $\salt \getsr \zo^\lambda$ and
              compute $(i_1, \dots, i_n) \gets \hash(\salt, \pin)$. 

        \item Sample a transport key $k \getsr \F$ 
              and split it using $t$-out-of-$n$-Shamir secret sharing~\cite{S79}:\\
              $(k_1, \dots, k_n) \gets \Sshare_{\F}(k)$.

        \item Encrypt each share of the key using the underlying public-key encryption scheme. 
              That is, for 
          $j \in \{1, \dots, n\}$, set  $C_i \gets \PKEenc\big(\pk_{i_j}, k_i\big)$.

        \item Set $M \gets \AEenc(k, \msg)$, 
              $\ct \gets (M, C_1, \dots, C_n)$, and 
              output $(\salt, \ct)$.
      \end{itemize}

$\select(\salt, \pin) \to (i_1, \dots, i_n) \in [N]^n$.
      \begin{itemize}
        \item Output $\hash(\salt, \pin) \in [N]^n$.
      \end{itemize}

$\dec(\sk, i, \ct) \to \sigma_i \in \G$.
      \begin{itemize}
        \item Parse $\ct$ as $(M, C_1, \dots, C_n)$.
        \item Output $\sigma \gets \big(M,\PKEdec(\sk, C_i)\big)$.
      \end{itemize}
$\reconst(\sigma_1, \dots, \sigma_n) \to \msg$. 
      \begin{itemize}
        \item For $i \in [n]$, Parse each share $\sigma_i$ as a pair $(M_i, k_i)$.
        \item Let $k \gets \Sreconst_\F(k_1, \dots, k_n) \in \F$.
        \item Let $M$ be the most common value in $\{M_1, \dots, M_n\}$.
        \item Output $\AEdec(k, M)$.
      \end{itemize}
      
\end{framed}
\caption{Our construction of location-hiding encryption.}
\label{fig:lhe-const}
\end{figure}
\renewcommand{\labelitemi}{$\bullet$}

Our construction, which is parameterized by a
standard public-key encryption scheme,
appears in \cref{fig:lhe-const}.
We instantiate the public-key encryption
scheme with hashed ElGamal encryption, 
which we recall here.

\paragraph{Hashed ElGamal} \label{sec:hide-defs:elgamal}
We instantiate the location-hiding encryption scheme of
\cref{sec:hide} with the ``Hashed ElGamal'' 
encryption scheme. 
The scheme uses a cyclic group $\G = \langle g \rangle$ 
of prime order $p$,
in which we assume that the computational Diffie-Hellman 
problem is hard.
It also uses an authenticated encryption scheme
$(\AEenc, \AEdec)$ with key space $\calK$ and a
hash function $\hash': \G \to \calK$.

A keypair is a pair $(x, g^x) \in \Z_p \times \G$,
where $x \getsr \Z_p$.
To encrypt a message $m \in \zo^\ell$ 
to public key $X \in \G$, the encryptor
computes $\big(g^r, \AEenc(\hash'(X^r), m)\big)$.

A standard argument~\cite{BSbook} shows that Hashed ElGamal
satisfies semantic security against
chosen-ciphertext attacks~\cite{RS91,CS98}.

\medskip

In our use of hashed ElGamal, we can provide domain separation
between different ciphertexts by prepending inputs to the
hash function $\hash'$ during encryption and decryption with:
(1) the client's username,
(2) the salt associated with the ciphertext, and
(3) the public keys of the $n$ public keys to which the
    client encrypted the ciphertext.
All of these values are available to the client during
encryption and to the HSMs during decryption.

\subsection{Proof of correctness}

We now prove:
\begin{theorem} \label{thm:lhe-cor}
Consider an instantiation of the encryption scheme $\calE$ of \cref{fig:lhe-const} 
with parameters $(N, n, \calP, t)$,
with cluster size $n = \Omega(\lambda)$, on security parameter $\lambda$, 
and recovery threshold $t = n/2$.
Then the resulting scheme is a \emph{correct} location-hiding encryption
scheme (in the sense of \cref{defn:lhe-cor})
for any fault-tolerance parameter $\ffail \leq \tfrac{1}{8}$.
\end{theorem}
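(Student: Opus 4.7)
The plan is to show that with overwhelming probability over the randomness of the failure set $F$ and the salt, at least $t = n/2$ of the $n$ ciphertext components have a non-failed decryptor, since this is precisely what Shamir reconstruction needs to recover the AE key $k$ and thereby recover $\msg$ via $\AEdec$.

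First I would fix the trivial parts: all non-failed shares decrypt to $(M, k_{i_j})$ where $M$ is the common AE ciphertext included in $\ct$ and $k_{i_j}$ is the correct $j$-th Shamir share of the key $k$. So the ``most common $M$'' step in $\reconst$ always agrees with the true $M$ as long as at least one share survives, and $\Sreconst_\F$ recovers $k$ exactly as long as at least $t = n/2$ of the $\sigma_j$'s are non-$\bot$. Hence the whole question reduces to a purely combinatorial statement: letting $X = |\{j \in [n] : i_j \in F\}|$ be the number of failed slots among the $n$ selected cluster members, we need to show $\Pr[X > n/2] = \negl(\lambda)$.

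The key observation, which I would use deferred-randomness style, is that conditional on the failure set $F \subseteq [N]$, the $n$ cluster indices $i_1, \dots, i_n$ are i.i.d.\ uniform on $[N]$ (we are modeling $\hash$ as a random oracle on the fresh input $(\salt, \pin)$, and $\salt$ has not been queried before), so conditional on $F$ the indicators $\mathbb{1}[i_j \in F]$ are i.i.d.\ $\mathrm{Bernoulli}(|F|/N)$ and $X \mid F \sim \mathrm{Binom}(n, |F|/N)$. I would then split into two events: (i) $|F| \leq 2\ffail N \leq N/4$, which by a Chernoff bound on $|F| \sim \mathrm{Binom}(N,\ffail)$ fails with probability at most $\exp(-\Omega(N))$, and (ii) conditioned on $|F|/N \leq 1/4$, a second Chernoff bound gives $\Pr[X > n/2 \mid F] \leq \exp(-\Omega(n))$ since $n/2$ is a constant factor above the conditional mean $n \cdot |F|/N \leq n/4$. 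Taking a union bound and using $n = \Omega(\lambda)$ and $N \geq n$ shows $\Pr[X > n/2] \leq \exp(-\Omega(n)) + \exp(-\Omega(N)) = \negl(\lambda)$, which finishes the proof.

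The only subtle step is the deferred-randomness argument: because failure is per-HSM rather than per-slot, repeated indices in $(i_1,\dots,i_n)$ cause the slot-failures to be correlated in general. I would make explicit that this correlation disappears once we condition on $F$, because then each slot's failure depends only on the independent draw $i_j$ and the fixed set $F$. I expect the rest (two Chernoff bounds and a union bound) to be routine given the constant slack between $\ffail \leq 1/8$ and the threshold $1/2$.
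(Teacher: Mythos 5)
Your proposal is correct, and it reaches the conclusion by a genuinely different route from the paper. The paper's proof is a one-line union bound: if more than $n/2$ of the $n$ chosen slots are failed, then some size-$(n/2)$ subset of them is entirely failed; each such subset fails with probability $\ffail^{n/2}$, and $\binom{n}{n/2}\,\ffail^{n/2} \le 2^n \cdot 2^{-3n/2} = 2^{-n/2}$. That argument needs no concentration inequality, yields an explicit constant, and is independent of $N$ --- but it silently assumes the $n/2$ positions in each subset name \emph{distinct} HSMs, which holds only if selection is without replacement; the construction's $\hash$ maps into $[N]^n$, so repeated indices are possible and would make the per-subset failure probability exceed $\ffail^{n/2}$. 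Your conditioning-on-$F$ step is exactly what repairs this: once $F$ is fixed, the slot-failure indicators are i.i.d.\ Bernoulli with parameter $|F|/N$ regardless of collisions, and the two Chernoff bounds plus the slack between $\ffail \le 1/8$ and the threshold $1/2$ finish the job. The price is an extra $\exp(-\Omega(N))$ term (harmless given $N \ge n = \Omega(\lambda)$) and one small wrinkle you should fix: state the intermediate event as $|F| \le N/4$ rather than $|F| \le 2\ffail N$, since for very small $\ffail$ the multiplicative Chernoff bound on $\Pr[|F| > 2\ffail N]$ is weak, whereas a Hoeffding bound on $\Pr[|F| > N/4]$ using only $\Exp[|F|] \le N/8$ works uniformly for all $\ffail \le 1/8$.
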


\begin{proof}
To prove the claim, we need to bound the probability that,
out of a random size-$n$ subset of all $N$ keys,
no more than $t = n/2$ are failed.
Fix a size-$(n/2)$ subset $S$ of the $n$ chosen keys.
The probability that the keys in $S$ are all failed
is $(\ffail)^{n/2}$.
We can now take a union bound over all ${n \choose {n/2}}$
choices of $S$ to bound the final failure probability:
\begin{align*}
  \Pr[\text{fail}] &\leq {n \choose {n/2}} \cdot (\ffail)^{n/2}
    \leq 2^n \cdot \left(\frac{1}{2^3}\right)^{n/2} = 2^{-n/2}.
\end{align*}
Since we have taken $n = \Omega(\lambda)$, this probability is negligible in~$\lambda$.
\end{proof}

\subsection{Proof of security}
\label{sec:hide-defs:proof}

We state the main theorem and then prove it here.

\begin{theorem} \label{thm:lhe-sec}
The location-hiding encryption scheme $\calE$ of \cref{fig:lhe-const} 
instantiated with the hashed-ElGamal 
public-key encryption system (\cref{sec:hide-defs:elgamal})
is \emph{secure}, in the sense of
\cref{defn:lhe-sec}, in the random-oracle model,
for recovery threshold $t = n/2$ and
corruption threshold $\fevil = 1/16$, 

More precisely, we consider an
instantiation of our location-hiding
encryption scheme $\calE$ with
parameters $(N, n, \calP, t= n/2)$,
where $N > e\cdot n \approx 2.71n$ and $|\calP| < 2^{n/2}$,
using hashed ElGamal encryption for the public-key encryption scheme,
and using an arbitrary authenticated encryption scheme $\AEscheme$.

Then, let $\calA$ be an adversary that attack $\calE$ in the
location-hiding encryptions security game with corruption threshold
$\fevil = 1/16$.
Denote $\calA$'s attack advantage as $\LHEncAdv[\calA,\calE]$,
Then if $\calA$ makes at most $Q$ queries
to the random oracle $\hash'$, used in hashed ElGamal encryption,
we construct:
\begin{itemize}
\item an efficient algorithm $\calB_\CDH$ that breaks CDH in group $\G$ with
advantage $\CDHAdv[\calB_\CDH, \G]$ and 
\item an efficient
algorithm $\calB_\AEscheme$ that breaks the authenticated
encryption scheme $\AEscheme$ with advantage $\AEAdv[\calB_\AEscheme,\AEscheme]$
\end{itemize}
such that
\begin{multline*}
    \LHEncAdv[\calA, \calE] \leq 2^{-N/4} 
    + N \cdot Q \cdot \CDHAdv[\calB_\CDH,\G] \\
    + \frac{3N}{n\abs{\calP}} 
    + \AEAdv[\calB_\AEscheme, \AEscheme].
\end{multline*}
\end{theorem}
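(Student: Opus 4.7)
The plan is to run a sequence of game hops starting from the real security game (\cref{exp:lhe-sec}) with bit $\beta$, and argue that in the final game the adversary's view is independent of $\beta$, bounding the total statistical distance by the sum of the four terms in the theorem statement. The four terms arise, respectively, from: a combinatorial bad event about how the corrupted HSM set overlaps PIN-induced index lists; the probability that the real PIN falls into the ``overlapped'' region; the CDH-based replacement of the uncorrupted hashed-ElGamal ciphertexts with junk; and the final authenticated-encryption hop that removes any dependence on $\msg_\beta$.

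First I would fix the random-oracle outputs at the start of the game. Given the salt $\salt$ sampled by the challenger, the challenger implicitly defines $|\calP|$ lists $L_{\pin'} := \hash(\salt, \pin') \in [N]^n$, one per candidate PIN. Apply \cref{lemma:annoying} with $\Phi = |\calP| < 2^{n/2}$ and $\alpha = 1/16$, $\beta = 3/n$: except with probability $2^{-N/4}$, no set $S\subseteq[N]$ of size $\fevil N = N/16$ can $n/2$-cover more than $3N/n$ of the lists $\{L_{\pin'}\}_{\pin'\in\calP}$. In particular, regardless of how the adversary adaptively chooses the $\le \fevil N$ corruption queries (which may depend on $\salt$ and on the random-oracle answers), the set $S$ of corrupted indices covers at most $3N/n$ of the PIN lists. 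Call a PIN $\pin'$ ``bad'' if $S$ covers $L_{\pin'}$. This game hop costs $2^{-N/4}$.

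Next, since the challenge PIN $\pin^*$ is uniform in $\calP$ and the set of bad PINs is determined without knowledge of $\pin^*$ (the adversary only sees the ciphertext, whose computation depends on $\pin^*$ through $L_{\pin^*}$ and the encryptions to $\pk_{i_j}$), I would use the random-oracle reprogramming trick: simulate the challenge by first picking a random list $L^* \in [N]^n$ to play the role of $L_{\pin^*}$, compute the ciphertext, and only later sample $\pin^*$ uniformly and program $\hash(\salt,\pin^*)=L^*$. This is valid except in the event that the adversary had already queried $\hash(\salt,\pin^*)$, whose probability is $\le Q/|\calP|$ and can be folded into the final bound; crucially, it decouples the statistical question ``is $\pin^*$ bad?'' from the adversary's view prior to the final decision, so the probability that $\pin^*$ is bad is at most $(3N/n)/|\calP| = 3N/(n|\calP|)$. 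Condition on $\pin^*$ not being bad for the remaining hops. In this case the index set $L^*$ contains at most $n/2 - 1 < t$ corrupted indices, so at least $n/2+1$ of the $n$ shares are encrypted to uncorrupted keys.

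Now I would replace, one index at a time over the $N$ key slots, each hashed-ElGamal ciphertext $\PKEenc(\pk_i, k_j)$ that is aimed at an uncorrupted key with an encryption of an independent random value. Writing $\pk_i = g^{x_i}$ and $\PKEenc(\pk_i, k_j) = \bigl(g^{r_j}, \AEenc(\hash'(\pk_i^{r_j}), k_j)\bigr)$, the standard CDH-based argument in the random oracle model shows that, unless the adversary queries $\hash'$ on the Diffie--Hellman value $g^{x_i r_j}$, the hash output is a fresh random key; a guessing reduction $\calB_\CDH$ embeds a CDH challenge into one of the $N$ public keys and one of the $\le n$ challenge randomnesses, losing a factor of $N \cdot Q$ overall (the factor $n \le N$ is absorbed). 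After this hop, the shares of $k$ encoded in the uncorrupted slots are uniformly random and independent; since $S$ covers fewer than $t = n/2$ of $L^*$, Shamir's scheme guarantees that $k$ is information-theoretically uniform from the adversary's perspective. Finally, one hop using $\AEAdv[\calB_\AEscheme,\AEscheme]$ replaces $\AEenc(k,\msg_\beta)$ by $\AEenc(k,0^{|\msg_\beta|})$, making the whole transcript independent of $\beta$.

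The main obstacle, and the reason the previous step required re-programming the random oracle, is handling adaptivity: the adversary's corruption queries may depend on $\salt$ and on earlier random-oracle answers, so one cannot fix the corrupted set ``in advance'' without care. The combinatorial lemma only tells us that for \emph{any} fixed $S$ of size $\fevil N$, few PIN lists are covered; we bridge this with the randomness of $\pin^*$ by delaying its sampling until after the ciphertext is formed, which is precisely the selective-opening-style step that requires the random oracle. Combining the four losses yields the claimed bound $2^{-N/4} + N\cdot Q\cdot\CDHAdv[\calB_\CDH,\G] + 3N/(n|\calP|) + \AEAdv[\calB_\AEscheme,\AEscheme]$.
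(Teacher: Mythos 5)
Your proposal assembles the right ingredients---the covering lemma with $\Phi=\abs{\calP}$, a CDH reduction losing $N\cdot Q$, Shamir secrecy, and a final AE hop---and arrives at the paper's four terms. But the order of your hops hides a genuine gap. You invoke the bound $3N/(n\abs{\calP})$ on the event ``the corrupted set $S$ $n/2$-covers the challenge list $L^*$'' \emph{before} doing the CDH step. At that point the adversary has already seen the challenge ciphertext, which is computed from $L^*$: the shares are hashed-ElGamal-encrypted to exactly the keys $\pk_{i_1},\dots,\pk_{i_n}$ indexed by $L^*$. Reprogramming $\hash(\salt,\cdot)$ to defer the choice of $\pin^*$ only decouples $\pin^*$ from $L^*$; it does nothing to decouple the adversary's adaptively chosen $S$ from $L^*$. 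If hashed ElGamal were not key-private, the adversary could read $L^*$ off the ciphertext and corrupt $n/2$ of its indices, and \cref{lemma:annoying} would not prevent this---it only says that any such $S$ also covers few \emph{other} PIN lists. The independence of the adversary's view from the uncorrupted part of $L^*$ is precisely what the CDH/random-oracle step buys you, so that step must come first. This is how the paper proceeds: its Game 2 first rules out (via a CDH reduction that guesses an index and an $\hash'$-query, hence the $N\cdot Q$ factor) the event that the adversary queries $\hash'$ at $(\pk_{i_j})^{r_j}$ before corrupting $i_j$; only then, in Game 3, does it argue that the corruption queries are effectively non-adaptive with respect to the uncorrupted elements of $I$ and apply the counting bound.

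Two smaller points. First, your reprogramming of $\hash(\salt,\pin^*)$ introduces an extra $Q/\abs{\calP}$ failure term that does not appear in the claimed bound and is not obviously dominated by $3N/(n\abs{\calP})$ when $Q > 3N/n$; the paper avoids it by never reprogramming $\hash$. Second, your CDH hop is phrased as ``replace each ciphertext aimed at an uncorrupted key with junk,'' which presupposes knowing the corrupted set when the ciphertext is formed---but that set is chosen adaptively afterwards, which is exactly the selective-opening difficulty. The paper's formulation (bounding the probability of an $\hash'$-query at a Diffie--Hellman point \emph{preceding} the corresponding corruption query) sidesteps this without needing to know the corrupted set in advance. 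With the hops reordered and the CDH event reformulated along those lines, your argument matches the paper's.
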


\begin{figure}
{\small 
\begin{framed}
\paragraph{Game 0.}
We define Game 0 by instantiating Experiment~\ref{exp:lhe-sec}
our location-hiding encryption scheme, instantiated in turn 
with hashed-ElGamal encryption (\cref{sec:hide-defs:elgamal}).
Game 0 is parameterized by an adversary $\calA$,
a group $\G$ of prime order $p$ with generator $g$,
a universe size $N \in \N$,
a cluster size $n \in \N$,
a PIN-space $\calP$, 
a recovery threshold $t$, 
a security parameter $\lambda$,
hash functions $\Hash: \zo^\lambda \times \calP \to [N]^n$ and
$\Hash': \G \to \F$ (which we model as random oracles
where the challenger plays the role of the random oracle),
an authenticated encryption scheme $(\AEenc, \AEdec)$ with
keyspace $\F$ and
message space $\zo^*$,
a security parameter $\lambda \in \N$, and
a bit $\beta \in \zo$.

\begin{itemize}
  \item The challenger runs
    \begin{align*}
    \sk_i &\getsr \Z_p,\ \pk_i \gets g^{\sk_i} \ \text{for $i \in \{1, \dots, N$\}}
    \end{align*}
    and sends $(\pk_1, \dots, \pk_N)$ to the adversary.
    \item The adversary chooses two messages $\msg_0, \msg_1 \in \calM$ and sends these
        to the challenger.
    \item The challenger then computes the ciphertext using our location-hiding
    encryption scheme instantiated with hashed-ElGamal.
    That is, the challenger computes
    \begin{align*}
      \salt &\getsr \zo^\lambda\\
      \pin &\getsr \calP\\
      (i_1, \dots, i_n) &\gets \hash(\salt, \pin)\\
      k &\getsr \F \\
      (k_1, \dots, k_n) &\gets \Sshare_{\F}(k).
      \intertext{Here, $\Sshare_\F$ denotes $t$-out-of-$n$ Shamir secret sharing over
        the field $\F$.
        The challenger then encrypts the shares
      $(k_1, \dots, k_n)$
      of the transport key $k$ using hashed ElGamal encryption.
      For $j \in \{1, \dots, n\}$, the challenger computes:}
      &\quad r_j \getsr \Z_p \\
        &\quad \kappa_j \gets \Hash'((\pk_{i_j})^{r_j})\\
      &\quad C_j \gets (g^{r_j}, \kappa_j \oplus k_j).
      \intertext{Finally, the challenger encrypts the message $\msg$ with
      the transport key $k$ and outputs the ciphertext:}
      M &\gets \AEenc(k, \msg_\beta) \\
      \ct &\gets (M, C_1, \dots, C_n)
    \end{align*}
    and sends $(\salt, \ct)$ to the adversary.
  \item The adversary may make adaptive $\fevil \cdot N$ \emph{corruption queries}
        and may perform computation between its queries.
        At each query:
        \begin{itemize}
          \item the adversary sends to the challenger an index $i \in [N]$ and
          \item the challenger sends to the adversary $\sk_i$.
        \end{itemize}
  \item Finally, the adversary outputs a bit $\beta'\in \zo$.
\end{itemize}

The output of the experiment is the bit $\beta'$.
\end{framed}
}
\end{figure}

Notice that if the number of HSMs $N$ grows much larger
than the size of the PIN space $|\calP|$, the bound of \cref{thm:lhe-sec} 
on the adversary's advantage becomes vacuous.
(In fact, this limitation is inherent---see \cref{rem:under}
in \cref{sec:hide-defs}.)
To support extremely large data deployments, it would be
possible to shard users into data centers of moderate
size (e.g., $N \approx 50,000$). 

The main technical challenge in proving \cref{thm:lhe-sec} comes
from the fact that the adversary may adaptively compromise a subset
of the secret keys.
This is very similar to the issues that arise when proving 
a cryptosystem secure against ``selective opening attacks''~\cite{DNRS99,BHY09} 

\begin{proof}[Proof of \cref{thm:lhe-sec} (Security)]
We construct the following series of games and show that the difference between
the adversary's advantage from one game to the next is small.
For $i \in \{0, \dots, 4\}$, let $W_i$ the event that the adversary wins in 
Game $i$, where we define the winning condition for each game below.

\paragraph{Game 0.} Game 0 proceeds according to the location-hiding security game
where the challenger plays the role of the random oracle.
For location-hiding encryption scheme $\cal$ and adversary $\calA$ in Game 0,
by definition,
\begin{align} 
  \Pr[W_0] = \LHEncAdv[\calA, \calE]~. \label{eq:game0}
\end{align}

\paragraph{Game 1.} Game 1 proceeds in the same way as Game 0, except that for the
adversary to win in Game 1, we also require that a certain ``bad event'' does not take place.
This bad event is that many different PINs hash to the same set of HSMs.
Formally, the bad event is that there exists a set $S$ of $\ffail \cdot N = N/16$ HSMs
and a set of $\tfrac{3}{n}\cdot N$ PINs $\{\pin_1, \pin_2, \dots\}$
such that, for each $i \in [N/n]$, it holds that:
$|\hash(\salt, \pin_i) \cap S| \geq t = n/2$.
Using Definition~\ref{defn:cover}, we can simply write the probability of this bad
event as $\Cover(\frac{1}{16}, \frac{3}{n})$.

By Lemma~\ref{lemma:annoying}, we have that
\begin{align} 
\abs{\Pr[W_0] - \Pr[W_1]} \leq 2^{-N/40}~. \label{eq:game1}
\end{align} 

\paragraph{Game 2.} Game 2 proceeds in the same way as Game 1 except that
for the adversary to win, we require that
\begin{itemize}
    \item the adversary wins in Game 1, and
    \item the adversary never makes a random-oracle query for $(\pk_i)^{r_i}$
          where $i \in (i_1, \dots, i_n)$ unless the adversary
          issued a corruption query for $i$ first.
\end{itemize}

Then by Lemma~\ref{lemma:game2},
\begin{align} 
\abs{\Pr[W_1] - \Pr[W_2]} < N \cdot Q \cdot \CDHAdv[\calB_\CDH, \G]~. \label{eq:game2}
\end{align} 

\paragraph{Game 3.} Game 3 proceeds as in Game 2 except that for the
adversary to win, we require that
\begin{itemize}
    \item the adversary wins in Game 2, and
    \item the adversary makes fewer than $t=n/2$ corruption queries
        to elements in $I = (i_1, \dots, i_n)$.
\end{itemize}
Then, by Lemma~\ref{lemma:game3},
\begin{align} 
\abs{\Pr[W_2] - \Pr[W_3]} \leq \frac{3N}{n|\calP|} \label{eq:game3}
\end{align} 

\paragraph{Game 4.} In Game 4, 
we modify the behavior of the challenger. 
Rather than encrypting $\msg$ with $k$, instead the challenger samples an
additional key $k' \getsr \F$ for an authenticated encryption scheme
and uses $k'$ to encrypt $\msg$.

Then, by Lemma~\ref{lemma:game4},
\begin{align} 
\abs{\Pr[W_3] - \Pr[W_4]} = 0. \label{eq:game4}
\end{align} 

\paragraph{Final reduction.}
In Lemma~\ref{lemma:final}, we show that
\begin{align} 
\abs{\Pr[W_4]} = \AEAdv[\calB_\AEscheme, \AEscheme]~. \label{eq:final}
\end{align} 

\paragraph{Putting it together.}

We can write $\Pr[W_0]$ as
\begin{align*}
  \Pr[W_0] \leq& \abs{\Pr[W_0] - \Pr[W_1]} \\
      +&\abs{\Pr[W_1] - \Pr[W_2]} \\
      +&\abs{\Pr[W_2] - \Pr[W_3]} \\
      +&\abs{\Pr[W_3] - \Pr[W_4]} + \Pr[W_4].
\end{align*}

Then, by (\ref{eq:game0})-(\ref{eq:final}), 
we can bound $\LHEncAdv$ as follows:
\begin{align*}
  \LHEncAdv[A, \calE] \leq &2^{-N/40} \\
+&N \cdot Q \cdot \CDHAdv[\calB_\CDH, \G] \\
+&\frac{3N}{n|\calP|} \\
+&\AEAdv[\calB_\AEscheme, \AEscheme]~.
\end{align*}
\end{proof}

\begin{lemma} \label{lemma:game2}
Let $W_1$ be the event that the adversary wins in Game 1 and $W_2$
be the event that the adversary wins in Game 2.
In particular, for every adversary $\calA$ in Game 1,
we construct a CDH adversary $\calB_\CDH$ in group $\G$
with advantage $\CDHAdv[\calB_\CDH, \G]$
that runs in time linear in the runtime of $\calA$ and
makes $Q$ $\Hash'$-oracle queries
such that
\[ \abs{\Pr[W_1] - \Pr[W_2]} \leq N \cdot Q \cdot \CDHAdv[\calB_\CDH, \G]~.\]
\end{lemma}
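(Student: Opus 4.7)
Games 1 and 2 differ only by an extra winning condition in Game 2: the adversary must never query the random oracle $\Hash'$ on $(\pk_i)^{r_i}$ for any $i \in \{i_1,\dots,i_n\}$ that it has not corrupted. Let $\ms{Bad}$ denote the event that the adversary makes such a ``dangerous'' $\Hash'$-query during the game. By a straightforward identical-until-bad argument, $|\Pr[W_1] - \Pr[W_2]| \leq \Pr[\ms{Bad}]$. The task is therefore to bound $\Pr[\ms{Bad}]$ by a CDH reduction, losing a factor of $NQ$.

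The plan is to build an adversary $\calB_\CDH$ that, given a CDH instance $(g^a, g^b) \in \G^2$, runs the location-hiding security game with $\calA$ and outputs a CDH answer $g^{ab}$. First, $\calB_\CDH$ samples uniform guesses $i^* \getsr [N]$ for the index where the bad event first occurs and $q^* \getsr [Q]$ for the corresponding $\Hash'$-query index. It generates all public keys normally except it programs $\pk_{i^*} \gets g^a$, without knowing the corresponding secret. When $\calA$ submits its challenge messages, $\calB_\CDH$ runs the real encryption procedure with one twist: for the share indexed by $i^*$ (if $i^* \in \{i_1,\dots,i_n\}$), it substitutes $g^b$ for $g^{r_{i^*}}$ and sets the masking value $\kappa_{i^*}$ to a freshly sampled uniform element of $\F$ (rather than $\Hash'((g^a)^b)$, which it cannot compute). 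For all other shares, $\calB_\CDH$ knows the randomness and can encrypt honestly. Random-oracle queries are answered lazily from a table, and corruption queries $i \neq i^*$ are answered with the known secret key $\sk_i$; if $\calA$ ever corrupts $i^*$, $\calB_\CDH$ aborts. At the $q^*$-th $\Hash'$-query, $\calB_\CDH$ outputs the queried group element as its CDH answer.

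I expect the main obstacle to be arguing that the simulation is indistinguishable from Game 1 conditioned on the guesses being correct and $\ms{Bad}$ occurring at the guessed query. The subtle point is that replacing $\Hash'((g^a)^b)$ with a uniformly fresh value $\kappa_{i^*}$ is a perfect simulation \emph{unless} $\calA$ actually queries $\Hash'$ on $g^{ab}$ at some point, but this is precisely the ``bad'' query that $\calB_\CDH$ is trying to extract. A second subtlety is that $i^* \notin \{i_1,\dots,i_n\}$ is possible; in that case $\calB_\CDH$ can simply embed $g^b$ elsewhere or detect a bad query on $g^{a \cdot x}$ for known $x$, but more simply, since $\ms{Bad}$ by definition concerns some $i \in \{i_1,\dots,i_n\}$ with $i$ uncorrupted, the guess $i^*$ being correct ensures $i^* \in \{i_1,\dots,i_n\}$ and $i^*$ is never corrupted, so the simulation goes through and $\calB_\CDH$ does not abort.

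Putting it together: conditional on guessing $(i^*, q^*)$ correctly, which happens with probability at least $1/(NQ)$ and is independent of $\calA$'s view (since the guesses are made before any interaction with $\calA$), $\calB_\CDH$ perfectly simulates Game~1 and recovers $g^{ab}$ whenever $\ms{Bad}$ occurs. Hence $\CDHAdv[\calB_\CDH,\G] \geq \Pr[\ms{Bad}]/(NQ)$, which rearranges to the claimed bound $|\Pr[W_1] - \Pr[W_2]| \leq N \cdot Q \cdot \CDHAdv[\calB_\CDH,\G]$.
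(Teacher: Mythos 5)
Your proposal is correct and follows essentially the same strategy as the paper's proof: an identical-until-bad argument bounding the game gap by the probability of a dangerous $\Hash'$-query, followed by a CDH reduction that guesses the target index $i^*\in[N]$, embeds the challenge into $\pk_{i^*}$ and the corresponding ciphertext nonce, replaces the unknown hash output with a fresh random mask, aborts on corruption of $i^*$, and extracts the CDH answer from one of the $Q$ oracle queries, losing the same factor of $N\cdot Q$. The only cosmetic difference is that you guess the query index $q^*$ up front while the paper samples a random query at the end; these are equivalent.
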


\xxx{Clarify that there is a hash oracle and a corruption oracle}
\begin{proof}
Let $F$ be the event that the adversary makes a $\Hash'$-oracle 
query at the point $\pk_i^{r_i}$ \textit{before} making a corruption query
for key $i$.
Then, by the definition of Games 1 and 2, and the Difference Lemma~\cite{difference-lemma},
we have
\[ \abs{\Pr[W_1] - \Pr[W_2]} \leq \Pr[F].\]
So, to prove the lemma we need only bound $\Pr[F]$.

Given an adversary $\calA$ in Game 1, we will construct a CDH adversary
$\calB_\CDH$ such that $\Pr[F] \leq N\cdot Q\cdot  \CDHAdv[\calB_\CDH, \G]$.
We construct $\calB_\CDH$ as follows:
\begin{itemize}
  \item The algorithm $\calB_\CDH$ receives a challenge tuple 
        $(g, g^r, g^x) \in \G^3$ from the CDH challenger.
  \item The algorithm $\calB_\CDH$ plays the role of the Game-1 challenger.
    The algorithm $\calB_\CDH$ deviates from the normal behavior of the 
    Game-1 challenger in two ways:
    \begin{enumerate}
      \item The challenger chooses a random value $i^* \getsr [N]$.
            For the $i^*$th keypair, algorithm $\calB_\CDH$ sets
            $\pk_i \gets g^x$, where $g^x$ is the value from the CDH
            challenger.
            (Algorithm $\calB_\CDH$ generates all other keypairs 
            as the challenger in Game 1 does.)
      \item When constructing the final ciphertext, algorithm $\calB_\CDH$
            sets the encryption nonce in the $i^*$th ciphertext
            to be the value $g^r$, where $g^r$ is the value 
            from the CDH challenger.
            For the value $\hash'((\pk_{i^*})^r)$ needed to construct
            the ciphertext, the algorithm $\calB_\CDH$ just chooses
            a random element in $\F$.
    \end{enumerate}
  \item The algorithm $\calB_\CDH$ then proceeds in the same way as the Game 1 challenger
        with the following modification:
        \begin{itemize}
        \item If $\calA$ makes a corruption query for $i^*$, $\calB_\CDH$ outputs $\bot$.
        \end{itemize}
    \item If $\calB_\CDH$ did not output $\bot$, 
          then $\calB_\CDH$ randomly chooses one of the
          points at which $\calA$ made an $\hash'$-oracle query 
          and returns the queried point to the CDH challenger.
\end{itemize}

We now compute algorithm $\calB_\CDH$'s CDH advantage.
Whenever event $F$ occurs, the algorithm $\calA$ makes a random-oracle
query to a point $(\pk_i)^{r_i}$, for some $i \in [N]$, before issuing
a corruption query at $i$.
Notice that $\calB_\CDH$ succeeds whenever:
\begin{enumerate}
  \item event $F$ occurs, 
  \item $i=i^*$, and
  \item the algorithm $\calB_\CDH$ guesses the correct random-oracle query
        to output.
\end{enumerate}

These three events are independent.
Furthermore, their probabilities are:
\begin{enumerate}
  \item $\Pr[F]$ -- to be computed later, 
  \item $\Pr[i=i^*] = 1/N$, and
  \item $\Pr[\text{guesses correct r.o.~query}] = 1/Q$.
\end{enumerate}
Therefore, 
\begin{align*}
  \CDHAdv[\calB_\CDH, \G] \geq \Pr[F] \cdot \left(\frac{1}{N}\right) \cdot \left(\frac{1}{Q}\right),
\end{align*}
which proves the lemma.
\end{proof}

\begin{lemma} \label{lemma:game3}
Let $W_2$ be the event that the adversary wins in Game 2 and $W_3$
be the event that the adversary wins in Game 3. Then
    \[ \abs{\Pr[W_2] - \Pr[W_3]} \leq \frac{3N}{n|\calP|}~.\]
\end{lemma}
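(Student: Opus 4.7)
The plan is to bound $|\Pr[W_2] - \Pr[W_3]|$ by the probability of the event $E$ that, in the Game~2 experiment, the adversary issues at least $t = n/2$ corruption queries to indices in the set $I = (i_1, \dots, i_n)$ chosen by the challenger. By the Difference Lemma, it suffices to bound $\Pr[E]$ conditioned on the Game~2 winning constraints holding, since Game~3 differs from Game~2 only by ruling out $E$.

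The key step is to argue that, conditioned on the Game~2 constraint (no random-oracle query at any $(\pk_{i_j})^{r_j}$ is made before the corresponding corruption query), the set $S \subseteq [N]$ of indices that the adversary corrupts is statistically independent of the challenge PIN $\pi$. The adversary's view consists of the public keys (independent of $\pi$), the salt $\salt$ (sampled independently of $\pi$), the $n$ hashed-ElGamal ciphertexts $C_j = (g^{r_j}, \kappa_j \oplus k_j)$, and the authenticated-encryption ciphertext $M$. Under Game~2's condition, each $\kappa_j = \Hash'((\pk_{i_j})^{r_j})$ is a random-oracle output at a point the adversary has not queried, so in the random-oracle model $\kappa_j$ is a fresh uniform field element from the adversary's view; hence each $C_j$ is distributed as a uniform pair, independent of the shares $k_j$ and therefore of $\pi$. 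Similarly, since fewer than $t$ of the $k_j$ are revealed while Game~2's constraint holds, the reconstruction key $k$ remains uniform in $\F$ from the adversary's view, so $M$ is the output of $\AEenc$ under an independent key and leaks nothing about $\pi$. Consequently every corruption choice of the adversary is a function of randomness independent of $\pi$.

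Now fix the set $S$ of corrupted indices; we have $|S| \le \fevil \cdot N = N/16$. The event $E$ is exactly the event that $|\hash(\salt, \pi) \cap S| \ge t = n/2$, i.e.\ that $S$ $n/2$-covers the list $\hash(\salt, \pi)$ in the sense of Definition~\ref{defn:cover}. Because we are in Game~1, the bad event $\Cover(\tfrac{1}{16}, \tfrac{3}{n})$ does \emph{not} occur, so for this particular set $S$, the number of PINs $\pi' \in \calP$ with $|\hash(\salt, \pi') \cap S| \ge n/2$ is at most $\tfrac{3N}{n}$. Since $\pi$ is uniform over $\calP$ and independent of $S$, we get
\[ \Pr[E] \le \frac{3N/n}{|\calP|} = \frac{3N}{n|\calP|}, \]
which is the stated bound.

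The main obstacle will be formalizing the independence claim: adaptive random-oracle adversaries are notoriously subtle, and one has to be careful that conditioning on ``the adversary never queries $\Hash'$ at the bad points'' does not sneak in information about $\pi$. The clean way to handle this is to argue inductively over the adversary's queries, showing that as long as the Game~2 constraint is maintained, the joint distribution of the adversary's view and the PIN factors as (view) $\times$ (uniform on $\calP$); once this is in place, the counting bound from Game~1 finishes the proof.
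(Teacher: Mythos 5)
Your proposal is correct and follows essentially the same route as the paper's proof: bound the game gap by the probability that the adversary corrupts at least $t=n/2$ indices of $I$, argue that under Game~2's constraint the adversary's view (and hence its corruption set) is independent of the PIN because the $\kappa_j$ are unqueried random-oracle outputs, and then invoke the Game~1 covering condition to conclude that any fixed corrupted set of size $N/16$ covers at most $3N/n$ PINs, giving $\tfrac{3N}{n|\calP|}$. The paper handles the adaptivity subtlety you flag by modifying the game to halt as soon as the bad event occurs (so the view stays independent of the uncorrupted elements of $I$ up to that point), which is the same device your proposed induction over queries would formalize.
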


\begin{proof}
Let $F$ be the event that the adversary $\calA$ wins in Game 2
but not in Game 3.
By construction of the games, we have 
\[ \abs{\Pr[W_2] - \Pr[W_3]} \leq \Pr[F], \]
so our task is to bound $\Pr[F]$.

To analyze $\Pr[F]$, we consider a modified game between $\calA$
and its challenger in Games 2 and 3.
Here, we modify the challenger to halt the execution of $\calA$
as soon as event $F$ occurs.
This modification cannot increase $\Pr[F]$.

However, in the modified game, $\calA$'s view is 
\emph{independent of the uncorrupted elements of $I$}
until event $F$ occurs.
This is so because the only values that the adversary
sees that depend on the set $I$ are the $\kappa_j$ values.
Since these are computed as the output of $\hash'((\pk_i)^{r_j})$,
until the adversary queries the random oracle $\hash'$
at the point $(\pk_i)^{r_j}$, these $\kappa$
values are also independent of $I$.

Since we have already argued (Game 2)
that the challenger never queries the random oracle at
a point $(\pk_i)^{r_j}$ without having corrupted $i$,
the ciphertext values that the adversary sees
are independent of the uncorrupted elements of $I$.

Therefore, the answers to the corruption queries give the 
adversary no information on the uncorrupted elements of $I$.
Then $\Pr[F]$ is just the probability that the adversary makes
$N/16$ \emph{non-adaptive} corruption queries and is able to 
cause event $F$ to occur.

By the winning condition of Game 1, for any set of $N/16$ corrupted
HSMs $S$,
there are at most $3N/n$ PINs $\pin$ such that
$\abs{\hash(\salt, \pin) \cap S} > n/2$.
Therefore, the probability that $F$ occurs is
$\Pr[F] \leq (3N/n)/\abs{\calP} = 3N/(n \abs{\calP})$.
\end{proof}

\begin{lemma} \label{lemma:game4}
Let $W_3$ be the event that the adversary wins in Game 3 and $W_4$
be event that the adversary wins in Game 4. Then
\[ \abs{\Pr[W_3] - \Pr[W_4]} = 0~.\]
\end{lemma}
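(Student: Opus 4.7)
The plan is to show that in Game 3, the transport key $k$ is information-theoretically independent of the adversary's view, which lets us replace $k$ with a fresh independent key $k'$ in the authenticated-encryption step without changing any probability.

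First I would identify exactly what in the adversary's view depends on $k$. The transcript the adversary sees consists of: the public keys $(\pk_1,\dots,\pk_N)$, the salt $\salt$, the hashed-ElGamal ciphertexts $C_1,\dots,C_n$ (each of the form $(g^{r_j},\kappa_j\oplus k_j)$), the AE ciphertext $M=\AEenc(k,\msg_\beta)$, all random-oracle answers, and the secret keys returned by the at most $\fevil\cdot N$ corruption queries. The public keys, salt, $r_j$ values, and corrupted secret keys are all sampled independently of $k$, so the only parts of the transcript that could depend on $k$ are the $C_j$'s and $M$.

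Next I would argue that the $C_j$'s for uncorrupted indices $i_j\in I$ carry no information about their underlying shares $k_j$. In Game 2, the winning condition already excludes the event that $\calA$ queries $\hash'$ at any point $(\pk_{i_j})^{r_j}$ for an uncorrupted $i_j$. Conditioned on this, each such mask $\kappa_j=\hash'((\pk_{i_j})^{r_j})$ is a uniformly random field element that appears nowhere else in the adversary's view, so $\kappa_j\oplus k_j$ is uniformly random and independent of $k_j$. Hence the adversary effectively sees the shares $k_j$ only for indices $i_j$ that it has corrupted. By Game 3's winning condition, this is at most $t-1=n/2-1$ shares. By the perfect privacy of $t$-out-of-$n$ Shamir secret sharing, any strictly-fewer-than-$t$ subset of shares is distributed uniformly and independently of the secret $k$.

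Combining these observations, conditioned on the Game 3 winning event, the joint distribution of $(\text{view},k)$ is identical to that of $(\text{view},k')$ where $k'$ is an independent uniform element of $\F$. Replacing $\AEenc(k,\msg_\beta)$ by $\AEenc(k',\msg_\beta)$ therefore produces an identically distributed transcript, and so $\Pr[W_3]=\Pr[W_4]$, giving $\abs{\Pr[W_3]-\Pr[W_4]}=0$. The main (minor) subtlety will be to be careful that the Game 2 condition ``no $\hash'$-query at $(\pk_{i_j})^{r_j}$ without corrupting $i_j$'' really does let us treat $\kappa_j$ as a fresh uniform value in the adversary's view; this is a standard random-oracle argument but should be written out so the independence claim for the shares is rigorous.
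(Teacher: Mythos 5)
Your proposal is correct and follows essentially the same route as the paper's proof: use the Game 2 condition to argue the random-oracle masks $\kappa_j$ for uncorrupted indices render those ciphertext components independent of the shares, invoke the Game 3 bound of fewer than $t=n/2$ corrupted indices in $I$, and conclude by the perfect privacy of $t$-out-of-$n$ Shamir secret sharing that $k$ may be swapped for an independent $k'$ without changing the distribution. Your write-up is somewhat more explicit than the paper's (which states the same steps tersely), but there is no substantive difference in approach.
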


\begin{proof}
We use the security of Shamir secret sharing to prove this lemma.
For each of the indexes that the adversary does not corrupt, we
can replace $\kappa_i \oplus k_i$ with a random element in $\F$ in both
games.
We know that in both games, the adversary corrupts fewer than
$n/2$ of the keys in $I$, and so the adversary learns fewer
than $n/2$ of the shares $k_1, \dots, k_n$ of the transport key.
Therefore, we can replace the transport key $k$ with $k' \getsr \F$,
completing the proof.
\end{proof}

\begin{lemma} \label{lemma:final}
Let $W_4$ be event that the adversary wins in Game 4. Then
given an adversary $\calA$ in Game 4, we construct an AE adversary
$\calB_\AEscheme$ for AE scheme $\AEscheme$ with advantage
$\AEAdv[\calB_\AEscheme, \AEscheme]$ that runs in time linear
in the runtime of $\calA$ such that
\[ \Pr[W_4]  = \AEAdv[\calB_\AEscheme, \AEscheme]~.\]
\end{lemma}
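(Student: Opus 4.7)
The plan is to give a straight reduction from the AE security game to Game~4. The crucial observation is that, after all the modifications made through Games~1--4, the transport key $k$ that is encrypted inside the public-key ciphertexts $C_1, \dots, C_n$ has been decoupled from the key $k'$ used to encrypt the challenge message. Concretely, in Game~4 the challenger samples $k \getsr \F$ and $k' \getsr \F$ independently, uses $k$ only in constructing the key-share ciphertexts, and uses $k'$ only to compute $M \gets \AEenc(k', \msg_\beta)$. Hence the only component of the adversary's view that depends on the hidden bit $\beta$ is the single AE ciphertext $M$.

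I would build the AE adversary $\calB_\AEscheme$ as follows. On input the AE security parameter, $\calB_\AEscheme$ internally simulates the Game~4 challenger for $\calA$: it samples all $N$ hashed-ElGamal keypairs $(\pk_i, \sk_i)$ itself, lazily programs the random oracles $\Hash$ and $\Hash'$, and so can faithfully answer every corruption and random-oracle query that $\calA$ issues. When $\calA$ outputs its challenge pair $(\msg_0, \msg_1)$, $\calB_\AEscheme$ forwards exactly the same pair to its own AE challenger and receives back a target ciphertext $M^\star = \AEenc(k', \msg_\beta)$ where $k' \getsr \F$ is hidden. $\calB_\AEscheme$ then samples $\salt, \pin, k$ and the nonces $r_j$ on its own and assembles the key-share ciphertexts $C_1, \dots, C_n$ exactly as the Game~4 challenger would, finally outputting the location-hiding ciphertext $(M^\star, C_1, \dots, C_n)$ to $\calA$. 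When $\calA$ eventually produces a guess $\beta'$, $\calB_\AEscheme$ outputs the same bit.

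The core step is to argue that this simulation is perfect, i.e.\ that the distribution of $\calA$'s view in the reduction matches Game~4 exactly. This follows because in Game~4 the two keys $k$ and $k'$ are independent and $k'$ appears only inside $\AEenc(k', \msg_\beta)$: replacing that single AE call with the AE challenger's oracle changes nothing in the joint distribution of $(M, C_1, \dots, C_n, \text{oracle answers}, \text{corruption answers})$. The only mildly subtle point is that $\calB_\AEscheme$ must still be able to answer every $\Hash'$-query and corruption query of $\calA$ consistently on its own; this is immediate because it knows every $\sk_i$ and controls the random oracles itself. I do not foresee any real obstacle here---the heavy lifting was done in Games~1--3 to make $\beta$ reachable only through the AE ciphertext.

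Putting it together, $\Pr[\calB_\AEscheme \to 1 \mid \beta = b]$ equals $\Pr[\calA \to 1 \mid \text{Game 4 with bit } b]$ for each $b \in \{0,1\}$, so $|\Pr[\calB_\AEscheme \to 1 \mid \beta=0] - \Pr[\calB_\AEscheme \to 1 \mid \beta=1]| = \Pr[W_4]$. By the definition of $\AEAdv$, this gives $\Pr[W_4] = \AEAdv[\calB_\AEscheme, \AEscheme]$, as claimed. The running time of $\calB_\AEscheme$ is dominated by running $\calA$ once plus $O(N)$ key generations and $O(Q)$ lazy random-oracle bookkeeping, hence linear in the runtime of $\calA$.
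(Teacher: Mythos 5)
Your proposal is correct and takes essentially the same route as the paper: both construct $\calB_\AEscheme$ by simulating the Game~4 challenger internally (generating all keypairs and answering corruption and oracle queries itself), forwarding $(\msg_0,\msg_1)$ to the AE challenger, splicing the returned challenge ciphertext in place of $\AEenc(k',\msg_\beta)$, and relaying $\calA$'s guess, with the independence of $k$ and $k'$ in Game~4 making the simulation perfect. Your write-up is somewhat more explicit about the lazy random-oracle bookkeeping and the per-bit probability accounting, but the reduction is the same.
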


\begin{proof}
Given an adversary $\calA$ in Game 4, we will construct an adversary $\calB_\AEscheme$
attacking the authenticated encryption scheme $\AEscheme$ such that the advantage
of $\calB_\AEscheme$ is identical to that of $\calA$. We construct $\calB_\AEscheme$
as follows:
\begin{itemize}
    \item The algorithm $\calB_\AEscheme$ computes $(\pk_1, \dots, \pk_N)$
        in the same way as the Game 4 challenger and sends them to $\calA$.
    \item When $\calA$ returns the messages $\msg_0, \msg_1 \in \calM$,
        $\calB_\AEscheme$ forwards the messages to the AE challenger
        and receives the ciphertext $c^*$ from the AE challenger.
    \item The algorithm $\calB_\AEscheme$ then computes the ciphertext in the same way
        as the Game 4 challenger except that instead of computing the ciphertext
        as $(\AEenc(k,\msg),C_1, \dots, C_n)$, $\calB_\AEscheme$ sends $\calA$
        $(c^*, C_1, \dots, C_n)$.
    \item The algorithm $\calB_\AEscheme$ responds to queries in the same way as the
        Game 4 challenger.
    \item When $\calA$ outputs bit $\beta' \in \zo$, $\calB_\AEscheme$ forwards the
        response to the AE challenger.
\end{itemize}
Because in Game 4, we sample $k'$ independently from the rest of the messages, $\calA$
cannot distinguish between interaction with $\calB_\AEscheme$ and the Game 4 challenger.
The advantage of $\calA$ is exactly $\AEAdv[\calB_\AEscheme, \AEscheme]$, completing
the proof.
\end{proof}
 \section{Implementing the log data structure}
\label{app:logx}

\subsection{Security properties}
\label{app:logx:sec}

We need the following properties to hold:

\paragraph{Inclusion completeness.}
Informally, $\VerifyIncludes$ accepts valid log-inclusion
proofs produced by $\ProveIncludes$.
That is, for all logs $L$, 
all identifier-value pairs $(\id, \val) \in L$,
if $d \gets \Digest(L)$ and 
$\pfinc \gets \ProveIncludes(L, \id, \val)$, then
$\VerifyIncludes(d, \id, \val, \pfinc) = 1$.

\paragraph{Inclusion soundness.}
Informally, for all efficient adversaries that output
a log $L$, an identifier-value pair $(\id, \val) \not \in L$, and
a false inclusion proof $\pfinc^*$,
it holds that, for digest $d \gets \Digest(L)$, the probability
$\Pr[\VerifyIncludes(d, \id, \val, \pfinc^*) = 1]$
is negligible. 

\paragraph{Extension completeness.}
Informally, $\VerifyExtends$ accepts valid log-extension proofs
produced by $\ProveExtends$.
That is, for all logs $L$ and $L'$, where $L'$ extends $L$,
if $d \gets \Digest(L)$, $d' \gets \Digest(L')$, and
$\pfext \gets \ProveExtends(L, L')$, then 
$\VerifyExtends(d, d', \pfext) = 1$.

\paragraph{Extension soundness.}
Informally, for all efficient adversaries that output
a pair of logs $L$ and $L'$
(such that $L'$ does \emph{not} extend $L$
and $L$ does not contain duplicate identifiers),
and a false proof $\pfext^*$, if we compute
$d \gets \Digest(L)$ and $d' \gets \Digest(L')$, the probability
$\Pr[\VerifyExtends(d, d', \pfext) = 1]$ is negligible.

\subsection{Implementation using Merkle trees}
\label{app:logx:impl}

We now describe how to implement the 
routines defined in \cref{sec:log:syntax}, 
using a construction as in 
Nissim and Naor~\cite{NN98}. 
In the following discussion, we define the
``log tree'' for a log $L$ to be a binary 
search tree, ordered by identifiers $\id$.
Each internal node in the tree---in addition to containing
a value---contains the cryptographic hash of its two child 
nodes, as in a Merkle tree.

\smallskip
\begin{hangparas}{\defhangindent}{1}

$\Digest(L) \to d$.
Construct the  log tree for $L$ by inserting the
elements of $L$ into a binary-search tree 
one at a time.
(We can use any type of self-balancing binary-search tree here.)
As the digest $d$, output the hash of the root of the log tree. 

$\ProveIncludes(L, \id, \val) \to \{\pfinc, \bot\}$.
If $(\id, \val) \not \in L$, output $\bot$.
Otherwise, output the Merkle 
inclusion proof that proves that $(\id, \val)$
is in the log tree rooted at $d = \Digest(L)$.

$\VerifyIncludes(d, \id, \val, \pfinc) \to \zo$. 
Treating the digest $d$ as a log-tree root and $\pfinc$
as a Merkle proof of inclusion, verify that $(\id, \val)$
is included in the log tree rooted at~$d$.

$\ProveExtends(L, L') \to \{\pfext, \bot\}$.
We show how the routine works in the special case
in which $L'$ contains exactly one entry $(\id, \val)$ 
that does not appear in $L$.
To generalize to the case in which there are many 
new entries in $L'$, we run this routine once
for each new entry and output the concatenation of all
of the resulting proofs. \\[2pt]
If $L'$ does not extend $L$, output $\bot$.
Otherwise, find the identifiers $\id_\mathsf{left}$ and $\id_\mathsf{right}$
that appear just before and after $\id$ in the lexicographical
ordering of identifiers in the old log $L$.
Let their corresponding values be $\val_\mathsf{left}$
and $\val_\mathsf{right}$.\\[2pt]
The first portion of the proof
$\pfext$ is a Merkle proof of inclusion 
of $(\id_\mathsf{left}, \val_\mathsf{left})$
and $(\id_\mathsf{right}, \val_\mathsf{right})$
in the old log tree rooted at digest $d = \Digest(L)$.
These proofs prove that the new identifier $\id$ is not in the log 
represented by the old digest $d$.\\[2pt]
Next, insert $(\id, \val)$ into the log tree for $L$ to get a
log tree for $L'$ and its corresponding digest.
The second portion of the proof $\pfext$ is a Merkle proof of 
inclusion of every node in the log tree for $L'$ that
does not appear in the log tree for $L$.
These proofs prove that the new digest $d'$ represents
the root of $L$'s log tree, with the new pair $(\id, \val)$
inserted.

$\VerifyExtends(d, d', \pfext) \to \zo$. 
Parse $\pfext$ as a series of Merkle inclusion proofs
over log trees. 
Check that $\id$ lies between $\id_\mathsf{left}$
and $\id_\mathsf{right}$ in lexicographic order.
Verify each of the Merkle proofs generated
in $\ProveExtends$.
Checking the Merkle proofs in this case also requires
verifying that the nodes in log trees satisfy the
proper ordering constraints of a binary search tree: 
the left child's value
is less than the parent's value and the right child's
value is greater than the parent's value.
Accept if all of these proofs accept.
\end{hangparas}

\paragraph{Security properties.}
The completeness properties are immediate.
Inclusion soundness follows directly from the analysis
of Merkle proofs, which in turn rely on the collision-resistance
of the underlying hash function.
To sketch the argument for extension soundness:
The first part of the proof $\pfext$ convinces
the verifier that the new identifier $\id$ does not 
appear in the log $L$ represented by the old digest $d$.
Therefore, the log $L'$ that digest $d'$ represents must not 
contain any duplicate identifiers, since $L$ contains no 
duplicate identifiers.
The second part of the proof $\pfext$ convinces
the verifier that log tree rooted at $d'$ is
identical to the log tree rooted at $d$, except
with the pair $(\id, \val)$ added.

\subsection{Making progress in spite of failures}
\label{sec:logx:fail}

If frequent node failures prevent the log from making progress, the
HSMs can run the following more complicated log-update protocol.
In this variant, each HSM chooses which log chunks to audit 
as a deterministic function of the Merkle root $R$
and the HSM's own node ID.

Provided that we choose the constant $C$ large enough,
for \emph{every} choice of the Merkle root, 
$R$ at least one honest HSM will audit each log chunk.
In particular, by taking $C \geq 384$ the probability that no
honest HSM audits a particular chunk is less than $2^{-384}$.
The probability that there exists a 256-bit Merkle root $R$ that causes
no honest HSM to audit a particular chunk is then at most 
$2^{256} \cdot 2^{-384} = 2^{-128}$.
So, no matter how the service provider influences the Merkle root $R$,
at least one honest HSM will attempt to audit each chunk.

Using this method, given the Merkle root $R$,
every HSM can deterministically compute
the set of log chunks that every other HSM will audit.
Then, if any HSM fails during the audit process, 
the remaining non-failed HSMs can recursively 
run our randomized-checking protocol to check the log 
chunks that the failed HSMs would have checked.

In this way, the protocol can make progress even if 
HSMs fail during the log-update process.

 \section{Details on outsourced storage\\ with secure deletion}
\label{app:puncx}

We model the external service provider as an oracle $\calS$ to 
which the HSM has access. To read and write blocks from external
storage at address $\addr \in \N$ 
the HSM can call $\calS.\mathsf{Get}(\addr) \to \block$
or $\calS.\mathsf{Put}(\addr, \block)$.

The HSM needs to implement the following functions:

\medskip 

\begin{hangparas}{1em}{1}
$\setup^\calS(\data_1, \dots, \data_D) \to \sk$. 
  Store the given data blocks in the external
  storage system $\calS$ encrypted under a secret key $\sk$.
  The HSM stores only $\sk$ in its internal memory.

$\Read^\calS(\sk, i) \to \data_i$ or $\bot$.
  Use key $\sk$ to read the $i$th logical
  data block from the storage system $\calS$.
  Return $\bot$ on failure.

$\Delete^\calS(\sk, i) \to \sk'$ or $\bot$.
  Use key $\sk$ to delete the $i$th logical
  data block from the external storage system $\calS$.
  Return a new key $\sk'$ to store in 
  internal memory.
  Return $\bot$ on failure.

\end{hangparas}

\medskip

Let $(\allowbreak \AEenc,\allowbreak\AEdec)$ be a symmetric-key
authenticated encryption scheme such as AES-GCM, with key space $\calK$\@.  We store data
blocks encrypted as leaves of a binary tree of height $h =
1+\lceil\log_2(D)\rceil$.  We store leaf $i$ at address $2^h+i$, block
$a$'s parent at address $\lfloor a/2\rfloor$, and $a$'s left and right
children at $2a$ and $2a+1$, respectively.  

For convenience, we have
optional parameters that take a non-default value only on recursive
calls.  We bundle outsourced blocks into a 0-indexed vector that can
be sliced.  We assume slicing beyond the end of an array yields a
zero-length slice, and assigning $\sk_\ell\|\sk_r\gets\msg$, if $|\msg|$
is the length of only one key, sets $\sk_\ell\gets\msg$ and
$\sk_r\gets\mathsf{empty}$.  We assume if $\AEdec$ fails, it raises an
exception caught outside of these routines.

\renewcommand{\labelitemi}{$-$}
\renewcommand{\labelitemii}{$\star$}
\def\addr{\mathsf{addr}}
\def\level{\mathsf{level}}
\medskip
\medskip
\noindent
$\setup^\calS(\data, \addr=1) \to \sk$.
\begin{itemize}
  \item Let $D\gets\mathrm{len}(\data)$
  \item If $D = 0$, set $\msg \gets 00\cdots0$
  \item Else if $D = 1$, set $\msg \gets \data[0]$
  \item Else ($D > 1$)
  \begin{itemize}
  \item Run $\sk_\ell \gets \setup^\calS(\data[0:\lfloor D/2\rfloor],
    2\cdot\addr)$
  \item Run $\sk_r \gets \setup^\calS(\data[\lfloor D/2\rfloor:n],
    2\cdot\addr+1)$
  \item Let $\msg \gets \sk_\ell \| \sk_r$
  \end{itemize}
  \item Let $\sk\getsr \calK$
  \item $\calS.\mathsf{Put}(\addr, \AEenc(\sk, \msg))$
  \item Return $\sk$
\end{itemize}

\medskip
\noindent
$\Read^\calS_h(\sk, i, \level=h-1) \to \msg$.
\begin{itemize}
\item Let $\addr\gets \lfloor (2^h+i)/2^\level\rfloor$
\item Let $c\gets \calS.\mathsf{Get}(\addr)$
\item If $\level=0$ then return $\AEdec(\sk, c)$
\item If $i\&2^{\level-1} = 0$ (i.e., $i$ in left child),\\
  let $\sk\|\_\gets \AEdec(\sk,c)$
\item Otherwise (right child), let $\_\|\sk\gets \AEdec(\sk,c)$
\item Return $\Read^\calS(\sk, i, \level-1)$
\end{itemize}

\medskip
\noindent
$\Delete^\calS_h(\sk, i, \level=h-1) \to \sk$.
\begin{itemize}
\item If $\level=0$ return $00\cdots0$ (useless encryption key)
\item Let $\addr\gets \lfloor (2^h+i)/2^\level\rfloor$
\item Let $\sk_\ell\|\sk_r\gets \AEdec(\sk, \calS.\mathsf{Get}(\addr))$
\item If $i\&2^{\level-1} = 0$ (i.e., $i$ in left child)\\
  let $\sk_\ell \gets\Delete^\calS_h(\sk, i, \level-1)$
\item Else (right child) let $\sk_r \gets\Delete^\calS_h(\sk, i, \level-1)$
  \item Let $\sk\getsr \calK$
  \item $\calS.\mathsf{Put}(\addr, \AEenc(\sk, \sk_\ell\|\sk_r))$
  \item Return $\sk$
\end{itemize}
 \fi

\section{Artifact Appendix}

\subsection{Abstract}

The SafetyPin implementation is split into two components:
\begin{itemize}
    \item \textbf{HSM:} The HSMs (hardware security modules) are used to recover user secrets. Our implementation
        uses SoloKeys\com{~\cite{solokeys}}, which are low-cost HSMs. We add roughly 2,500
        lines of C code to the open-source SoloKey firmware.
    \item \textbf{Host:} The host implements functionality for the user and data center,
       including saving secrets, maintaining the log, and coordinating HSMs. Our
        implementation is roughly 3,800 lines of C/C++ code.
\end{itemize}
We implement the protocol described in the paper above.
To improve performance, we rewrote parts of the SoloKey firmware to use USB CDC,
a high-throughput USB class commonly used for networking devices.
This results in roughly a $32\times$ increase in I/O throughput.
Our artifact is available at:
\begin{center}
\url{https://github.com/edauterman/SafetyPin}
\end{center}

\subsection{Artifact check-list}

\begin{itemize}
  \item {\bf Hardware: }
      \begin{itemize}
          \item 100 SoloKeys\com{~\cite{solokeys}}
          \item 10 Anker SuperSpeed USB 3.0 hubs\com{~\cite{usbhub}}
          \item 2 4-port USB PCIe controller cards\com{~\cite{pcie-cards}}
        \item Linux machine with Intel Xeon E5-260 CPU clocked at 2.60GHz
      \end{itemize}
  \item {\bf Compilation: } The ARM compiler for the SoloKeys, and \texttt{gcc} for the host.
  \item {\bf Metrics: } Latency
  \item {\bf Experiments: } Log-audit time, puncturable encryption overhead, breakdown of recovery time, cluster size vs. recovery time
  \item {\bf Required disk space: } 14MB
  \item {\bf Expected experiment run time: } 50 minutes
  \item {\bf Public link: } \url{https://github.com/edauterman/SafetyPin} 
  \item {\bf Code licenses: } Apache v2
\end{itemize}

\subsection{Description}

\subsubsection{How to access}

We provide reviewers with credentials to remotely access our system. 
Instructions for assembling a similar system are available here:
\begin{center}
\url{https://github.com/edauterman/SafetyPin/blob/master/SETUP.md}.
\end{center}

\subsubsection{Hardware dependencies}

Our artifact uses SoloKeys\com{~\cite{solokeys}} as low-cost HSMs. We use 100
SoloKeys for our experiments, although other deployments could use a different
number of HSMs. 
Because of limitations in the Intel XCHI controller, which supports a maximum of
96 endpoints (each USB 3.0 device has 3 endpoints), we installed
PCIe cards\com{~\cite{pcie-cards}} to support additional endpoints.
This is not necessary for smaller-scale deployments, but larger deployments
should choose a host that supports installing such PCIe cards or find another
solution.
We also recommend USB hubs with an external power source such as the Anker hubs\com{~\cite{usbhub}}.

\subsubsection{Software dependencies}

The firmware for the SoloKeys builds on the original SoloKey firmware, which
already includes several libraries for cryptographic primitives on embedded devices:
\begin{center}
\url{https://github.com/solokeys/solo}.
\end{center}
To support pairings for aggregate signatures, we use the jedi-pairing library
for embedded devices:
\begin{center}
\url{https://github.com/ucbrise/jedi-pairing/}.
\end{center}
For USB HID support, we use the Signal11 library:
\begin{center}
\url{https://github.com/signal11/hidapi}.
\end{center}
We implement our cryptographic primitives that do not require pairings at the host using OpenSSL.

\subsection{Installation}

Instructions for building the host are available under \texttt{host/}.
Instructions for building the firmware and flashing the SoloKeys are available under
\texttt{hsm/}.
The SoloKey documentation
provides additional details and troubleshooting for building and flashing the SoloKeys:
\begin{center}
\url{https://docs.solokeys.io/}.
\end{center}
When experimenting with SafetyPin, you should not boot SoloKeys in DFU mode, as this
locks the firmware and will prevent you from modifying the firmware later (e.g. to
load an updated version of the SafetyPin source).

\subsection{Experiment workflow}

Reviewers can remotely access our machine and run all experiments by executing
\texttt{./runAll.sh} in \texttt{bench/}.
More detailed instructions for running individual experiments are available
here:
\begin{center}
\url{https://github.com/edauterman/SafetyPin#instructions-for-artifact-evaluation}.
\end{center}

\subsection{Evaluation and expected result}

Run all experiments by executing \texttt{./runAll.sh} in
\texttt{bench/}. This will produce figures in \texttt{bench/out} that
match \cref{fig:logeval}, \cref{fig:scale-punc}, \cref{figs:breakdown},
and \cref{fig:scale-breakdown}.
Note that we only reproduce the recovery time breakdown in \cref{figs:breakdown}.
Additionally, the configuration we set up for the reviewers only uses
90 HSMs for \cref{fig:logeval} and \cref{fig:scale-breakdown}. We do this to
keep different firmware on the remaining 10 HSMs to measure the breakdown
in puncturable encryption time as the secret key size increases (\cref{fig:scale-punc}).
For the experiments we show in the body of the paper, we re-flashed HSMs
between experiments so that we could use all 100 HSMs to generate \cref{fig:logeval}
and \cref{fig:scale-breakdown}.

\subsection{Experiment customization}

The experiment for \cref{fig:logeval} can be modified to measure different
data center sizes without changing the firmware on the HSMs.
The experiment for \cref{fig:scale-punc} can likewise be modified to measure
different secret key sizes, although this requires changing HSM firmware.
If we had more HSMs, we could easily expand \cref{fig:scale-breakdown} to show
the effect of larger cluster sizes. We do not measure cluster sizes less than
40 because our analysis shows that our security guarantees begin to break down
below this point.

\subsection{Notes}

To switch between USB CDC and USB HID, change the HID flag on both the host
and the HSMs (this requires loading new firmware on the HSMs).
More detailed instructions are available here:
\begin{center}
\url{https://github.com/edauterman/SafetyPin/blob/master/SETUP.md}.
\end{center}

Note that rather than generating puncturable encryption
secret keys on the HSM (a process we estimate would take roughly 75 hours), to run our 
experiments efficiently, we generate the secret key on the host (for security, a real-world
deployment would need to generate this secret key on the HSM). 

\subsection{AE Methodology}

Submission, reviewing and badging methodology:

\begin{center}
\url{https://www.usenix.org/conference/osdi20/call-for-artifacts}
\end{center}
 
\end{document}